\def\im{\operatorname{Im}}
\newcommand\Der{{\rm Der}}
\newcommand\Ord{{\rm Ord}}
\newcommand\cS{{\mathcal S}}
\newcommand\cE{{\mathcal E}}
\newcommand\la{{\lambda}}
\renewcommand\d{{\rm d}}
\newcommand\cM{{\mathcal M}}
\newcommand\cK{{\mathrm k}}
\newcommand\cF{{\mathrm F}}
\newcommand\cG{{\mathcal G}}
\newcommand\cC{{\mathcal C}}
\newcommand\cR{{\mathcal R}}
\newcommand\cQ{{\mathcal Q}}
\newcommand\cA{{\mathcal A}}
\newcommand\cB{{\mathcal B}}
\newcommand\cI{{\mathcal I}}
\newcommand\cD{{\mathcal D}}
\newcommand\cL{{\mathcal L}}
\newcommand\cT{{\mathcal T}}
\def\bbbc{{\mathbb C}}
\def\bbbz{{\mathbb Z}}
\def\bbbr{{\mathbb R}}
\def\bbbk{{\mathbb K}}
\def\bbbd{{\mathbb D}}
\def\f{{\bf f}}
\def\bu{{\bf u}}
\def\bU{{\bf U}}
\def\bV{{\bf V}}
\def\bW{{\bf W}}
\def\g{{\bf g}}
\def\h{{\bf h}}
\def\a{{\bf a}}
\def\Id{\mbox{Id}}
\newtheorem{Rem}{Remark}
\newtheorem{Def}{Definition}
\newtheorem{The}{Theorem}
\newtheorem{Pro}{Proposition}
\newtheorem{Lem}{Lemma}
\newtheorem{Ex}{Example}
\newtheorem{Cor}{Corollary}
\begin{document}

\title{Rational recursion operators for integrable differential-difference 
equations}
\author{Sylvain Carpentier $^\ddagger$, Alexander V. Mikhailov$^{\star}$ and 
Jing Ping 
Wang$ ^\dagger $
\\
$\ddagger$ Mathematics Department, Columbia University, USA
\\
$\dagger$ School of Mathematics, Statistics \& Actuarial Science, University of 
Kent, UK \\
$\star$ Applied Mathematics Department, University of Leeds, UK
}
\date{}  

\maketitle
\begin{abstract}
In this paper we introduce preHamiltonian pairs of 
difference operators  and study  their connections with Nijenhuis 
operators and the existence of weakly non-local inverse recursion operators for 
differential--difference equations. We begin with a rigorous setup of the 
problem in terms of the skew field $\cQ$ of {\em rational} (pseudo--difference)
operators over a difference field $\cF$ with a zero characteristic subfield of 
constants $\cK\subset\cF$ and the principal ideal ring $\cM_n(\cQ)$ of matrix 
rational (pseudo--difference) operators. In particular, we give a criteria for 
a 
rational operator to be weakly non--local. A difference operator $H$ is called 
preHamiltonian, if its image is a Lie $\cK$--subalgebra with respect the the 
Lie bracket on $\cF$. Two preHamiltonian operators form a preHamiltonian pair 
if 
any $\cK$--linear combination of them is preHamiltonian. Then we show that a preHamiltonian pair naturally leads to a
Nijenhuis operator, and a Nijenhuis operator can be represented in terms of a preHamiltonian pair. 
This provides a systematical method to check  whether a rational operator is Nijenhuis. As an application, 
we construct a preHamiltonian pair and thus a Nijenhuis recursion operator for 
the  differential-difference
equation recently discovered by Adler \& Postnikov. The Nijenhuis 
operator  obtained is not weakly non-local. We prove that it  generates 
an 
infinite hierarchy of local commuting symmetries.  We also illustrate our 
theory on the well known examples including the Toda, the Ablowitz-Ladik and 
the Kaup-Newell differential-difference equations.
\end{abstract}

\section{Introduction}

The existence of an 
infinite hierarchy of commuting symmetries is one of a characteristic property 
of integrable  systems. Symmetries can be 
generated
by recursion operators \cite{AKNS74,mr58:25341}, which  are often 
pseudo--differential and map a symmetry to a new
symmetry. An important property of recursion operators, called the Nijenhuis property, is to generate an 
abelian Lie algebra of symmetries. Such property has been independently studied by Fuchssteiner 
\cite{Fuc79} and Magri \cite{Mag80}. 
To prove that  a pseudo--differential operator is a Nijenhuis operator and it generates an 
infinite hierarchy of
local symmetries is a challenging problem. In the most common  case of weakly 
non-local Nijenhuis operators this problem has been addressed in 
\cite{mr1974732, serg5, wang09}. The relations between  bi-Hamiltonian structures and 
Nijenhuis operators have been studied in papers of 
Gel'fand \& Dorfman \cite{GD79, mr94j:58081} and Fuchssteiner \& Fokas
\cite{mr82g:58039,mr84j:58046}. Recently a rigorous approach to  
pseudo--differential Hamiltonian 
operators have been developed in 
the series of papers by Kac and his co-authors \cite{BDSK09, DSK13, DSKV16}.

The theory of integrable differential-difference equations is much less 
developed. The basic concepts such as symmetries, 
conservation laws and  Hamiltonian operators were formulated in the frame  of a
variational complex in \cite{kp85}. 
The aim of this paper is to build up a rigorous setting for rational matrix 
(pseudo--difference) operators suitable for the study of integrable 
differential-difference systems. We introduce and study preHamiltonian pairs of 
difference operators, their connections with  Nijenhuis 
operators and the existence of weakly non-local inverse recursion operators for 
differential--difference equations.

Let us consider the well-known Volterra chain
\begin{eqnarray}\label{vol}
u_t=u (u_1-u_{-1}) ,
\end{eqnarray}
where $u$ is a function of a lattice variable $n\in \bbbz$ and time $t$. Here
we use the notations
\begin{eqnarray*}
 u_t=\partial_t(u), \quad u_j=\cS^j u(n,t)=u(n+j,t)
\end{eqnarray*}
and $\cS$ is the shift operator. It possesses a recursion operator
$$R=u \cS +u +u_1+u \cS^{-1}+u(u_1-u_{-1}) (\cS-1)^{-1}\frac{1}{u} ,
$$
where \((\cS-1)^{-1} \) stands for the  inverse of \(\cS-1\). Thus this 
recursion
operator is only defined on \(u\im (\cS-1)\). It is a Nijenhuis operator and 
generates a commutative hierarchy of
symmetries:
\[
u_{t_j}=R^j (u_t)=R^j\left(u(u_1-u_{-1})\right),\qquad j=0, 1, 2,\cdots .
\]
The concept of Hamiltonian pairs was introduced by Magri \cite{Mag78}. He found 
that
some systems admitted two distinct but compatible Hamiltonian structures 
(a Hamiltonian
pair) and named them twofold Hamiltonian system, nowadays known as 
bi-Hamiltonian
systems. The Volterra chain is a bi-Hamiltonian system and it can be written
\[
u_t=H_1\ \delta_u u=H_2\ \delta_u \frac{\ln u}{2},
\]
where $\delta_u$ is variational derivative with respect to the dependent 
variable $u$
and two difference operators
\begin{eqnarray*}
&&H_1=u(\cS-\cS^{-1})u;\\
&&H_2=u(\cS u\cS+u \cS+\cS u-u\cS^{-1}-\cS^{-1} u-\cS^{-1} u\cS^{-1})u
\end{eqnarray*}
form a Hamiltonian pair.
The Nijenhuis recursion operator of 
the
Volterra chain can be obtained via the Hamiltonian pair, that is, $
R=H_2 H_1^{-1}$. This decomposition is known as the Lenard scheme used to
construct the hierarchies of infinitely many symmetries and cosymmetries.

Notice that the above difference operators have a right common factor:
\begin{eqnarray*}
&&H_1=u (\cS-1)(1+\cS^{-1})u;\qquad
H_2
=u(\cS u\cS+u \cS- u-\cS^{-1} u) (1+\cS^{-1}) u .
\end{eqnarray*}
This implies that 
\begin{equation}\label{volab}
R=A B^{-1}, \  \mbox{where $A=u(\cS +1)(u \cS-\cS^{-1} u)$ and $B=u 
(\cS-1)$}. 
\end{equation}
Here operators $A$ and $B$ are not anti-symmetric, and thus not Hamiltonian. 
However, like in the case of  
Hamiltonian pairs, the image of $A$ and $B$, as well as the image of 
linear combinations of these two operators,
form a Lie subalgebra. Such operators we call preHamiltonian. In this 
paper, we will explore properties of preHamiltonian operators and their
relations with Nijenhuis operators. For the differential case 
some of these results have been obtained in
\cite{Carp2017}.
The main difference between differential operators and difference operators lies 
in that the total derivative is a 
derivation and the shift operator $\cS$ is an automorphism. The set of 
invertible difference operators is much richer than in the differential case. 
In the scalar case all difference operators of 
the form  $a \cS^j$, where $a$ is a difference function and $j\in \mathbb{Z}$,
are invertible, while in the differential case, the only invertible operators 
are operators of multiplication by a function. The definition of the order of  
difference and differential operators are essentially different.

The arrangement of this paper is as follows: In Section \ref{Sec2}, we define a 
difference field $\cF$, the Lie algebra $\cA$ of its evolutionary 
derivations (or evolutionary vector fields) which is a subalgebra of ${\rm 
Der}\,\cF$ and discuss algebraic properties of the noncommutative ring of 
difference operators. In 
particular, we show that it is a right and left 
Euclidean domain and satisfies the right (left) Ore property. Then we define the 
skew field of rational (pseudo--difference) operators, i.e. operators  
of the form $AB^{-1}$, where $A$ and $B$ are difference operators.
Next we discuss the relation between rational operators and weakly 
nonlocal operators, namely we formulate a criteria for a rational operator to 
be weakly nonlocal.
Finally we adapt all these results to rational matrix difference operators by 
defining the order of the operator as the order of its Dieudonn\'e determinant. In 
Section \ref{Sec3} we define  preHamiltonian  difference operators as operators 
on $\cF$ whose images define a Lie subalgebra in $\cA$. We  explore the 
interrelation between preHamiltonian pairs and Nijenhuis 
operators. We show that 
if operators $A$ and $B$ form a preHamiltonian pair,
then $R=AB^{-1}$ is Nijenhuis. Conversely, if $R$ is Nijenhuis and $B$ is 
preHamiltonian, then $A$ and $B$ form a 
preHamiltonian pair.  
These two sections are the theoretical foundation of the paper. In Section 
\ref{Sec4}, we give basic definitions such as
 symmetries, recursion operators and Hamiltonian for differential-difference 
equations. We also show how operators $A$ 
and $B$ are related to the equation if $AB^{-1}$ is its recursion operator. Next 
two sections are applications of the theoretical results in Section \ref{Sec2} and \ref{Sec3} for integrable 
differential- 
difference equations. In Section 
\ref{Sec5}, we construct a recursion operator for a new integrable equation 
derived by Adler and Postilion in 
\cite{adler2}:
$$
u_t=u^2(u_2u_1-u_{-1}u_{-2})-u(u_1-u_{-1}),
$$
using its Lax representation presented in the same paper. The obtained recursion 
operator is no longer weakly nonlocal. 
We show that it is indeed Nijenhuis by rewriting it as a rational difference 
operator and that it generates 
infinitely many commuting local symmetries.  To improve the readability,  we put some technical lemmas 
used for the proof of the main result on the locality of commuting symmetries
in appendix B.
For some integrable 
differential-difference equations, such as the Ablowitz-Ladik Lattice 
\cite{ablowitz2}, the recursion operator and its 
inverse are both weakly nonlocal.
In Section \ref{Sec6}, we apply the 
theoretical results from Section \ref{Sec2} 
to check whether the inverse recursion operators are weakly nonlocal, and if 
so, we demonstrate how to cast them in the weakly nonlocal form. To illustrate 
the method we choose 
four typical examples. However, the method is general and it can be applied to 
any integrable  differential--difference system, including all systems listed 
in 
\cite{kmw13}. At the end of the paper we give a short conclusion and discussion 
on  our new results on relation between preHamiltonian and Hamiltonian 
operators. 
To be self-contained, we also include Appendix A, containing some basic definitions 
for a unital non-commutative ring.

\section{Algebraic properties of difference operators}\label{Sec2}
In this section, we give a definition of rational difference operators and 
explore their properties.
The main objects of our study in this paper are systems of 
evolutionary differential-difference equations and hidden structures associated with 
them. We first consider the scalar case. A generalization to the 
multi-component case will be discussed in the 
end of this section.

\subsection{Difference field and evolutionary vector fields}

Let $\cK $ be a zero characteristic base field,  such as 
$\bbbc$  or $\bbbr$. We define the polynomial ring 
$$\mathrm{K}=\cK[\ldots, u _{-1}, u _0, u _1,\ldots]$$  of the infinite set of 
variables 
$\{u\}=\{ u _k;\ k\in\bbbz\}$ and the corresponding field of fractions 
$$\cF=\cK(\ldots, u _{-1}, u _0, u _1,\ldots).$$ It is assumed that every element 
of  
$\mathrm{K}$ and $\cF$ depends on a finite number of variables only. 

There is a natural automorphism $\cS$ of the field $\cF$, 
which we call the shift operator, defined as 
\[
 \cS: a( u _k,\ldots , u _r)\mapsto a( u _{k+1},\ldots , u _{r+1}),\quad 
\cS:\alpha\mapsto\alpha, \qquad a( u _k,\ldots , u _r)\in \cF,\ \ \alpha\in\cK.
\]
For $a=a( u _k,\ldots , u _r)\in\cF$ we will often use notation 
\[
 a_i=\cS^i(a)=a( u _{k+i},\ldots , u _{r+i}),\qquad 
i\in\bbbz ,
\]
and omit index zero at $a_0$ or $u_0$ when there is no ambiguity.
The field $\cF$  equipped with   the automorphism $\cS$ is a 
difference field  and 
the base field $\cK$ is its subfield  of constants. 

The reflection 
$\cT$ of the lattice $\bbbz$ defined by
\[
 \cT: a( u _k,\ldots , u _r)\mapsto a(u_{-k},\ldots ,u_{-r}),\quad 
\cT:\alpha\mapsto\alpha, \qquad a( u _k,\ldots , u _r)\in \cF,\ \ \alpha\in\cK,
\]
is another 
obvious automorphism of $\cF$ and $\mathrm{K}$. The 
composition $\cS\cT\cS\cT=\Id$ is the identity map. Thus the automorphisms
$\cS,\cT$ generate the infinite dihedral group $\bbbd_\infty$ and the subgroup 
generated by $\cS$ is normal.

The automorphism $\cT$ defines a $\bbbz_2$ grading of the difference field $\cF$ (and 
ring $\mathrm{K}\subset\cF$):
\[
 \cF=\cF_0\oplus\cF_1,\qquad \cF_0\cdot\cF_0=\cF_0,\ \ 
\cF_0\cdot\cF_1=\cF_1,\ \ \cF_1\cdot\cF_1=\cF_0, 
 \]
where $ \cF_k=\{a\in\cF\,|\, \cT(a)=(-1)^k a\}$.

Partial derivatives $\frac{\partial}{\partial u_i},\ i\in\bbbz$ are 
commuting derivations of $\cF$ satisfying the conditions
\begin{equation}\label{spart}
\cS \frac{\partial}{\partial u_{i}}= \frac{\partial}{\partial u_{i+1}} \cS, 
\qquad \cT \frac{\partial}{\partial u_{i}}= \frac{\partial}{\partial u_{-i}} 
\cT.
\end{equation}

A derivation of $\cF$ is said to be evolutionary if it commutes with
the shift operator $\cS$. Such derivation is completely determined 
by one element of $f\in \cF$ and is of the form
\begin{equation}
 \label{Xf}
 X_f=\sum_{i\in\bbbz}\cS^i(f) \frac{\partial}{\partial u_{i}},\qquad f\in\cF.
\end{equation}
An element $f$ is called the characteristic of the evolutionary derivation $X_f$. 
The action of $X_f(a)$ for  $a\in\cF$ can also be represented in the form
\[
 X_f(a)=a_*[f],
\]
where $a_*[f]$ is the Fr\'echet derivative of $a=a(u_p,\ldots,u_q)$ in the 
direction $f$, which is defined as 
\[
 a_*[f]:= \frac{d}{d\epsilon}a(u_p+\epsilon f_p,\ldots,u_q+\epsilon 
f_q)\arrowvert_{\epsilon=0}=\sum_{i=p}^q\frac{\partial a}{\partial u_i}f_i.
\]
The Fr\'echet derivative of $a=a(u_p,\ldots,u_q)$ is
a difference operator represented by a 
finite sum
\begin{equation}\label{astar}
a_*=\sum_{i=p}^q\frac{\partial a}{\partial u_i}\cS^i.
\end{equation}
It is obvious that
\[
 (\cT a)_*=\sum_{i=p}^q\cT\left(\frac{\partial a}{\partial u_i}\right)\cS^{-i}.
\]

Evolutionary derivations form a Lie subalgebra $\cA$ in the the Lie algebra 
$\Der \,
\cF$. Indeed, 
\[\begin{array}{l}
   \alpha X_f+\beta X_g=X_{\alpha f+\beta g},\qquad \alpha,\beta\in\cK,\\
\phantom{}   [X_f,X_g]=X_{[f,g]},
  \end{array}
\]
where $[f,g]\in\cF$ denotes the Lie bracket
\begin{equation}\label{bracket}
 [f,g]=X_f(g)-X_g(f)=g_*[f]-f_*[g]. 
\end{equation}
Lie bracket (\ref{bracket}) is $\cK$--bilinear, anti-symmetric and satisfies 
the Jacobi identity. Thus $\cF$, equipped with the bracket (\ref{bracket}), has 
a structure of  a Lie algebra over $\cK$.

The reflection $\cT$ acts naturally on evolutionary vector derivations
\[
 \cT:\, X_f\mapsto X_{\cT(f)}=\cT\cdot X_f\cdot \cT\, .
 \]
Thus the $\cA$ is a graded Lie algebra
\[
 \cA=\cA_0\oplus\cA_1,\qquad [\cA_0,\cA_0]\subset \cA_0,\ [\cA_0,\cA_1]\subset 
\cA_1,\ [\cA_1,\cA_1]\subset \cA_0,\ 
\]
where $\cA_k=\{X\in\cA\,|\, \cT(X)=(-1)^k X\}$.

\subsection{Rational Difference Operators}\label{rdo}

In this section we give  definitions of difference operators and 
rational pseudo--difference operators, which for simplicity we shall call 
rational operators. 

\begin{Def}\label{deford} (1) A difference operator $B$ of order ${\rm ord}\,  
B:=(M,N)$ with 
coefficients in
$\cF$ is a finite sum of the form
\begin{equation}\label{operB}
B= b_N \cS^{N}+b_{N-1} \cS^{N-1}+\cdots +b_M \cS^{M},\qquad b_k\in\cF, \ \ M\le 
N,\
\ N,M\in\mathbb{Z}.
\end{equation}
The total order of $B$ is defined as ${\rm Ord}B=N-M$. The total order of 
the zero 
operator is defined as $\Ord\, 0:=\{\infty\}$.
\end{Def}

The Fr\'echet derivative (\ref{astar}) is an example of a difference operator 
of order $(p,q)$ and total order ${\rm Ord}\, a_* =q-p$. For an element  
$f\in \cF$ the order and total 
order are defined as ${\rm ord}\, f_*$ and   ${\rm Ord}\, f_*$ respectively.

Difference operators form a unital ring $\cR=\cF[\cS,\cS^{-1}]$ of 
Laurent 
polynomials in $\cS$ with coefficients in $\cF$, usual addition 
and multiplication defined by 
\begin{equation} \label{smult}
 a\cS^n \cdot b\cS^m=a\cS^n(b)\cS^{n+m}.
\end{equation}
This multiplication is associative, but non-commutative. Definitions of some 
basic concepts for a unital associative ring are presented in the Appendix A.

From the above definition it follows that if  $A$ is a difference operator of 
order ${\rm ord}\, A=(p,q)$, then ${\rm 
ord}\, (\cS^n\cdot  A\cdot \cS^m) =(p+n+m,q+n+m)$ and ${\rm Ord}\, (\cS^n\cdot  
A\cdot \cS^m )={\rm Ord}\, 
A=q-p$. For any $A,B\in\cR$ we have $\Ord\, (AB)=\Ord\, A+\Ord\, B$. Thus total 
order is homomorphisms of the multiplicative monoid $\cR$ to $\bbbz_{\ge 
0}\cup\{\infty\}$.

Reflection $\cT$ can be extended to automorphism of $\cR$ given by
\[
 \cT\cdot a\cS^m\cdot\cT=\cT(a)\cS^{-m}
\]
and define a grading of $\cR$ as follows:
\[
 \cR=\cR_0\oplus\cR_1,\qquad \cR_k=\{A\in\cR\,|\, \cT\cdot  A\cdot 
\cT=(-1)^k A\}.
\]
It is obvious that ${\rm Ord}(\cT\cdot A\cdot\cT)={\rm Ord}\, A$.

For a difference operator $B$ (\ref{operB})  the {\em leading 
monomial} $B_L$ is, by definition, $ B_L=b_N \cS^{N}$.

Monomial difference operators are of the form $a\cS^n,\ a\ne 0$. They have  
total order equal to zero and are
invertible in $\cR$. Monomial difference operators form a 
nonabelian group 
\[
 \cR^\times =\{a\cS^n\, |\, a\in\cF,\ a\ne 0,\ n\in \bbbz\}
\]
with multiplication (\ref{smult}).

\begin{Pro}
 The ring $\cR$ is a right and left Euclidean domain.
 \end{Pro}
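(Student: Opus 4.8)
The plan is to show that $\cR = \cF[\cS,\cS^{-1}]$ admits a division algorithm on both sides, measured by the total order $\Ord$. First I would observe that it suffices to prove the Euclidean property for a convenient notion of degree: given a dividend $A$ and a nonzero divisor $B$, I want to produce a quotient and remainder $A = QB + \RE$ with $\Ord\,\RE < \Ord\, B$ (right division), and symmetrically on the left. Because every nonzero operator can be normalized by multiplying by a suitable monomial $a\cS^n$ (which is a unit in $\cR^\times$ and preserves total order), there is no loss of generality in assuming the divisor has its lowest-order term at $\cS^0$, i.e.\ $B = b_N\cS^N + \cdots + b_0$ with $b_0 \neq 0$.

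\medskip

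The core step is the elimination of leading (and trailing) monomials. Suppose $A = a_Q\cS^Q + \cdots + a_P\cS^P$ with $\Ord\, A = Q - P \ge \Ord\, B = N - M$. For right division I match the leading monomial $a_Q\cS^Q$ of $A$ against the leading monomial $b_N\cS^N$ of $B$. Using the multiplication rule (\ref{smult}), the single monomial $q\cS^{Q-N}$ with $q = a_Q\,\cS^{Q-N}(b_N)^{-1}$ satisfies $(q\cS^{Q-N})\cdot B_L = a_Q\cS^Q$, so that $A - (q\cS^{Q-N})\,B$ has strictly smaller top order $N$ than $A$. Here I use crucially that $\cS$ is an \emph{automorphism} of $\cF$: the coefficient $\cS^{Q-N}(b_N)$ is again a nonzero element of $\cF$ and hence invertible, which is what makes the cancellation possible at every step. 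Iterating this monomial-matching procedure lowers the degree in $\cS$ by one each time and terminates after finitely many steps, yielding $A = QB + \RE$ with $\Ord\,\RE < \Ord\, B$ (or $\RE = 0$). This establishes that $\cR$ is a right Euclidean domain. The left Euclidean property follows by the mirror-image argument, now matching trailing monomials against the trailing monomial of the divisor, or equivalently by applying the reflection-type anti-automorphism that reverses the roles of $\cS^N$ and $\cS^M$.

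\medskip

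The only genuine subtlety, and the place I expect to need care, is the bookkeeping of the total order as a two-sided degree. Unlike the polynomial ring $\cF[\cS]$, here operators carry a two-sided order $(M,N)$ and $\Ord = N - M$, so I must confirm that the monomial I subtract does not inadvertently create lower-shift terms below $\cS^M$ that would spoil the remainder estimate. This is handled by the normalization above: after clearing the top degree repeatedly, the remainder is a genuine Laurent polynomial whose span of shifts is strictly shorter than that of $B$, so $\Ord\,\RE < \Ord\, B$ holds. Combined with the fact that $\cR$ has no zero divisors (which follows immediately from the leading-monomial multiplicativity $\Ord(AB) = \Ord\, A + \Ord\, B$ recorded earlier, forcing $AB = 0 \Rightarrow A = 0$ or $B = 0$), this gives that $\cR$ is a domain equipped with a two-sided Euclidean function, completing the proof.
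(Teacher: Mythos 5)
Your argument is essentially the paper's: both proofs eliminate the leading monomial of the dividend by exploiting that every nonzero monomial $a\cS^n$ is a unit (because $\cS$ is a field automorphism), and both descend on the total order, with the same bookkeeping check that the subtracted term cannot create shifts below the bottom degree of $A$. The only substantive omission is uniqueness of the quotient and remainder, which the paper's definition of Euclidean (Appendix A) requires and which it proves in one line from $B(Q-\tilde Q)=\tilde R-R$ and the inequality $\Ord\bigl(B(Q-\tilde Q)\bigr)\ge\Ord\,B>\Ord(\tilde R-R)$; you should add that line. (Note also that your labelling is swapped relative to the paper: its ``right'' division is $A=B\cdot Q+R$ with the quotient on the right, whereas you place it on the left, but since you establish both sides by symmetry this is harmless.)
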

\begin{proof}
Let us show that $\cR$ is a right  Euclidean, that 
is for any $A,B\in \cR$ there exist unique $Q, R\in 
\cR$ such that $A=B\cdot Q+R$ and either $R=0$ or $\Ord\, R<\Ord\, 
B$.
First we prove the existence of $Q,R$. If $A=0$, then we can take $Q=R=0$. If $A\ne 
0$ and $\Ord\, A<\Ord\, B$, we can take $Q=0,\ R=A$. For $\Ord\, A\ge \Ord\, B$ 
we proceed by induction on $\Ord\, A$. If $\Ord\, A=0\ (\Ord\, B)$, then 
$A=a\cS^N,\ B=b\cS^M$ for some $N,M\in\bbbz$ and they are invertible. Thus 
$A=BB^{-1}A$ and we can take $R=0, \ Q=B^{-1}A=\cS^{-M}(a/b)\cS^{N-M}$. 
Finally, 
consider the case $\Ord\, A=n\ge 1,\ \Ord\, B=m,\ n\ge m$ and assume that the 
statement is true for all operators $A$ with total order less than $n$. Let the 
leading monomials of $A$ and $B$ be $a\cS^N$ and  $b\cS^M$ respectively. The 
difference operator $\hat A=A-B \cdot (b\cS^M)^{-1}\cdot a\cS^N $ has  $\Ord\, 
\hat A<\Ord\, A=n$. Hence we can use the induction assumption and find $\hat 
Q,\ 
\hat R$, such that $\hat A=B\hat Q+\hat R$ and either $\hat R=0$ or $\Ord\, 
\hat 
R<\Ord B$. Thus
\[
 A-B 
\cdot (b\cS^M)^{-1}\cdot a\cS^N=B\hat Q+\hat R,
\]
that is,
\[
 A=B((b\cS^M)^{-1}\cdot a\cS^N+\hat Q)+\hat R.
\]
Therefore $Q=(b\cS^M)^{-1}\cdot a\cS^N+\hat Q$ and $R=\hat R$. As for the
uniqueness, if one has $BQ+R=B\tilde Q+\tilde R$ with $\Ord \, R<\Ord B,\ \Ord 
\, \tilde R<\Ord B$, then $B(Q-\tilde Q)=\tilde R-R$. If $Q\ne \tilde Q$ we 
arrive to a contradiction since $\Ord(B(Q-\tilde Q))>\Ord(\tilde R-R)$. Thus 
$Q= \tilde Q$ and $R= \tilde R$. The proof of the left Euclidean property is 
similar. 
 \end{proof}

\begin{Cor}\label{principal}
Every right (left) ideal of the ring $\cR$ is  principal and 
generated by a unique element $A\in \cR$ of minimal possible 
order with the leading monomial $A_L=1$.
\end{Cor}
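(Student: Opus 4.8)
The plan is to use the fact that $\cR$ is a right Euclidean domain (the preceding Proposition) to show that every right ideal is generated by a single element, and then to pin down the canonical generator by normalising its leading monomial. The argument is the standard one by which a Euclidean domain is a principal ideal domain, adapted to the noncommutative setting where one must be careful to use \emph{right} ideals together with the \emph{right} division algorithm.

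First I would dispose of the zero ideal, which is generated by $0$. For a nonzero right ideal $\J\subset\cR$, I would consider the set of total orders $\{\Ord\, C\,|\, C\in\J,\ C\ne 0\}\subset\bbbz_{\ge 0}$ and pick an element $A\in\J$ of minimal total order. The claim is that $A$ generates $\J$ as a right ideal, i.e.\ $\J=A\cR$. The inclusion $A\cR\subseteq\J$ is immediate since $\J$ is a right ideal. For the reverse inclusion, take any $C\in\J$ and apply the right division algorithm to write $C=A\cdot Q+R$ with either $R=0$ or $\Ord\, R<\Ord\, A$. Since $R=C-A\cdot Q\in\J$ and $A$ has minimal total order among nonzero elements of $\J$, the possibility $\Ord\, R<\Ord\, A$ is excluded, forcing $R=0$ and hence $C=A\cdot Q\in A\cR$. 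Thus $\J=A\cR$.

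Next I would normalise the generator. If $A$ has leading monomial $A_L=a\cS^N$ with $a\ne 0$, then $a\cS^N$ is invertible in $\cR$ (monomials have an inverse), so I replace $A$ by $\tilde A=(a\cS^N)^{-1}A$; this lies in $A\cR$, generates the same right ideal, has the same total order, and by construction has leading monomial $\tilde A_L=1$. For uniqueness, suppose $A$ and $A'$ both generate $\J$ with leading monomial equal to $1$ and minimal total order. Then $A'=A\cdot U$ and $A=A'\cdot V$ for some $U,V\in\cR$; comparing total orders via $\Ord\,(AB)=\Ord\,A+\Ord\,B$ forces $\Ord\, U=\Ord\, V=0$, so $U$ is a monomial $c\cS^k$. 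Matching leading monomials on both sides of $A'=A\cdot(c\cS^k)$, and using that both leading coefficients are $1$, gives $c\cS^k=1$, whence $A'=A$.

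The left-ideal statement follows by the symmetric argument using the left Euclidean property and left division $C=Q\cdot A+R$, replacing $A$ by $A\cdot(a\cS^N)^{-1}$ to normalise the leading monomial. The only point requiring care, and the main thing to get right, is the consistent use of \emph{right} division for \emph{right} ideals: because multiplication in $\cR$ is noncommutative, $R=C-A\cdot Q$ must be taken with the quotient on the right of $A$ so that $R$ stays inside the right ideal $A\cR$; the additivity and monoid-homomorphism properties of $\Ord$ recorded earlier then make both the minimality argument and the uniqueness argument go through without further subtlety.
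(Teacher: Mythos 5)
Your overall strategy is the paper's own: pick a nonzero element of minimal total order, use right Euclidean division to show it generates the right ideal, and then normalise the leading monomial and argue uniqueness by comparing total orders. The division step and the uniqueness step are both correct (your uniqueness argument via $A'=AU$, $\Ord U=0$ is a valid variant of the paper's subtraction argument).

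However, the normalisation step as written fails, and it fails precisely on the point you flag as the one requiring care. For a \emph{right} ideal $\J=A\cR$ you replace $A$ by $\tilde A=(a\cS^N)^{-1}A$, i.e.\ you multiply on the \emph{left} by the inverse of the leading monomial. A right ideal is closed under right multiplication only, so there is no reason for $(a\cS^N)^{-1}A$ to lie in $A\cR$, nor to generate the same right ideal: one has $(a\cS^N)^{-1}A\cdot\cR=(a\cS^N)^{-1}\J$, which in general differs from $\J$. For a concrete instance take $A=\cS-u$ with leading monomial $\cS$; then $\cS^{-1}A=1-u_{-1}\cS^{-1}$ generates the right ideal $\cS^{-1}(\cS-u)\cR$, and a short computation shows $\cS(\cS-u)\notin(\cS-u)\cR$, so the two right ideals are genuinely different. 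The correct normalisation is the one the paper uses: multiply on the \emph{right}, $A\mapsto A\cdot A_L^{-1}$, which stays in $\J$, generates the same right ideal because $A_L^{-1}$ is invertible, and still has leading monomial $1$ since the leading monomial of a product is the product of the leading monomials. Your treatment of the left-ideal case has the same sidedness reversed (you normalise by $A\cdot(a\cS^N)^{-1}$ where you need $(a\cS^N)^{-1}\cdot A$). Swapping the sides in both places repairs the proof; everything else goes through as you wrote it.
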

\begin{proof}
The zero ideal is obviously principal, it is generated by 
$0$. Let $J\subset \cR$ be a right ideal and $\hat A\in J$ be an 
element of least possible total order. The element $A=\hat A\cdot \hat A_L^{-1}\in 
J$, where $\hat A_L$ is the leading monomial of $\hat A$, is of the same total 
order and 
with the leading monomial $A_L=1$. Then for any other element $B\in J$ we 
have $B=AQ+R$ with either $R=0$ or $\Ord\, R<\Ord\, A$. Since $B\in J$, 
we conclude that $R=0$, otherwise $\Ord\, R<\Ord\, A$, which is in contradiction 
with the assumption that $A$ has the least possible order. Such element $A$ is 
obviously unique. If we assume the existence of $\tilde A\in J, \ \Ord \,
\tilde A=\Ord\, A, \tilde A_L=1$, then $A-\tilde A\in  J$ and $\Ord \, 
(A-\tilde A)<\Ord\, A$. In a similar way we 
show that   $\cR$ is  a left principal ideal ring. 
\end{proof}

\begin{Pro}\label{Oreprop}
 The ring $\cR$ satisfies the right (left) Ore property, that is, 
for any $A,B\in \cR$ their exist $A_1,B_1$, not both equal to 
zero, such that $AB_1=BA_1$, (resp. $B_1 A=A_1 B$). In other words, the right (left) ideal
$A \cR \cap B \cR$ (resp. $ \cR A \cap  \cR B$) is nontrivial. Its generator $M$ has total order
$\Ord A+ \Ord B-\Ord D$, where $D$ is the greatest left (resp. right) common 
divisor of $A$ and $B$.
\end{Pro}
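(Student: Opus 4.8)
The plan is to separate the two assertions --- nontriviality of the ideal and the order formula --- and to reduce both to the principal--ideal structure already established in Corollary \ref{principal}. The intersection $A\cR\cap B\cR$ is a right ideal, hence by that corollary it is generated by a single element $M$, and the Ore property is precisely the claim $M\neq 0$. To produce a nonzero common right multiple I would view $\cR$ as a \emph{right} $\cF$--vector space with basis $\{\cS^j\mid j\in\bbbz\}$; the key point is that left multiplication by a fixed operator is right--$\cF$--linear (it is \emph{not} left--$\cF$--linear). Right--multiplying $A$ and $B$ by suitable powers of $\cS$, which changes neither $A\cR$ nor $B\cR$, I may assume both are supported on nonnegative powers of $\cS$ starting at $\cS^0$. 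Writing $a=\Ord A$, $b=\Ord B$ and letting $W_N$ be the $(N{+}1)$--dimensional space of operators supported on $\cS^0,\dots,\cS^N$, left multiplication by $A$ (resp.\ $B$) sends $W_N$ into $W_{N+a}$ (resp.\ $W_{N+b}$). Then $(P,Q)\mapsto AP-BQ$ is a right--$\cF$--linear map $W_N\oplus W_N\to W_{N+\max(a,b)}$, with domain of dimension $2(N{+}1)$ and codomain of dimension $N+\max(a,b)+1$. For $N$ large the domain outgrows the codomain, so the kernel is nonzero; since $\cR$ is a domain, a nonzero kernel element $(B_1,A_1)$ gives $AB_1=BA_1\neq 0$, the desired relation.

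For the order formula I would use the standard short exact sequence of right $\cR$--modules
\[
0\longrightarrow \cR/M\cR \xrightarrow{\ \bar r\mapsto(\bar r,-\bar r)\ } \cR/A\cR\oplus \cR/B\cR \xrightarrow{\ (\bar s,\bar t)\mapsto \overline{s+t}\ } \cR/(A\cR+B\cR)\longrightarrow 0 .
\]
Here $A\cR\cap B\cR=M\cR$ by the first part, while $A\cR+B\cR$ is again a right ideal, so $=D\cR$ by Corollary \ref{principal}; since $A,B\in D\cR$ and $D\in A\cR+B\cR$, the generator $D$ is exactly a greatest common \emph{left} divisor of $A$ and $B$. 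If $\ell$ is any function on finitely generated torsion right $\cR$--modules that is additive on short exact sequences and satisfies $\ell(\cR/C\cR)=\Ord C$ for every nonzero $C$, then applying $\ell$ to the sequence immediately yields $\Ord M=\Ord A+\Ord B-\Ord D$. The left Ore statement and its order formula follow by the symmetric argument, using left division, left principal ideals, and the greatest common right divisor.

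The heart of the matter --- and the step I expect to be the main obstacle --- is the construction of the additive invariant $\ell$ with $\ell(\cR/C\cR)=\Ord C$. The naive candidate $\dim_\cF$ fails: because $\cS$ is invertible, $\cR/C\cR$ is infinite--dimensional over $\cF$ whenever $\Ord C\ge 1$, so one cannot count dimensions as in the commutative polynomial case. One fix is to take $\ell$ to be the composition length in the principal ideal domain $\cR$, weighting each simple factor $\cR/P\cR$ (with $P$ irreducible) by $\Ord P$; additivity is then automatic from Jordan--H\"older, and $\ell(\cR/C\cR)=\Ord C$ follows by factoring $C$ into irreducibles and using the multiplicativity $\Ord(C_1C_2)=\Ord C_1+\Ord C_2$ established earlier. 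Equivalently, and more transparently, one can set $\ell(\cR/C\cR)=\lim_{n\to\infty}\dim_\cF\bigl(\text{image of }W_{[-n,n]}\text{ in }\cR/C\cR\bigr)$, where $W_{[-n,n]}$ is the span of $\cS^{-n},\dots,\cS^{n}$; a direct computation shows this stabilized window dimension equals $\Ord C$, and one must still check additivity on the sequence above. Either way, it is the rigorous verification of these two properties of $\ell$ --- not the formal manipulation of the exact sequence --- where the real work lies.
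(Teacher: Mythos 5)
Your argument is correct in substance but follows a genuinely different route from the paper on both halves. For the nontriviality of $A\cR\cap B\cR$ the paper runs an induction on $\Ord B$ through the Euclidean division $A=BQ+R$, which has the side benefit of producing a common multiple $AB_1=BA_1$ with the explicit bounds $\Ord A_1\le\Ord A$, $\Ord B_1\le\Ord B$ (these bounds are reused later, e.g.\ in the matrix version of the theorem); your pigeonhole count on the right $\cF$--vector spaces $W_N$ is the classical alternative and is perfectly valid, though it only recovers those bounds a posteriori from the order formula. For the order formula the paper argues directly: assuming $A,B$ left coprime and $\Ord M<\Ord A+\Ord B$, it compares $M$ with a generator $N$ of the left ideal $\cR A_1\cap\cR B_1$ and extracts a nontrivial common left factor of $A$ and $B$, a contradiction. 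Your exact sequence
$0\to\cR/M\cR\to\cR/A\cR\oplus\cR/B\cR\to\cR/D\cR\to 0$
together with an additive invariant handles the general (non-coprime) case uniformly and is arguably cleaner.

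The one place where you go astray is the claim that $\dim_\cF$ fails because $\cR/C\cR$ is infinite--dimensional over $\cF$ when $\Ord C\ge 1$. This is false: $C\cR$ is a right ideal, so $\cR/C\cR$ is a right $\cF$--vector space, and after normalising $C=c_n\cS^n+\dots+c_0$ with $c_0c_n\ne 0$ (right multiplication by $\cS^{-M}$ does not change $C\cR$), the images of $1,\cS,\dots,\cS^{n-1}$ form a right $\cF$--basis of $\cR/C\cR$: they span because one can peel off top and bottom terms of any element using right multiples of $C$, and they are independent because a nonzero element of $C\cR$ has total order at least $n=\Ord C$ while a nonzero element of $\cS^0\cF+\dots+\cS^{n-1}\cF$ has total order at most $n-1$. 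So $\ell=\dim_\cF$ (right dimension) already satisfies $\ell(\cR/C\cR)=\Ord C$ and is trivially additive on your exact sequence; the ``real work'' you defer to a weighted composition length or a stabilised window dimension is in fact a two--line computation, and both of your fallback constructions collapse to this naive one. With that observation your proof is complete.
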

\begin{proof} 
Let us assume that $\Ord\, A\ge \Ord\, B$ (otherwise we swap and 
rename $A,B$).  If $B=0$,  then $B_1=0$. If $B\ne 0$, we prove the claim by 
induction on $\Ord\, B$. We assume that the statement 
is true for any $B$ with $\Ord\, B<k$ and we will show that it is also true for any $B,\ 
\Ord\,B=k$. Since $\cR$ is right Euclidean, there exist $Q,R$ such 
that $A=BQ+R$ and either $R=0$ or $\Ord\, R<\Ord B$. If $R=0$ we take $A_1=Q,\ 
B_1=1$ and we are done. Since  $\Ord\, R<k$, there exist $\hat B$, $\hat R$ such that 
$B \hat R=R \hat B$, $\Ord \hat R \leq \Ord R$ and $\Ord \hat B \leq \Ord B$. Thus 
\[
 A \hat B=(BQ+R)\hat B\ \Leftrightarrow\ A \hat B=B (Q \hat B+\hat R)
\]
and we can take $A_1=Q\hat B+ \hat R$,  $B_1=\hat B$. Finally, we can see that $\Ord A_1 \leq \Ord A$ and $\Ord B_1 \leq \Ord B$. The proof of the left Ore 
property is similar. 

We proved that for any $A, B \in \cR$ not both zero, the ideal $\mathcal{I}=A \cR \cap B \cR$  is not trivial. Since $\cR$ is both a right and left principal ideal ring, $\mathcal{I}$ is generated by a difference operator $M$, $\mathcal{I}=M \cR$. In particular, $M=AB_1=BA_1$ for some difference operators $B_1$ and $A_1$. From the first part of the proof, we know that  $\Ord M \leq \Ord A+ \Ord B$. Let us assume that $A$ and $B$ are left coprime and that $\Ord M< \Ord A+\Ord B$. The ideal $\mathcal{J}=\cR A_1 \cap \cR B_1$ is also non trivial and generated by a difference operator $N$. 
We know that $\Ord N$ is at most $\Ord A_1 + \Ord B_1$. $M$ is an element of $\mathcal{J}$ and $\Ord M=\Ord A + \Ord B_1> \Ord A_1+ \Ord B_1 \geq \Ord N$, hence there exists a difference operator $C$ such that $M=CN$ and $\Ord C >0$.  Let $A_2$ and $B_2$ be such that $A_2B_1=B_2A_1=N$. Then $A=CA_2$ and $B=CB_2$, which contradicts the hypothesis that $A$ and $B$ are left coprime.   
\end{proof}

The domain $\cR$ can be naturally embedded in the skew field of 
rational pseudo--difference operators, which we will call simply \textit{rational operators}.

\begin{Def}
 A rational (pseudo--difference) operator $L$ is defined as $L=AB^{-1}$ 
for some  $A,B\in\cR$ and $B\ne 0$. The set of all 
rational operators is
 \[
  \cQ=\{ AB^{-1}\,|\, A,B\in \cR,\ B\ne0\}.
 \]
\end{Def}

\begin{Rem}
The skew field  $\cQ$ is a minimal subfield of the skew field $\cQ^L$ of the 
Laurent formal series
\[
 \cQ^L=\left\{\sum_{n=k}^\infty a^{(-n)}\cS^{-n}\ |\  a^{(n)}\in\cF,\ 
n\in\bbbz\right\}
\]
containing $\cR$. As well as it is a minimal subfield of the skew field $\cQ^T$ 
of the Taylor
formal series
\[
 \cQ^T=\left\{\sum_{n=k}^\infty a^{(n)}\cS^{n}\ |\  a^{(n)}\in\cF,\ n\in\bbbz\right\}
\]
containing $\cR$. The skewfields $\cQ^L$ and $\cQ^T$ are isomorphic. The 
isomorphism is given by the reflection map $\cT$.
\end{Rem}
\begin{Pro}
Any rational operator 
$L=AB^{-1}$ can also be written in the form $L=\hat{B}^{-1}\hat A,\ \hat A,\hat 
B\in\cR$ and $\hat B\ne 0$. 
\end{Pro}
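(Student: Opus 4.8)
The plan is to reduce the statement to the left Ore property already established in Proposition~\ref{Oreprop}. Observe that, given $L=AB^{-1}$ with $B\neq 0$, requiring $L=\hat B^{-1}\hat A$ is, after clearing denominators, equivalent to finding $\hat A,\hat B\in\cR$ with $\hat B\neq 0$ such that $\hat B A=\hat A B$. Indeed, $AB^{-1}=\hat B^{-1}\hat A$ holds in $\cQ$ precisely when $\hat B A=\hat A B$ in $\cR$, as one sees by multiplying the desired equality on the left by $\hat B$ and on the right by $B$. So the whole problem is to produce such a pair with a nonzero left denominator $\hat B$.

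First I would apply the left Ore property to the pair $A,B$: by Proposition~\ref{Oreprop} there exist $A_1,B_1\in\cR$, not both zero, with $B_1 A=A_1 B$. Setting $\hat B:=B_1$ and $\hat A:=A_1$ immediately gives the required relation $\hat B A=\hat A B$, so the only thing left to verify is that this particular $\hat B$ is nonzero.

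This is exactly where the standing hypothesis $B\neq 0$ (built into the definition of $L=AB^{-1}$) is used. Suppose for contradiction that $\hat B=0$. Then $\hat A B=0$. Since the total order is a homomorphism of the multiplicative monoid $\cR$ into $\bbbz_{\ge 0}\cup\{\infty\}$, the ring $\cR$ has no zero divisors; as $B\neq 0$, this forces $\hat A=0$. But then both $\hat A$ and $\hat B$ vanish, contradicting the conclusion of the Ore property that they are not both zero. Hence $\hat B\neq 0$, and one computes $L=AB^{-1}=\hat B^{-1}(\hat B A)B^{-1}=\hat B^{-1}(\hat A B)B^{-1}=\hat B^{-1}\hat A$, as claimed.

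I do not anticipate a genuine obstacle: the substantive content is entirely carried by the left Ore property, and the only care required is the non-commutative bookkeeping of left versus right multiplication together with the elementary domain argument ensuring that the new denominator $\hat B$ does not degenerate. If one wished to avoid invoking the Ore statement as a black box, the same induction on $\Ord\,B$ used in its proof could be replayed directly, but reusing Proposition~\ref{Oreprop} keeps the argument short and transparent.
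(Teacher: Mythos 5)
Your proof is correct and follows essentially the same route as the paper: both reduce the claim to the left Ore property of Proposition~\ref{Oreprop} applied to the pair $A,B$ and then rearrange $\hat B A=\hat A B$ into $AB^{-1}=\hat B^{-1}\hat A$. Your extra step checking that $\hat B\neq 0$ via the fact that $\cR$ is a domain is a small but welcome addition, since the paper's one-line proof simply asserts the nonvanishing of the left denominator.
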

\begin{proof}
 Indeed, it follows from the Ore 
condition that for any $A,B\in\cR,\ B\ne 0$ there exist $\hat 
A,\hat 
B\in\cR$ and $\hat B\ne 0$ such that 
$\hat{B}A=\hat{A}B$. Multiplying this expression on $\hat B$ from the left and 
$B^{-1}$ from the right we obtain $L=AB^{-1}=\hat{B}^{-1}\hat A$. 
\end{proof}

Thus any statement for the representation $L=AB^{-1}$ can be easily 
reformulated to the representation $L=\hat B^{-1}\hat A$. In particular,
\[
  \cQ=\{ AB^{-1}\,|\, A,B\in \cR,\ B\ne0\}=\{ B^{-1}A\,|\, 
A,B\in \cR,\ B\ne0\}.
 \]
\begin{Pro}
 $\cQ$ is the skew field of rational operators over $\cF$.
\end{Pro}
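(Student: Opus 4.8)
The statement to prove is that $\cQ$ is a skew field of rational operators over $\cF$. The plan is to verify the skew-field axioms directly, leaning on the two structural results already established: the right and left Ore property (Proposition \ref{Oreprop}) and the fact that any element admits both a right-fraction representation $AB^{-1}$ and a left-fraction representation $\hat B^{-1}\hat A$. Since $\cR$ is an Ore domain, the general theory guarantees that its skew field of fractions exists and is unique, so the real content is to confirm that the set $\cQ=\{AB^{-1}\mid A,B\in\cR,\ B\ne 0\}$ is already closed under the ring operations and inversion, i.e.\ that it coincides with that skew field rather than being a proper subset.

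First I would check closure under addition. Given two rational operators $L_1=A_1B_1^{-1}$ and $L_2=A_2B_2^{-1}$, the Ore property supplies $P,Q\in\cR$, not both zero, with $B_1P=B_2Q=:M$, a common right multiple. Then $L_1=A_1P M^{-1}$ and $L_2=A_2Q M^{-1}$, so $L_1+L_2=(A_1P+A_2Q)M^{-1}\in\cQ$. Closure under multiplication is the step where the left-fraction representation is essential: to compute $L_1L_2=A_1B_1^{-1}A_2B_2^{-1}$ I must move $B_1^{-1}$ past $A_2$, and the Ore property gives $\tilde A,\tilde B\in\cR$ with $\tilde B A_2=\tilde A B_1$, whence $B_1^{-1}A_2=\tilde A^{-1}\tilde B$ and therefore $L_1L_2=A_1\tilde A^{-1}\tilde B B_2^{-1}$. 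Applying the Ore property once more to clear $\tilde A^{-1}$ past $\tilde B B_2^{-1}$ expresses the product as a single right fraction in $\cQ$. Inverses are immediate: if $L=AB^{-1}$ with $A\ne 0$ (i.e.\ $L\ne 0$), then $L^{-1}=BA^{-1}\in\cQ$, and $A\ne 0$ precisely when $L\ne 0$ since $\cR$ is a domain.

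The remaining axioms are routine: associativity and distributivity are inherited from $\cR$ once all the fraction manipulations above are justified, the additive identity is $0=0\cdot\cS^0$, and the multiplicative identity is $\cS^0$. I would also note that $\cQ$ is noncommutative, inheriting this from $\cR$.

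I expect the main obstacle to be the multiplication step, specifically the bookkeeping needed to bring $L_1L_2$ into a single fraction: one must apply the Ore property twice and verify at each stage that the denominators produced are nonzero, which follows because $\cR$ is a domain and the Ore multipliers are chosen not both zero. A subtler point worth addressing is \emph{well-definedness}: the operations must not depend on the chosen representation $AB^{-1}$ of a rational operator. This can be handled by showing that $A_1B_1^{-1}=A_2B_2^{-1}$ iff there exist $C,D\in\cR$ with $A_1C=A_2D$ and $B_1C=B_2D$, so that any two representations share a common right multiple of the denominators; all the constructions above then respect this equivalence. Once well-definedness and the two closure computations are in place, the skew-field structure follows without further difficulty.
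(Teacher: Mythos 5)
Your proposal is correct and takes essentially the same approach as the paper: closure of $\cQ$ under addition and multiplication is established via the Ore property, with inverses being immediate. The only minor difference is in the multiplication step, where the paper uses a single application of the right Ore property (choosing $\hat B,\hat C$ with $B\hat C=C\hat B$, so that $B^{-1}C=\hat C\hat B^{-1}$ and the product is at once a right fraction) rather than your detour through a left-fraction representation requiring two applications.
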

\begin{proof} We need to show that the set  $\cQ$ is closed under addition and multiplication. Let 
$A,B,C,D\in \cR$ with $B\ne 0, D\ne 0$. It follows from the Ore 
property that there exist nonzero $\hat B,\hat D\in \cR$ such that 
$B\hat D=D\hat B$. Hence 
\[
 AB^{-1}+CD^{-1}=(A\hat D+C\hat B)\cdot (B\hat D)^{-1}\in \cQ .
\]
Also there exist nonzero $\hat B,\hat C$ such that $B\hat C=C\hat B$. Hence
\[
 (AB^{-1})\cdot (CD^{-1})=(A\hat C)\cdot (D\hat B)^{-1} \in \cQ
\]
implying $\cQ$ is also closed under multiplication. 
\end{proof}
\begin{Pro}\label{pro52}
 The decomposition $L=AB^{-1},\ A,B\in \cR$ of an element $L\in 
\cQ$ is unique if we require that $B$ has a minimal possible total order 
with leading monomial $B_L=1$. For any other decomposition $L=\hat 
A\hat B^{-1},\ \hat A,\hat B\in \cR$ there exists $C\in \cR$ such 
that $\hat A=AC,\ \hat B=BC$. Moreover, if $D^{-1}E$ is a (left) minimal decomposition of 
$L$, then $\Ord D=\Ord B$.
\end{Pro}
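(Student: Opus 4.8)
The plan is to capture all admissible denominators of $L$ inside a single right ideal and then exploit that $\cR$ is a principal ideal domain (Corollary \ref{principal}). Concretely, I would set
\[
\cI=\{C\in\cR\ :\ LC\in\cR\},
\]
and first check that $\cI$ is a right ideal: if $C_1,C_2\in\cI$ then $L(C_1+C_2)\in\cR$, and for $C\in\cI,\ G\in\cR$ one has $L(CG)=(LC)G\in\cR$ since $\cR$ is closed under multiplication. Because $LB=A\in\cR$, the operator $B$ lies in $\cI$, so $\cI\neq 0$. By Corollary \ref{principal}, $\cI=B_0\cR$ for a unique $B_0$ of minimal total order with leading monomial equal to $1$. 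Put $A_0:=LB_0\in\cR$, so that $L=A_0B_0^{-1}$. (One may assume $L\neq 0$; in the remaining case all operators in sight are $0$ or monomials and every order is $0$.)

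For the factorization statement, given any decomposition $L=\hat A\hat B^{-1}$ I would observe that $L\hat B=\hat A\in\cR$ forces $\hat B\in\cI=B_0\cR$, hence $\hat B=B_0C$ for some $C\in\cR$; then $\hat A=L\hat B=A_0C$, which is exactly the claimed common right factor with $(A,B)=(A_0,B_0)$. Uniqueness is then immediate: any admissible denominator lies in $\cI$ and therefore has total order at least $\Ord B_0$, so requiring minimal total order together with leading monomial $1$ singles out an element of $\cI$ of minimal order with leading monomial $1$, which by the uniqueness clause of Corollary \ref{principal} must be $B_0$. I would also record here that $A_0$ and $B_0$ are right coprime: a common right factor $G$ with $\Ord G>0$ would yield $A_0=A'G,\ B_0=B'G$ and hence $L=A'B'^{-1}$ with $B'\in\cI$ of smaller order, contradicting the minimality of $B_0$.

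It remains to compare with a left minimal decomposition $L=D^{-1}E$. By the left analogue of the above (using the left ideal $\{C:CL\in\cR\}$ and the left principal ideal structure), $D$ and $E$ are left coprime. From $D^{-1}E=A_0B_0^{-1}$ I obtain $M:=DA_0=EB_0$, and I would invoke the order formula of Proposition \ref{Oreprop} twice. Since $A_0,B_0$ are right coprime, their greatest right common divisor has order $0$, so the left ideal $\cR A_0\cap\cR B_0$ has generator of total order $\Ord A_0+\Ord B_0$; as $M$ lies in it, $\Ord M\ge \Ord A_0+\Ord B_0$, and combined with $\Ord M=\Ord D+\Ord A_0$ this gives $\Ord D\ge\Ord B_0$. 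Symmetrically, since $D,E$ are left coprime, their greatest left common divisor has order $0$, so the right ideal $D\cR\cap E\cR$ has generator of order $\Ord D+\Ord E$; as $M$ lies in it, $\Ord M\ge \Ord D+\Ord E$, and with $\Ord M=\Ord E+\Ord B_0$ this gives $\Ord B_0\ge\Ord D$. Hence $\Ord D=\Ord B_0=\Ord B$.

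The routine parts (the ideal axioms and the monomial normalization of the leading coefficient) are mechanical; the step demanding genuine care is the final order count. There one must keep the left/right common-divisor bookkeeping of Proposition \ref{Oreprop} straight and, crucially, recognize that the single element $M=DA_0=EB_0$ sits simultaneously in the left intersection ideal $\cR A_0\cap\cR B_0$ and in the right intersection ideal $D\cR\cap E\cR$, so that the two coprimeness inputs pin $\Ord D$ down from both sides. I expect this double use of the intersection-order formula to be the main obstacle.
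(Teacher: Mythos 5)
Your proof is correct and follows essentially the same route as the paper's: the ideal $\cI=\{C:LC\in\cR\}$ of admissible denominators, Corollary \ref{principal} for its principal generator, and the order formula of Proposition \ref{Oreprop} for the comparison with a left decomposition. The only difference is cosmetic: the paper constructs the left minimal decomposition directly from the generator of $\cR A\cap\cR B$ (so one application of the order formula suffices), whereas you take an arbitrary left minimal decomposition and sandwich $\Ord D$ by applying the formula to both intersection ideals — a slightly more thorough handling of the same step.
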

\begin{proof}
For a given $L\in\cQ$ the set 
\[
 J=\{X \in  \cR\, |\, LX \in \cR\}
\]
is a right ideal in $\cR$. Indeed, if $X, Y \in J$, then $L(X+Y)=LX+LY \in \cR$ meaning that $X+Y \in J$, and $J$ is stable under right multiplication by any element of $\cR$. The ideal $J$ is principal, and according to Corollary \ref{principal} 
it is generated by a unique element $B$ of the least possible order, if we 
require that the leading monomial $B_L=1$. Any other $\hat B\in J$ can be 
represented as $\hat B=BC$ where $C\in\cR$, since $B$ is a generator of 
the principal right ideal $J$. By Proposition \ref{Oreprop}, we know that the generator $M$ of the left ideal generated by 
$A$ and $B$ has total order $\Ord A+ \Ord B$. By definition of $M$ there exist left coprime difference operators $D$ and $E$ such that $DA=EB=M$. Therefore $D^{-1}E$ is a left minimal decomposition of $L$ and $\Ord D=\Ord B$.
\end{proof}

The definition of total order for difference operators (Definition 
\ref{deford}) can be extended to rational operators:
\begin{equation}\label{ordF}
 \Ord\, (AB^{-1}):=\Ord\,A-\Ord\, B,\quad 
 \Ord\, (\hat B^{-1}\hat A):=\Ord\,\hat A-\Ord\,\hat B,\qquad A,B,\hat A, \hat 
B\in \cR. 
\end{equation}

\begin{Def}
A formal adjoint operator $A^\dagger$ for any $A\in\cQ$ can be 
defined 
recursively:
\begin{enumerate}
 \item $a^\dagger=a$ for any $a\in\cF$,
 \item $\cS^\dagger=\cS^{-1}$,
 \item $(A\cdot B)^\dagger=B^\dagger \cdot A^\dagger$ for any $A,B\in\cQ$,
 \item $(A^{-1})^\dagger=(A^\dagger)^{-1}$  for any $A\in\cQ$.
\end{enumerate}
\end{Def}
In particular, We say an operator $H\in \cQ$ is anti-symmetric if $H^{\dagger}=-H$.

For example, we have 
\[
\left( (\cS+a\cS^{-1})\cdot (b-\cS)^{-1}\right)^\dagger=(b-\cS^{-1})^{-1}\cdot 
(\cS^{-1}+\cS(a)\cS),\qquad a,b\in\cF.
\]
For any $A\in \cQ$,  if ${\rm ord}\, A=(p, q)$ then ${\rm ord}\,A^\dagger=(-q, -p)$.   Obviously $\Ord\, A^\dagger=\Ord\, A$.

\subsection{Rational and weakly non-local difference operators}\label{sec23}
In the theory of integrable systems,  the majority  of $1+1$-dimensional integrable equations possess 
weakly nonlocal \cite{MaN01} Nijenhuis recursion operators. For integrable 
differential-difference equations, weakly 
non-local operators are often rational pseudo--differential operators with only a finite
number of nonlocal terms of the form $a (\cS-1)^{-1}b$, where $a, b \in \cF$. 
In 
this section, we answer the questions 
how to write a weakly 
nonlocal operator as a rational operator and when a rational difference 
operator 
is indeed weakly nonlocal. For the
differential case, the answers are given by Lemma $4.5$ in \cite{Carp2017}.

First we give a definition of the full kernel operators. We then prove that for 
such operators, their inverse operators 
are weakly nonlocal.

For a difference operator $A\in\cR$ it is obvious that 
\begin{equation}\label{kerord}
 \mbox{dim}_\cK 
\mbox{Ker}\,A\le \Ord\, A.
\end{equation}
 Indeed, if there is an element $a\in\cF$ 
such that $a\in \mbox{Ker}\,A$, then we can represent $A=\tilde A
(\cS-1) \frac{1}{a},$ where $\Ord\,\tilde A=\Ord\,A-1$. Zero total 
order difference operator is invertible and thus it has a trivial kernel space. 
A difference operator of a non-zero order may also have a trivial kernel in 
$\cF$ as well. For example 
$\mbox{Ker}_\cK(\cS-u)=0$ since equation $\cS(v)=uv$ does not have a solution 
$v\in\cF$. 

\begin{Def}
 We say that a difference operator has a full kernel in $\cF$ (is a full kernel 
operator) if the dimension of its kernel 
over 
the field $\cK$ equals to the total order of the operator.
\end{Def}
In what follows, we show how to construct a full kernel operator given the generators of its kernel and prove
an important property of such operators.
\begin{Pro}\label{profk}
Assume that $f^{(1)}, \cdots, f^{(n)}$ are linearly independent over $\cK$ in $\cF$.
Then there exists a full kernel difference operator $P\in\cR$ such that the $f^{(i)}, i=1, \cdots, n$ span 
$\ker P$.
\end{Pro}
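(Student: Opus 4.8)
The plan is to construct $P$ by induction on $n$, enlarging the kernel one generator at a time while raising the total order by exactly one. The mechanism is the factorization already observed just before the statement: if $a\in\ker A$ then $A=\tilde A\,(\cS-1)\tfrac1a$ with $\Ord\,\tilde A=\Ord\,A-1$. Read in reverse, left--multiplying by a factor of the form $(\cS-1)\tfrac1g$ adjoins one new kernel vector and adds $1$ to the total order, and this is precisely what drives the induction.

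For the base case $n=1$, since $f^{(1)}\ne 0$ I would set $P_1=(\cS-1)\tfrac{1}{f^{(1)}}\in\cR$. Then $P_1(f^{(1)})=(\cS-1)(1)=0$, so $f^{(1)}\in\ker P_1$, and as $\Ord\,P_1=1$ the bound (\ref{kerord}) forces $\dim_\cK\ker P_1\le 1$; hence $\ker P_1=\cK f^{(1)}$ and $P_1$ is a full kernel operator. For the inductive step, suppose $P_{n-1}\in\cR$ has $\Ord\,P_{n-1}=n-1$ and $\ker P_{n-1}=\operatorname{span}_\cK\{f^{(1)},\dots,f^{(n-1)}\}$. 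Because $f^{(1)},\dots,f^{(n)}$ are linearly independent over $\cK$, the element $f^{(n)}$ does not lie in this span, hence $g:=P_{n-1}(f^{(n)})\ne 0$. I would then define
\[
P_n=(\cS-1)\,\frac{1}{g}\,P_{n-1}\ \in\cR .
\]
For $i<n$ one gets $P_n(f^{(i)})=(\cS-1)\bigl(g^{-1}P_{n-1}(f^{(i)})\bigr)=0$ since $f^{(i)}\in\ker P_{n-1}$, while $P_n(f^{(n)})=(\cS-1)\bigl(g^{-1}g\bigr)=(\cS-1)(1)=0$. Thus all $n$ generators lie in $\ker P_n$.

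It remains to pin down the total order. Multiplication by $g^{-1}$ has total order $0$ and $\Ord(\cS-1)=1$, so by additivity of $\Ord$ on $\cR$ (established earlier) we obtain $\Ord\,P_n=1+0+(n-1)=n$. Applying (\ref{kerord}) gives $\dim_\cK\ker P_n\le n$; since $f^{(1)},\dots,f^{(n)}$ are $n$ linearly independent elements of $\ker P_n$, equality holds and $\ker P_n=\operatorname{span}_\cK\{f^{(1)},\dots,f^{(n)}\}$. Setting $P:=P_n$ completes the construction. There is no genuine obstacle here; the only points requiring care are, first, verifying $g\ne 0$ at each stage, which is exactly where the hypothesis of $\cK$--linear independence of the $f^{(i)}$ enters, and second, that the total order lands at precisely $n$ (rather than being merely bounded by it), which is what upgrades $P_n$ from "has these vectors in its kernel" to "full kernel operator."
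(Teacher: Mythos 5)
Your construction is exactly the one the paper uses: the base case $(\cS-1)\tfrac{1}{f^{(1)}}$ and the inductive step $P=(\cS-1)\tfrac{1}{Q(f^{(n)})}Q$, with linear independence guaranteeing $Q(f^{(n)})\neq 0$. Your additional verification that the total order equals $n$ and the appeal to (\ref{kerord}) to conclude the kernel is precisely the span just makes explicit what the paper dismisses with ``clearly,'' so the argument is correct and essentially identical.
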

\begin{proof}
We prove the statement by induction on $n$. If $n=1$, we define
$$P=(\cS-1)\frac{1}{f^{(1)}}.$$
It is clear that ${\Ord}\,P=1$ and its kernel is spanned by $f^{(1)}$. 
Assume that $Q$ is a full kernel operator with $\Ord\,Q=n-1$ and its kernel is spanned by $f^{(1)}, \cdots, f^{(n-1)}$.
Since $f^{(i)}, i=1, \cdots, n$ are linearly independent, we have $Q(f^{(n)})\neq 0$ by construction of $Q$.
We define
$$P=(\cS-1)\frac{1}{Q(f^{(n)})} Q.
$$
Clearly it is the required full kernel operator and its kernel is spanned by $f^{(1)}, \cdots, f^{(n)}$.
\end{proof}
\begin{Rem}
A difference operator $Q\in \cR$ with the full kernel spanned by 
$\cK$--linearly independent elements $f^{(i)}\in\cF, i=1, \cdots, n$, can be 
obtained using the determinant expression
\[
 Q(g)=\det\left(\begin{array}{lcll}
    f^{(1)}&\cdotp&   f^{(1)}& g\\  
     \cS( f^{(1)})&\cdotp&  \cS( f^{(1)})&\cS( g)\\  
    \vdots&\cdots&×\cdotp&\cdotp\\
  \cS^n( f^{(1)})&\cdots&  \cS^n( f^{(1)})&\cS^n( g)   
                \end{array}
\right)\ \mbox{for and $g\in \cF$.}
\]
\end{Rem}

\begin{Pro}\label{prop41}
 The inverse operators of full kernel operators are weakly nonlocal.
\end{Pro}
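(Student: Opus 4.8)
The plan is to prove a sharper statement from which weak nonlocality is immediate: if $A$ is a full kernel operator of total order $n$ with $\ker A$ spanned by $f^{(1)},\dots,f^{(n)}\in\cF$, then
\[
A^{-1}=\sum_{i}\phi_i\,(\cS-1)^{-1}\,\psi_i,\qquad \phi_i\in\ker A,\ \ \psi_i\in\cF .
\]
First I would normalise the order: left-multiplying $A$ by the monomial $\cS^{-p}$ (where $\Ord A=(p,q)$) gives an operator of order $(0,n)$ with the same kernel and the same total order, and $(\cS^{-p}A)^{-1}=A^{-1}\cS^{p}$ is weakly nonlocal if and only if $A^{-1}$ is; so I may assume $\Ord A=(0,n)$.

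The computational engine is the summation-by-parts identity: for $v\in\cF$ and $w=(\cS-1)v=\cS(v)-v$,
\[
(\cS-1)^{-1}\,w\,(\cS-1)^{-1}= v\,(\cS-1)^{-1}-(\cS-1)^{-1}\,\cS(v),
\]
which I would verify by multiplying on the left by $(\cS-1)$ and using $\cS(\cS-1)^{-1}=1+(\cS-1)^{-1}$. The content of this identity is that a \emph{nested} inverse collapses into two single nonlocal terms precisely when the middle coefficient $w$ is a total difference.

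I would then induct on $n$. Using the factorisation following \eqref{kerord} I write $A=\tilde A\,(\cS-1)\tfrac{1}{f^{(1)}}$ with $\Ord\tilde A=n-1$ (Proposition~\ref{profk} gives such a factored form explicitly). For $n=1$ the operator $\tilde A$ has total order $0$, hence after normalisation is a multiplication operator, and $A^{-1}=f^{(1)}(\cS-1)^{-1}\tilde A^{-1}$ is already of the required shape with $\phi_1=f^{(1)}\in\ker A$. For the step, since $(\cS-1)\tfrac1{f^{(1)}}$ has the one-dimensional kernel $\langle f^{(1)}\rangle$, it maps $\ker A$ onto an $(n-1)$-dimensional subspace of $\ker\tilde A$, so by the dimension bound $\tilde A$ is again a full kernel operator and
\[
\ker\tilde A=(\cS-1)\tfrac1{f^{(1)}}(\ker A)=\bigl\langle (\cS-1)(f^{(j)}/f^{(1)})\bigr\rangle_{j=2}^{n}.
\]
Every element of $\ker\tilde A$ is therefore a total difference. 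Applying the induction hypothesis $\tilde A^{-1}=\sum_i g_i(\cS-1)^{-1}h_i$ with $g_i=(\cS-1)v_i\in\ker\tilde A$, each summand of $A^{-1}=f^{(1)}(\cS-1)^{-1}\tilde A^{-1}$ becomes $f^{(1)}(\cS-1)^{-1}(\cS-1)v_i(\cS-1)^{-1}h_i$, and the identity above rewrites it as $f^{(1)}v_i\,(\cS-1)^{-1}h_i-f^{(1)}(\cS-1)^{-1}\cS(v_i)h_i$. Because $v_i=f/f^{(1)}$ for some $f\in\ker A$ one has $f^{(1)}v_i=f\in\ker A$, while $f^{(1)}\in\ker A$, so both new left coefficients lie in $\ker A$; this closes the induction and exhibits $A^{-1}$ as weakly nonlocal.

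The step needing the most care — and the reason I carry the strengthened invariant rather than merely ``weakly nonlocal'' — is the control of the middle coefficients. A naive induction fails, because expressions such as $(\cS-1)^{-1}d$ (the function on the wrong side) and nested terms $a(\cS-1)^{-1}q(\cS-1)^{-1}$ with a \emph{generic} coefficient $q$ are \textbf{not} weakly nonlocal, so one cannot substitute an arbitrary weakly nonlocal form for $\tilde A^{-1}$ and reduce. The invariant ``all left coefficients lie in $\ker A$'' is exactly what forces every middle coefficient produced in the reduction to be a total difference $(\cS-1)v$, the only case in which the summation-by-parts identity applies. The remaining points are routine: the kernel identification $\ker\tilde A=(\cS-1)\tfrac1{f^{(1)}}(\ker A)$ by the dimension count, and the fact that $\cK$-linear combinations of total differences are total differences since $\cS-1$ is $\cK$-linear.
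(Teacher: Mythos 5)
Your proof is correct and follows essentially the same route as the paper's: the same peeling off of a kernel element via the factorisation $A=\tilde A\,(\cS-1)\tfrac{1}{f^{(1)}}$, the same induction on total order, and the same summation-by-parts identity $(\cS-1)^{-1}(\cS(v)-v)(\cS-1)^{-1}=v(\cS-1)^{-1}-(\cS-1)^{-1}\cS(v)$ to collapse the nested nonlocality. The only genuine refinement is your strengthened induction hypothesis (no local part and all left coefficients in $\ker A$, after normalising to order $(0,n)$), which spares you the paper's two extra steps of identifying the left coefficients of $C^{-1}$ as kernel elements a posteriori and of absorbing the residual local term $a(\cS-1)^{-1}E$ via $E^{\dagger}(1)$.
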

\begin{proof} We prove the statement by induced on the total order of such 
operator
$B$. If  $B$ is a full kernel operator with $\Ord B=1 $, it can be written as 
$B=a \cS^i (\cS-1) b$ for some $i\in \mathbb{Z}$. Thus 
$$B^{-1}= \frac{1}{b} \cS^{-i} (\cS-1)^{-1} \frac{1}{a}$$ 
is weakly nonlocal. 

Let $B$ be a full kernel operator with the total order of $n$ and $a\in \ker 
B$. It follows from Proposition \ref{profk} that
there is a full kernel operator $C$ 
with total order of $n-1$ such that
$$B=C (\cS-1)\frac{1}{a}.$$
By the induction assumption, $C^{-1}$ is weakly nonlocal, that is, there exist 
two sets of linearly independent functions $b^{(i)}$ and $c^{(i)}$,  
$i=1,\cdots, 
n-1$
such that
$$C^{-1}=E+\sum_{i=1}^{n-1} b^{(i)}  (\cS-1)^{-1} c^{(i)}, \quad E\in \cR .$$
Multiplying $C$ on its left, we get
$$\sum_{i=1}^{n-1} C(b^{(i)})  (\cS-1)^{-1} c^{(i)}=0 $$
implying $b^{(i)}\in \ker C. $
Note that for any $b^{(i)}\in \ker C$, $i=1,\cdots, n-1$, there exists 
$d^{(i)}$, which is in $\ker B$
such that
$b^{(i)}=(\cS-1)\frac{d^{(i)}}{a}.$ Therefore, we have
\begin{eqnarray*}
 B^{-1}= a  (\cS-1)^{-1} C^{-1}= a  (\cS-1)^{-1} \left(E+\sum_{i=1}^{n-1} 
\left((\cS-1)\frac{d^{(i)}}{a}\right)  
(\cS-1)^{-1} c^{(i)}\right),
\end{eqnarray*}
whose nonlocal terms are 
$$a  (\cS-1)^{-1} E^{\dagger}(1) +\sum_{i=1}^{n-1} \left( d^{(i)} (\cS-1)^{-1} 
c^{(i)}-a (\cS-1)^{-1} 
\frac{c^{(i)} d_1^{(i)}}{a_1} \right), $$
where we used the identity 
$$(\cS-1)^{-1}  (d_1-d)  (\cS-1)^{-1} =d (\cS-1)^{-1} -(\cS-1)^{-1} d_1, \quad 
d\in \cF.$$
This leads to the conclusion that $B^{-1}$ is weakly non-local.
\end{proof}
We are now ready to prove the statement on the relation between the rational 
and 
weakly nonlocal difference operators.
\begin{The}\label{thmrw}
Let $R=A B^{-1}$ be a rational operator with minimal right fractional 
decomposition and ${\Ord B}=n$. Then the following 
three statements are equivalent:
\begin{enumerate}
 \item[{\rm (i)}]  Operator $B$ has a full kernel in $\cF$;
 \item[{\rm (ii)}] Operator $R$ is weakly nonlocal, that is,  $R=L+\sum_{i=1}^n 
p^{(i)} (\cS-1)^{-1} q^{(i)}$, where $L\in\cR$, and $\{p^{(i)}, i=1, \cdots, n\}$ and 
$\{q^{(i)},i=1,\cdots, n\}$ are two linearly independent sets
over $\cK$ in $\cF$;
 \item[{\rm (iii)}] Operator $B^{\dagger}$ has a full kernel in $\cF$.
\end{enumerate}
\end{The}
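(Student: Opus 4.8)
My plan is to prove the cycle (i)$\Rightarrow$(ii)$\Rightarrow$(iii)$\Rightarrow$(i), resting on two elementary facts that I would record first. The first is a splitting/residue lemma: a difference operator $D\in\cR$ lies in $(\cS-1)\cR$ iff $D^\dagger(1)=0$, and dually in $\cR(\cS-1)$ iff $D(1)=0$; hence $\cR=(\cS-1)\cR\oplus\cF=\cR(\cS-1)\oplus\cF$, the $\cF$--components of $D$ being $D^\dagger(1)$ and $D(1)$ respectively. One checks this by solving the telescoping recursion for $D=(\cS-1)E$ (resp. $D=E(\cS-1)$) and reading off the single obstruction. The second fact is the technical engine, a Casoratian lemma: if $a^{(1)},\dots,a^{(m)}\in\cF$ are linearly independent over $\cK$ and $b^{(i)}\in\cF$, then $\sum_i a^{(i)}(\cS-1)^{-1}b^{(i)}\in\cR$ forces all $b^{(i)}=0$ (and symmetrically, independence of the $b^{(i)}$ forces all $a^{(i)}=0$). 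I would prove this by expanding $(\cS-1)^{-1}=\sum_{j\ge1}\cS^{-j}$, requiring the coefficient of every sufficiently negative power of $\cS$ to vanish, and inverting the resulting linear system via nonvanishing of the difference Casoratian $\det[\cS^{\,j}(a^{(i)})]$ of $\cK$--independent elements. I would also note that adjunction preserves weak non-locality and swaps the left and right coefficients, since $(a(\cS-1)^{-1}b)^\dagger=-ba-b(\cS-1)^{-1}a$.

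For (i)$\Rightarrow$(ii): if $B$ has a full kernel, Proposition \ref{prop41} gives $B^{-1}=E+\sum_{i=1}^n b^{(i)}(\cS-1)^{-1}c^{(i)}$ with $n=\Ord B$, the $c^{(i)}$ independent and the left coefficients $b^{(i)}$ forming a basis of $\ker B$. Left--multiplying by $A$ yields the weakly nonlocal form $R=AE+\sum_i (Ab^{(i)})(\cS-1)^{-1}c^{(i)}$, so it only remains to see that the new left coefficients $Ab^{(i)}$ stay independent. A relation $\sum\lambda_i Ab^{(i)}=0$ would place $\sum\lambda_i b^{(i)}$ in $\ker A\cap\ker B$, but any nonzero common kernel element $w$ produces the common right factor $(\cS-1)\frac1w$ of $A$ and $B$, contradicting the right coprimeness guaranteed by minimality of the decomposition (Proposition \ref{pro52}).

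For (ii)$\Rightarrow$(iii): from $RB=A\in\cR$ and the weakly nonlocal form of $R$, I would use the splitting lemma to write $q^{(i)}B=(\cS-1)E_i+B^\dagger(q^{(i)})$, its $\cF$--part being $(q^{(i)}B)^\dagger(1)=B^\dagger(q^{(i)})$, so that the nonlocal content of $RB$ reduces to $\sum_i p^{(i)}(\cS-1)^{-1}B^\dagger(q^{(i)})$. As this must vanish and the $p^{(i)}$ are independent, the Casoratian engine forces $B^\dagger(q^{(i)})=0$ for every $i$; the $n$ independent elements $q^{(i)}$ then lie in $\ker B^\dagger$, and since $\dim_\cK\ker B^\dagger\le\Ord B^\dagger=n$ the operator $B^\dagger$ has a full kernel. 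For (iii)$\Rightarrow$(i), adjunction collapses the argument to a mirror image: $B^\dagger$ full kernel gives, through Proposition \ref{prop41}, that $(B^\dagger)^{-1}=(B^{-1})^\dagger$ and hence $B^{-1}$ is weakly nonlocal with $n$ independent nonlocal terms; applying the splitting lemma on the other side to $BB^{-1}=1$ — right--dividing $Bp^{(i)}=F_i(\cS-1)+B(p^{(i)})$ and running the engine on the independent right coefficients $q^{(i)}$ — forces $B(p^{(i)})=0$, so the $n$ independent $p^{(i)}$ span a full kernel of $B$.

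The main obstacle I anticipate is not any single implication but the bookkeeping underlying all of them: establishing the Casoratian lemma cleanly (i.e. the nonvanishing of the difference Wronskian of $\cK$--independent functions), and, at each step, verifying that the weakly nonlocal forms genuinely carry exactly $n$ nonlocal terms with both coefficient families linearly independent. It is precisely this count, channelled through the identity $\dim_\cK\ker=\Ord$, where the full-kernel hypothesis enters and must be tracked carefully; once the two preliminary lemmas are in place the remaining manipulations are routine.
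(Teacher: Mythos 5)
Your proposal is correct and follows the same route as the paper: the cycle (i)$\Rightarrow$(ii)$\Rightarrow$(iii)$\Rightarrow$(i), with (i)$\Rightarrow$(ii) resting on Proposition \ref{prop41}, (ii)$\Rightarrow$(iii) obtained by right-multiplying the weakly nonlocal form by $B$ and reading off $B^\dagger(q^{(i)})=0$, and (iii)$\Rightarrow$(i) by adjoint duality. Your two preliminary lemmas (the $(\cS-1)$-splitting with remainder $D^\dagger(1)$, and the Casoratian argument killing $\sum a^{(i)}(\cS-1)^{-1}b^{(i)}\in\cR$) are exactly the ingredients the paper uses implicitly, and your coprimeness check that the $p^{(i)}=Ab^{(i)}$ stay independent is a detail the paper defers to Corollary \ref{cork}. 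The one substantive divergence is in (ii)$\Rightarrow$(iii): you close the argument with the direct count $n\le\dim_\cK\ker B^\dagger\le\Ord B^\dagger=n$, whereas the paper additionally constructs a common right multiple $C$ of the operators $\frac{1}{q^{(i)}}(\cS-1)$ and uses minimality of $AB^{-1}$ to force $\Ord C\ge\Ord B$. Your shortcut is legitimate for the theorem as literally stated, since (ii) already fixes the number of independent nonlocal terms at $n=\Ord B$; the paper's extra step is what rules out a weakly nonlocal representation with \emph{fewer} than $\Ord B$ nonlocal terms, which is the content needed for Corollary \ref{cork}. Either way the conclusion stands.
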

\begin{proof} The statement ${\rm (i)}\Rightarrow {\rm (ii)}$ directly follows 
from Proposition \ref{prop41} since the 
multiplication of a difference operator and a weakly nonlocal operator is 
weakly 
nonlocal. 

We now prove that ${\rm (ii)}\Rightarrow {\rm (iii)}$. Knowing
$$R=A B^{-1}=L+\sum_{i=1}^n p^{(i)}(\cS-1)^{-1} q^{(i)}, $$
we multiply it on the right by $B$ and obtain its nonlocal terms
$$
\sum_{i=1}^n p^{(i)}(\cS-1)^{-1} B^{\dagger}(q^{(i)})=0,
$$
which implies that all $q^{(i)}$'s are in the kernel of $B^{\dagger}$ and thus 
$n\leq \dim (\ker B^{\dagger})$.

Let $C$ be a common multiple of the difference operators $\frac{1}{q^{(i)}} 
(\cS-1)$, that is, a difference operator such that for all $i$ there exists a 
difference operator
$M^{(i)}$ satisfying $C=\frac{1}{q^{(i)}} (\cS-1) M^{(i)}$. Thus we have
$$R=L+\sum_{i=1}^n p^{(i)}(\cS-1)^{-1} q^{(i)}=(LC+\sum_{i=1}^n p^{(i)}M^{(i)})C^{-1}. 
$$
Since $AB^{-1}$ is a minimal right fractional decomposition for $R$, there 
exists a difference operator $D$ such that
$$
LC+\sum_{i=1}^n p^{(i)}M^{(i)}=AD \quad \mbox{and} \quad C=BD.
$$
This leads to $\Ord C=n\geq \Ord B$. Note that $\Ord B=\Ord B^{\dagger}$ and 
$\Ord B^{\dagger}\geq \dim (\ker 
B^{\dagger})$. Therefore, we have
$$
\Ord B^{\dagger}=\dim (\ker B^{\dagger})=n
$$
implying that $B^{\dagger}$ has a full kernel spanned by all $q^{(i)}$'s.

Finally we prove that ${\rm (iii)}\Rightarrow {\rm (i)}$. It follows from 
Proposition \ref{prop41} that $B^{\dagger}$ 
is weakly nonlocal. Using the proof of ${\rm (ii)}\Rightarrow {\rm (iii)}$, we 
obtain that statement of ${\rm (i)}$.
\end{proof}
From the proof of Theorem \ref{thmrw}, we are able to specify the nonlocal terms for weakly nonlocal operator.
\begin{Cor}\label{cork} Under the condition of Theorem \ref{thmrw}, for $R=L+\sum_{i=1}^n 
p^{(i)} 
(\cS-1)^{-1} q^{(i)}$, the linearly independent functions $p^{(i)}$'s 
span $A(\ker B)$ and the linearly independent functions
$q^{(i)}$'s span $\ker 
B^\dagger$,  $i=1, 
\cdots, n$.
\end{Cor}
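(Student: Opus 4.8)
The first assertion, that the $q^{(i)}$ span $\ker B^\dagger$, requires no new work: it is exactly what is extracted in the proof of the implication ${\rm (ii)}\Rightarrow{\rm (iii)}$ of Theorem \ref{thmrw}, where multiplying $R$ on the right by $B$ produces the nonlocal contribution $\sum_{i=1}^n p^{(i)}(\cS-1)^{-1}B^\dagger(q^{(i)})$, forces each $B^\dagger(q^{(i)})=0$, and identifies the $n$ linearly independent $q^{(i)}$ with a basis of the $n$-dimensional space $\ker B^\dagger$. The plan is simply to cite that part.

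For the statement on the $p^{(i)}$, the plan is to exploit that, once $B^\dagger(q^{(i)})=0$, each product $q^{(i)}B$ becomes left-divisible by $\cS-1$. Indeed, by the same nonlocal-part computation used in Theorem \ref{thmrw}, $(\cS-1)^{-1}q^{(i)}B$ has vanishing nonlocal part and hence equals a genuine difference operator $F^{(i)}\in\cR$ satisfying $(\cS-1)F^{(i)}=q^{(i)}B$. Substituting into $A=RB$ then gives the purely local identity
\[
A=LB+\sum_{i=1}^n p^{(i)}F^{(i)},\qquad F^{(i)}\in\cR .
\]

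I would then evaluate this identity on an arbitrary $a\in\ker B$. The term $LB$ contributes nothing since $B(a)=0$, so $A(a)=\sum_{i=1}^n F^{(i)}(a)\,p^{(i)}$. The key point that makes this a $\cK$-span rather than merely an $\cF$-span is that $(\cS-1)F^{(i)}(a)=q^{(i)}B(a)=0$, so each $F^{(i)}(a)$ lies in $\ker(\cS-1)=\cK$. Hence $A(a)$ is a $\cK$-linear combination of the $p^{(i)}$, which shows $A(\ker B)\subseteq \mathrm{span}_\cK\{p^{(1)},\dots,p^{(n)}\}$.

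It remains to match dimensions. Because $AB^{-1}$ is a minimal right decomposition, $A$ and $B$ are right coprime, so $\ker A\cap\ker B=0$ (a common nonzero kernel element $a$ would give $(\cS-1)\frac1a$ as a common right factor of positive total order, using \eqref{kerord}); thus $A$ is injective on $\ker B$ and $\dim_\cK A(\ker B)=\dim_\cK\ker B=\Ord B=n$, the last equality being the full-kernel property of $B$ supplied by Theorem \ref{thmrw}. Since the $p^{(i)}$ are linearly independent, $\mathrm{span}_\cK\{p^{(i)}\}$ also has dimension $n$, and the inclusion above is forced to be an equality. The one step needing care is the left-divisibility of $q^{(i)}B$ by $\cS-1$; I would make this rigorous by quoting the nonlocal-reduction already carried out inside the proof of Theorem \ref{thmrw} rather than re-deriving the auxiliary identity $(\cS-1)^{-1}P=E+(\cS-1)^{-1}P^\dagger(1)$ from scratch.
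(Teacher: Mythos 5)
Your proof is correct. The part concerning the $q^{(i)}$'s is, as you say, lifted verbatim from the implication ${\rm (ii)}\Rightarrow{\rm (iii)}$ in the proof of Theorem \ref{thmrw}, and that is also all the paper offers for it. For the $p^{(i)}$'s the paper gives no explicit argument at all -- the corollary is announced as following ``from the proof of Theorem \ref{thmrw}'', and the intended route is presumably to trace through the inductive construction in Proposition \ref{prop41}, where the left factors of the nonlocal terms of $B^{-1}$ are exhibited as elements of $\ker B$ (the $a$ and $d^{(i)}$ there), so that after left multiplication by $A$ the left factors of the nonlocal terms of $R$ land in $A(\ker B)$. Your argument is different and, to my mind, better adapted to the statement as written: by showing that each $q^{(i)}B$ is left-divisible by $\cS-1$ once $B^\dagger(q^{(i)})=0$, you get the purely local identity $A=LB+\sum_i p^{(i)}F^{(i)}$, and evaluating on $a\in\ker B$ together with $F^{(i)}(a)\in\ker(\cS-1)=\cK$ gives the inclusion $A(\ker B)\subseteq\mathrm{span}_\cK\{p^{(i)}\}$ for an \emph{arbitrary} weakly nonlocal representation of $R$, not just the particular one produced by the construction of Proposition \ref{prop41}; the paper's implicit route would still need a uniqueness-of-nonlocal-part argument to cover that. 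Your dimension count (injectivity of $A$ on $\ker B$ from right coprimality via the factorization $A=\tilde A(\cS-1)\frac1a$ of equation \eqref{kerord}'s discussion, plus $\dim_\cK\ker B=\Ord B=n$ from statement {\rm (i)} of the theorem) correctly upgrades the inclusion to an equality. All steps check out.
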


Following from this theorem, we are immediately able to get the statement for the 
inverse of rational operator:
\begin{Cor}\label{inverse}
Let $R=AB^{-1}$ with $A, B\in \cR$. Then $R^{-1}$ is weakly non-local if and 
only if $A$ has a full kernel in $\cF$. 
\end{Cor}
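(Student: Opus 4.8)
The plan is to reduce the statement to Theorem~\ref{thmrw} applied to the inverse operator $R^{-1}$. The starting observation is that inverting a right fractional decomposition again produces a right fractional decomposition: from $R=AB^{-1}$ one immediately gets
\[
R^{-1}=BA^{-1},
\]
in which the \emph{denominator} is now the operator $A$. Since Theorem~\ref{thmrw} characterises weak nonlocality of a rational operator precisely through the full-kernel property of its denominator, the desired equivalence will follow as soon as $BA^{-1}$ is put into the form demanded by that theorem, namely a minimal right fractional decomposition with normalised denominator $A_L=1$.

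The first reduction concerns minimality. By Proposition~\ref{pro52}, the decomposition $R=AB^{-1}$ is minimal exactly when $A$ and $B$ are right coprime, and this property is symmetric in the two operators; hence $BA^{-1}$ is automatically a minimal right fractional decomposition of $R^{-1}$. The second reduction is the normalisation of the denominator. Replacing $A$ by $\tilde A=A\,A_L^{-1}$ and $B$ by $\tilde B=B\,A_L^{-1}$ leaves $R^{-1}=\tilde B\tilde A^{-1}$ unchanged and achieves $\tilde A_L=1$. The point to verify here is that this normalisation does not disturb the full-kernel property: since $A_L$ is an invertible monomial acting bijectively on $\cF$, one has $\ker\tilde A=A_L(\ker A)$, so that $\dim_\cK\ker\tilde A=\dim_\cK\ker A$, while $\Ord\tilde A=\Ord A$. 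Thus $A$ has a full kernel in $\cF$ if and only if $\tilde A$ does.

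With these two reductions Theorem~\ref{thmrw} applies verbatim to $R^{-1}=\tilde B\tilde A^{-1}$, with $\tilde A$ in the role of the denominator: the equivalence ${\rm (i)}\Leftrightarrow{\rm (ii)}$ there says that $R^{-1}$ is weakly nonlocal if and only if $\tilde A$ has a full kernel, which by the previous paragraph is equivalent to $A$ having a full kernel. The step requiring genuine care — and the one I regard as the main obstacle — is the appeal to minimality. If $A=A_0C$ and $B=B_0C$ share a right factor $C$ with $\ker C=0$, then $\ker A=\{f\,|\,C(f)\in\ker A_0\}$ can have dimension strictly below $\Ord A=\Ord A_0+\Ord C$, so $A$ fails to be full kernel even though $R^{-1}$ is weakly nonlocal. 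Consequently the clean equivalence holds only for the minimal (right coprime) decomposition, and the corollary must be read with $A$ and $B$ right coprime; once that is fixed, the remainder is a direct transcription of Theorem~\ref{thmrw} to $R^{-1}$.
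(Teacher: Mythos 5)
Your argument is correct and is exactly the route the paper intends: the paper states this corollary without an explicit proof, as an immediate consequence of Theorem~\ref{thmrw} applied to $R^{-1}=BA^{-1}$, and your two reductions (symmetry of right coprimality under swapping numerator and denominator, and the harmlessness of normalising by the invertible leading monomial $A_L$) supply precisely the details that make that application legitimate. Your closing caveat is also well taken: the equivalence genuinely requires $A$ and $B$ to be right coprime --- for instance $R=CC^{-1}$ with $C=\cS-u$ gives the weakly nonlocal $R^{-1}=\mathrm{id}$ while $A=C$ has trivial kernel and total order $1$ --- so the corollary must indeed be read with the minimality hypothesis of Theorem~\ref{thmrw} carried over.
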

Corollary \ref{cork} combined with Proposition \ref{profk} provides us a method to write a weakly nonlocal operator in 
the 
form of a rational operator $R=AB^{-1}$: We first construct a full kernel operator $B^\dagger$ using $q^{(i)}$'s. Then we 
have $A=RB$. We use such construction for the examples in Section \ref{Sec6}, where
we'll also apply Corollary \ref{inverse} to the recursion operators of 
integrable differential-difference 
equations to see whether their inverse operators are weakly nonlocal or not. If 
so, we are going to compute the 
seeds for symmetry and co--symmetry hierarchies (its nonlocal terms), that is, 
the $p^{(i)}$'s and $q^{(i)}$'s for $R^{-1}$ in the above theorem.

\subsection{Matrix difference and rational pseudo--difference operators}

We recall here some facts from linear algebra over  non-commutative rings and 
skew fields, which is a specialisation of the  
general theory \cite{Artin57, draxl_1983} to the case of difference 
algebra (the ring 
$\cR$ and skew field  $\cQ$). We denote $\cM_n(\cR)$ and $\cM_n(\cQ)$ the rings 
of $n\times n$ matrices over the ring $\cR$ and skew field $\cQ$ respectively. 
Since $\cR$ is a  
principal ideal ring, then the ring $\cM_n(\cR)$ is also a  principal ideal ring
(see proof in \cite{MR}, as well a short and useful review of 
non-commutative 
principal ideal rings one can find in \cite{CDSK2013}).

 Let $\cA_i$ denotes 
the $i$--th row of the matrix $\cA$ and $\cA_{i,j}$ denotes the $(i,j)$ entry 
of $\cA$. For $1\le i\ne j\le n$ and arbitrary $B\in\cR$ (or $B\in\cQ$) the 
$\cR$--elementary (resp. $\cQ$--elementary) row operation  $\tau_{i,j}(B)$ 
changes the row 
$\cA_i\mapsto \cA_i +B\cdot \cA_j$ and 
leaves the other rows unchanged. Transformation $\tau_{i,j}(B)$ is invertible 
$\tau_{i,j}(B)\tau_{i,j}(-B)={\rm Id}$ and can be represented by a 
multiplication from the left by the 
 matrix $ \tau_{i,j}(B)=I+B E_{i,j}$, where $I$ is the unit 
 matrix and $E_{i,j}$ is the matrix with the  $(i,j)$ 
entry equal to $1$ and zero elsewhere. Note that the transformation 
$\sigma_{i,j}=\tau_{i,j}(1) 
\tau_{j,i}(-1)\tau_{i,j}(1)$ replaces $\cA_i$ by $\cA_j$ and $\cA_j$ by 
$-\cA_i$, leaving other rows unchanged.

$\cR$--elementary row operations 
generate  a group $\cE_n(\cR)$, which is a subgroup of the group $GL_n(\cR)$ of 
invertible matrix difference operators. Respectively, $\cQ$--elementary row 
operations 
generate  a group $\cE_n(\cQ)$,  a subgroup of the group $GL_n(\cQ)$ of 
invertible matrix pseudo--difference operators. 
\begin{Lem}\label{lem1}
Let $\cA\in\cM_n(\cR)$. Then there exist two invertible matrices $\mathcal{U}$ and $\mathcal{V}$ such that $\mathcal{U}\cA \mathcal{V}$ is diagonal.
\end{Lem}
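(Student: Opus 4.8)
The plan is to prove a Smith-type normal form for matrices over the Euclidean domain $\cR$, using the elementary row and column operations $\tau_{i,j}(B)$ introduced just above the statement. The overall strategy mimics the classical algorithm over a commutative Euclidean domain, but with careful attention to the non-commutativity of $\cR$: left-multiplication by elementary matrices realizes row operations, and right-multiplication realizes column operations. Since $\cR$ is both right and left Euclidean (Proposition on the Euclidean property) and its elements have a total-order function $\Ord$ that is additive under multiplication, I can use $\Ord$ as the size measure driving the reduction.

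First I would handle the trivial case $\cA=0$ and otherwise reduce to a single-pivot step. Among all nonzero entries of $\cA$, pick one of minimal total order and, using the permutation-type operations $\sigma_{i,j}$ (and their column analogues) built from the $\tau$'s, move it into the $(1,1)$ position. Now I want to clear the rest of the first row and first column. For an entry $\cA_{i,1}$ below the pivot, right Euclidean division gives $\cA_{i,1}=\cA_{1,1}\cdot Q+R$ with $R=0$ or $\Ord R<\Ord\cA_{1,1}$; the row operation $\tau_{i,1}(-Q)$ replaces $\cA_{i,1}$ by $R$. Symmetrically, to clear an entry $\cA_{1,j}$ in the first row I use left division and a column operation. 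If any remainder $R$ is nonzero, it has strictly smaller total order than the current pivot, so I swap it into the $(1,1)$ slot and repeat. Because $\Ord$ takes values in a well-ordered set $\bbbz_{\ge 0}\cup\{\infty\}$, this inner loop terminates: after finitely many steps the entire first row and first column are zero except for the pivot $\cA_{1,1}$.

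Once the first row and column are cleared, the matrix has a block form $\mathrm{diag}(\cA_{1,1},\,\cA')$ with $\cA'\in\cM_{n-1}(\cR)$, and I would finish by induction on $n$, applying the same procedure to $\cA'$ with elementary operations that fix the first row and column. Composing all the elementary row operations into $\mathcal{U}$ and all the column operations into $\mathcal{V}$ (each a product of invertible $\tau_{i,j}(B)$ matrices, hence invertible in $\cM_n(\cR)$) yields $\mathcal{U}\cA\mathcal{V}$ diagonal, which is exactly the claim. Note that I do not need to arrange any divisibility chain among the diagonal entries, since the statement only asserts diagonalizability, not the full Smith/invariant-factor form; this simplifies the argument considerably.

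The main obstacle I anticipate is the bookkeeping of non-commutativity: I must consistently use \emph{right} division when clearing a column (so that the correct-sided quotient $Q$ makes $\tau_{i,1}(-Q)$ act as intended) and \emph{left} division when clearing a row, and verify that clearing the column does not reintroduce nonzero entries into the already-cleared row before the pivot stabilizes. The clean way to manage this is to prove termination via the strictly decreasing pivot order rather than trying to clear row and column simultaneously in one pass: each time a smaller-order remainder appears anywhere in the first row or column, promote it to the pivot and restart the clearing of that row and column. Since $\Ord\cA_{1,1}$ strictly decreases at every promotion and is bounded below by $0$, only finitely many promotions occur, after which every off-pivot entry in the first row and column divides cleanly to zero. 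Everything else is the standard inductive descent to the $(n-1)\times(n-1)$ block.
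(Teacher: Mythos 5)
Your proposal is correct and follows essentially the same route as the paper: both arguments rest on the Euclidean property of $\cR$ with the total order $\Ord$ as the size measure, clear the first row and column via elementary operations, and conclude by induction on $n$. The only difference is presentational — the paper selects an orbit representative minimizing $\Ord$ of the $(1,1)$ entry and derives the divisibility of the first row and column by contradiction, while you run the explicit division-and-promotion loop and prove termination by the strict decrease of the pivot's total order; these are the extremal and algorithmic phrasings of the same descent.
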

\begin{proof}
Let $\mathcal{N}$ be an element of the set $E=\{ \mathcal{U}\cA\mathcal{V} |  \mathcal{U}, \mathcal{V} \in \cE_n(\cR)\}$ such that for all $\cM \in E$, either $\cM_{11}=0$ or $\Ord \, \mathcal{N}_{11} \leq \Ord \, \cM_{11}$. 
We claim that all entries in the first column of $\mathcal{N}$ are divisible on the right by $\mathcal{N}_{11}$. Otherwise, using elementary row operations which amounts to multiply $\mathcal{N}$ on the left by an invertible matrix, 
one can find $\cM\in E$ such that $\cM_{11} \neq 0$ and $\Ord \, \cM_{11}< \Ord \, \mathcal{N}_{11}$, which contradicts 
the definition of $\mathcal{N}$. Similarly, $\mathcal{N}_{11}$ must divide all the entries of the first row of 
$\mathcal{N}$ on the left. Therefore, there exist invertible matrix difference operators $\mathcal{U}$ and $\mathcal{V}$ 
such that $\mathcal{U}\mathcal{N}\mathcal{V}$ has only zero entries in its first row and first column, apart from the 
first coefficient which is $\mathcal{N}_{11}$. We conclude by induction on $n$.
\end{proof}
\begin{Pro}\label{echelon}
 Let $\cA\in\cM_n(\cR)$. Then  it can be 
brought to 
a upper triangular form $ \cA^\vartriangle$ with  
$\cA_{i,j}^\vartriangle=0$ for $ i>j$ by $\cR$--elementary row operations and
\[
 \cA^\vartriangle=\cG\cA,\quad \cG\in \cE_n(\cR).
\]
\end{Pro}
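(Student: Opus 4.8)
The plan is to mimic the proof of Lemma~\ref{lem1}, but using only $\cR$--elementary \emph{row} operations and only clearing the entries below the diagonal. I would argue by induction on $n$, reducing the $n\times n$ problem to an $(n-1)\times(n-1)$ one once the first column has been cleared below its top entry; for $n=1$ there is nothing to prove. The single algebraic ingredient is that $\cR$ is a left Euclidean domain: given $a,b\in\cR$ with $b\ne 0$ there are $q,r\in\cR$ with $a=qb+r$ and $r=0$ or $\Ord\,r<\Ord\,b$. This is exactly the division needed, because $\tau_{i,j}(B)$ changes the first--column entry by $\cA_{i,1}\mapsto \cA_{i,1}+B\,\cA_{j,1}$, i.e. it adds a \emph{left} multiple of the pivot entry to it.

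Next I would spell out the clearing of the first column. If the whole first column vanishes there is nothing to do; otherwise, among its nonzero entries I pick one of minimal total order and move it to the $(1,1)$ slot by a swap $\sigma_{1,i}$, which lies in $\cE_n(\cR)$ as a product of $\tau$'s. Then for each $i\ge 2$ with $\cA_{i,1}\ne 0$ I use the left division $\cA_{i,1}=Q_i\,\cA_{1,1}+R_i$ and apply $\tau_{i,1}(-Q_i)$, which replaces $\cA_{i,1}$ by $R_i$ with $R_i=0$ or $\Ord\,R_i<\Ord\,\cA_{1,1}$. If every $R_i$ is zero the column is cleared; otherwise the minimal total order among the nonzero entries of the first column has strictly dropped, so repeating the procedure terminates because this minimal order is a nonnegative integer. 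In particular, once the pivot reaches total order $0$ it is a monomial, hence invertible, and all remainders are forced to be $0$, which finishes the column in a single pass.

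Finally, after the first column has zeros in rows $2,\dots,n$, I pass to the $(n-1)\times(n-1)$ submatrix occupying rows and columns $2,\dots,n$ and invoke the induction hypothesis: it is brought to upper triangular form by some $\cG'\in\cE_{n-1}(\cR)$. Lifting $\cG'$ to the element of $\cE_n(\cR)$ acting only on rows $2,\dots,n$, these operations are combinations of rows that already carry a zero in the first column, so they leave the cleared first column intact, and the full matrix becomes upper triangular. Composing all the elementary matrices used produces the single $\cG\in\cE_n(\cR)$ with $\cA^\vartriangle=\cG\cA$. The only point that needs care --- and the natural place to slip --- is matching the side of the Euclidean division to the side on which $\tau_{i,j}(B)$ multiplies the pivot (left division for these left row operations), together with the bookkeeping guaranteeing that the recursive step never refills an already-cleared subdiagonal entry; the termination of the single--column Euclidean loop is then routine.
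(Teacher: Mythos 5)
Your proof is correct and follows essentially the same route as the paper: induction on $n$ combined with left Euclidean division implemented by the row operations $\tau_{i,j}(B)$ (and swaps $\sigma_{i,j}$) to annihilate the sub-diagonal entries of the first column, with termination guaranteed by the strict decrease of the minimal total order of the pivot. The paper organizes the induction slightly differently (triangularizing the first $n-1$ rows first and then clearing the $(n,1)$ entry pairwise against $\cA_{1,1}$), but the substance — matching the side of the division to the side on which $\tau_{i,j}(B)$ acts, and the order-decreasing loop — is identical to yours.
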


\begin{proof} We prove the claim by induction on $n$. If $n=1$, the matrix is 
already in the form required. Now we assume that any matrix from 
$\cM_{n-1}(\cR)$ can be brought to a upper triangular form by $\cR$--elementary 
row transformation. Therefore the first $n-1$ rows of matrix $\cA$ can be 
brought 
to the upper triangular form. 

(i) If  $\cA_{n,1}=0$, then deleting the first row and first column  
of $\cA$ we reduce 
the problem to the case $\cM_{n-1}(\cR)$ and we are done 
due to the induction conjecture.

(ii) If $\cA_{1,1}=0$, then we use transformation $\sigma_{1,n}$ to reduce 
the 
problem to the  case (i). 

(iii) The remaining case is $ \cA_{1,1}\ne 0,\ 
\cA_{n,1}\ne 0$. Let $\Ord \, \cA_{n,1}\le \Ord \, \cA_{1,1}$ (if otherwise, we 
can swap the rows by the transformation $\sigma_{1,n}$), then there are exist
$B,R\in \cR$ such that $\cA_{1,1}=B\cdot \cA_{n,1}+R$ and either $R=0$ or 
$\Ord\, R<\Ord \, \cA_{n,1}$ and we apply the transformation $\tau_{1,n}(-B)$ 
replacing $\cA_1$ by $\hat\cA_1=\cA_1-B\cA_n$.
If $R=0$, then the updated row $\hat \cA_{1}$ has zero entry $\hat \cA_{1,1}=0$ 
and we are done (ii), or $\Ord \, \hat \cA_{1,1}< \Ord \,  \cA_{n,1}$ 
and we use $\sigma_{1,n}$ to swap the rows. Iterating 
this procedure we can vanish the entry $(n,1)$,  reducing 
the problem to the case (i).
\end{proof}

The ring $\cM_{n}(\cR), n\ge 2$ has zero divisors. The multiplicative monoid of
regular elements, i.e. the elements which are not zero divisors, we will denote 
$\cM_{n}^\times(\cR)$. A difference matrix operator  $\cA$ is \textit{regular}, if and 
only if  its upper triangular form 
$\cA^\vartriangle$ is regular, i.e. $\cA^\vartriangle$ has a non-zero product 
of its diagonal elements 
\begin{equation}\label{delepta}
  \delta\epsilon\tau (\cA^\vartriangle ):=\cA_{1,1}^\vartriangle \cdot 
\cA_{2,2}^\vartriangle\cdots \cA_{n,n}^\vartriangle\ne 0.
\end{equation}

\begin{Def}\label{rordm} The total order of a  matrix difference 
operator 
$\cA\in 
\cM_{n}(\cR)$ is defined as the sum of total orders of the 
diagonal entries of a corresponding upper triangular operator 
$\cA^\vartriangle$
\[
 \Ord\, \cA=\sum_{i=1}^n \Ord\, \cA^\vartriangle_{i,i}=\Ord\, \delta\epsilon\tau 
(\cA ) .
\]
\end{Def}

\begin{Pro}\label{invert}
 A difference matrix operator $\cA$ is invertible in $\cM_{n}(\cR)$  (i.e. 
$\cA^{-1}\in 
\cM_{n}(\cR)$ and thus
$\cA\in GL_n(\cR)$), if and only if $\Ord\, \cA=0$.
\end{Pro}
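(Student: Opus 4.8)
The plan is to reduce the whole equivalence to the diagonal entries of an upper triangular form together with the scalar fact that an element of $\cR$ is invertible precisely when its total order is zero. First I would fix, by Proposition \ref{echelon}, an invertible $\cG\in\cE_n(\cR)\subseteq GL_n(\cR)$ with $\cA^\vartriangle=\cG\cA$ upper triangular. Since $\cG$ and $\cG^{-1}$ both lie in $\cM_n(\cR)$, the operator $\cA$ is invertible in $\cM_n(\cR)$ if and only if $\cA^\vartriangle$ is, and by Definition \ref{rordm} we have $\Ord\,\cA=\sum_{i=1}^n\Ord\,\cA^\vartriangle_{i,i}$, which depends only on the diagonal of $\cA^\vartriangle$. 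Thus the statement reduces to an equivalence about a single upper triangular matrix.

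Next I would record the scalar facts: the total order of a nonzero difference operator is a nonnegative integer while $\Ord\,0=\{\infty\}$, and $\cR^\times=\{a\cS^k\}$ is exactly the set of operators of total order $0$; moreover the additivity $\Ord(PQ)=\Ord P+\Ord Q$ on $\cR$ shows that any one-sided inverse in $\cR$ is automatically two-sided. Consequently $\Ord\,\cA=0$ holds if and only if every diagonal entry $\cA^\vartriangle_{i,i}$ is a nonzero operator of total order $0$, i.e. if and only if every $\cA^\vartriangle_{i,i}$ is invertible in $\cR$. If some diagonal entry vanishes, the sum is $\infty$, so $\Ord\,\cA\neq0$, and $\cA^\vartriangle$, being a zero divisor, is not invertible; both sides then fail consistently.

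It therefore remains to prove the key lemma: an upper triangular $T\in\cM_n(\cR)$ with nonzero diagonal is invertible in $\cM_n(\cR)$ if and only if each diagonal entry $T_{i,i}$ is invertible in $\cR$. I would argue by induction on $n$, writing $T=\left(\begin{smallmatrix}T_{1,1}&r\\0&S\end{smallmatrix}\right)$ with $S\in\cM_{n-1}(\cR)$ upper triangular. For the direction ``diagonal invertible $\Rightarrow$ $T$ invertible'', the matrix $\left(\begin{smallmatrix}T_{1,1}^{-1}&-T_{1,1}^{-1}rS^{-1}\\0&S^{-1}\end{smallmatrix}\right)$ is a two-sided inverse, with $S^{-1}$ supplied by the induction hypothesis. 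For the converse, writing the inverse in blocks as $U=\left(\begin{smallmatrix}\alpha&\beta\\\gamma&\delta\end{smallmatrix}\right)$ and expanding $TU=UT=I$, the $(1,1)$ entry of $UT$ gives $\alpha T_{1,1}=1$, so $T_{1,1}\in\cR^\times$ by the one-sided-inverse remark; the $(2,1)$ block of $UT$ gives $\gamma T_{1,1}=0$, hence $\gamma=0$ since $\cR$ is a domain and $T_{1,1}\neq0$; the $(2,2)$ blocks then read $S\delta=\delta S=I_{n-1}$, so $S$ is invertible and the induction hypothesis handles the remaining diagonal entries.

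Assembling these steps yields the chain $\Ord\,\cA=0\iff$ every $\cA^\vartriangle_{i,i}\in\cR^\times\iff\cA^\vartriangle$ invertible $\iff\cA$ invertible, which is exactly the claim. The main obstacle is the non-commutativity in the converse of the triangular lemma: the block relations $\alpha T_{1,1}=1$ and $S\delta=I_{n-1}$ are only \emph{one-sided} a priori, and it is precisely the additivity of $\Ord$ in the domain $\cR$, converting one-sided inverses into genuine two-sided ones, that makes the argument close.
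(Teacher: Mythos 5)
Your proof is correct and follows essentially the same route as the paper: reduce to the upper triangular form $\cA^\vartriangle=\cG\cA$ via Proposition \ref{echelon} and show by induction on $n$ that such a matrix is invertible in $\cM_n(\cR)$ precisely when its diagonal entries are invertible in $\cR$, i.e.\ of total order zero. Your block-matrix argument for the converse (using that $\Ord$ is additive on the domain $\cR$, so one-sided inverses are two-sided) is in fact more detailed than the paper's treatment of necessity, which is dispatched with a one-line remark about diagonal matrices.
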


\begin{proof} If  $\Ord\, \cA=0$, then all entries on the diagonal part 
$\cA^\vartriangle_d:={\rm 
diag}((\cA_{1,1}^\vartriangle) ,\ldots, (\cA_{n,n}^\vartriangle))$ 
of $\cA^\vartriangle$ have total order zero and thus invertible. 
Multiplying $\cA^\vartriangle$ on the left by matrix 
$(\cA^\vartriangle_d)^{-1}$ we obtain an upper triangular matrix 
$\tilde{\cA}^\vartriangle= (\cA^\vartriangle_d)^{-1}\cG\cA$ with the unit 
matrix on the diagonal. By induction on $n$ it is easy to show that there is a 
composition of $\cR$--elementary row transformations $\tilde\cG$ such that 
$\tilde\cG \tilde{\cA}^\vartriangle=I$. If $n=1$ it is nothing to do. We 
assume the existence of the inverse matrix in $\cM_{n-1}(\cR)$. The entries 
$\tilde{\cA}^\vartriangle_{k,n},\ k=1,\ldots,n-1$ of last column can be set to 
zero by the transformation 
$\prod_{k=1}^{n-1}\tau_{k,n}(-\tilde{\cA}^\vartriangle_{k,n}) \cdot 
\tilde{\cA}^\vartriangle$ which reduces the problem to the case in 
$\cM_{n-1}(\cR)$. The necessity is obvious from the consideration of a diagonal 
matrix
$\cA$.
\end{proof}
\begin{Ex}
Let us consider the following matrix difference operator
\begin{equation}\label{exampla}
 \cA=\left(\begin{array}{cc}
1&\cS^2\\a_{-1}\cS^{-1}&a_1\cS            ×
           \end{array}\right),
\end{equation}
where $a\in\cF,\ a_k=\cS^k(a)$ and $a_i\ne a_{j}$ if $i\ne j$. Transformation 
$\cA\mapsto\cA^\vartriangle=\tau_{2,1}(-a_{-1}\cS^{-1})\cA$ brings $\cA$ in the 
upper triangular form and
$ \delta\epsilon\tau (\cA^\vartriangle )=(a_1-a_{-1})\cS$. Thus $\Ord \, \cA=0$ 
and there exists the inverse matrix difference operator. Indeed,
 \[
 \cA^{-1}=\left(\begin{array}{ccc}
\dfrac{a_2}{a_2-a}&\ &-\dfrac{1}{a_2-a}\cS\\& &\\
-\dfrac{a_{-2}}{a-a_{-2}}\cS^{-2}&\ &\dfrac{1}{a-a_{-2}}\cS^{-1}          ×
           \end{array}\right).
\]
If we use a different sequence of elementary row transformations 
\[ \cA\mapsto\tilde 
\cA^\vartriangle=\tau_{1,2}(-\frac{1}{a_{2}}\cS)\tau_{2,1}(-\frac{a_1 
a_{-1}}{a_1-a_{-1}}\cS^{ -1 } )\cA,\] 
which also brings the difference matrix operator $\cA $ 
to a upper triangular form, then $ \delta\epsilon\tau (\tilde\cA^\vartriangle 
)=\frac{a_1(a_2-a)}{a_2}\cS$, but the total order of $\cA$ does not depend on 
the choice of the sequence $\Ord\, A=\Ord\, \delta\epsilon\tau 
(\tilde\cA^\vartriangle 
)=0$ (see below).
\end{Ex}

The correctness of  Definition \ref{rordm}, i.e. the independence of 
$\Ord\,\cA$ 
from the choice of row transformations, can be justified by the theory of 
Dieudonn\'e determinants ($\Delta_n$) (in the case of skew polynomial rings 
it has been discussed in \cite{Taelman}). 
The above definition of total order for matrix difference operators is a 
restriction of the map $\Ord\,\Delta_n\,:\cM_n(\cQ)\mapsto \bbbz\cup\{\infty\}$ 
to $\Ord\,\Delta_n\,:\cM_n(\cR)\mapsto \bbbz_{\ge 0}\cup\{\infty\}$.  This 
observation results in a simpler way to compute the total order of matrix 
difference operators by treating them as elements of $\cM_n(\cQ)$.

Dieudonn\'e 
determinant $\Delta_n$ is defined for matrices with entries in an arbitrary 
skew field $\bbbk$ (see 
\cite{Artin57, draxl_1983, Dieu}). In our case the skew field is 
 $\bbbk=\cQ$ and we are dealing with matrix rational pseudo-difference operators
$\cM_n(\cQ)$, but what is presented below is equally applicable to 
rational pseudo--differential operators or any skew field of fraction of a left 
principal 
ideal domain. The Dieudonn\'e 
determinant is a map from $\cM_{n}(\cQ)$ to $\bar \cQ=\cQ^\times\diagup 
\cQ^{ 
(1)}$ or zero, where  $\cQ^\times$ is a multiplicative group of nonzero 
elements 
of $\cQ$, and $\cQ^{(1)}$ denotes the commutator subgroup 
$\cQ^{(1)}=[\cQ^\times,\cQ^\times]\subset \cQ^\times$, which is a 
normal. The 
group $\cQ^{ (1)}$ is generated by elements of the form 
$ABA^{-1}B^{-1},\ A,B\in\cQ^\times$. The quotient 
group $\bar \cQ$ is commutative and its elements are cosets 
$A\cQ^{ (1)},\ A\in \cQ^{\times}$. There is a natural projection $\pi\, :
\, \cQ^\times\mapsto \bar\cQ$ given by $\pi (A)=\bar A:=A\cQ^{ (1)}$ for 
any 
$A\in \cQ^\times$.

Dieudonn\'e has shown that $\cE_n(\cQ)$ is a normal subgroup  of
$  GL_n(\cQ)$ and  that there is a  
group isomorphism $\Delta_n: GL_n(\cQ)\diagup \cE_n(\cQ)\mapsto \bar \cQ$ 
given by a map $\Delta_n$ (Theorem 1. in \cite{Dieu}), which is now called the 
Dieudonn\'e determinant. The function $\Delta_n : \cM_n(\cQ)\mapsto\bar\cQ$ is: 
\begin{enumerate}
 \item multiplicative: $\Delta_n (\cA\cB)=\Delta_n(\cA)\Delta_n(\cB)$;
\item if $\cA\in\cE_n(\cQ)$, then $\Delta_n(\cA)=\bar 1$;
\item if $\cA'$ is obtained from $\cA$
by multiplying one row of $\cA$
on the left by $B\in \cQ$, then
\[
 \Delta_n \cA'=\bar B\,\cdot\,\Delta_n\cA;
\]
\item if  matrix $\cA$  is degenerate (i.e. one row is a left $\cQ$--linear 
combination of other rows), then $\Delta_n(\cA)=0$.
\end{enumerate}
In order to find $\Delta_n \cA$ for $\cA\in\cM_n(\cQ)$ one can use the 
algorithm given by Dieudonn\'e \cite{Dieu} (see also \S 1, Ch. IV 
\cite{Artin57}), or use the Bruhat normal form approach (\S 20, Part III,  
\cite{draxl_1983}). A simple way to find the Dieudonn\'e  determinant of a 
matrix $\cA\in\cM_n(\cQ)$ is to use 
a composition of $\cQ$--elementary row transformations in order to bring the 
matrix $\cA$ to a upper triangular form
$\cA^\vartriangle=\cG\cA,\ \cG\in\cE_n(\cQ)$, then multiply the diagonal 
entries of $\cA^\vartriangle$ (in an arbitrary order) and apply the projection 
$\pi$ to the result
\[
 \Delta_n(\cA)=\pi(\prod_{k=1}^n \cA^\vartriangle_{k,k}).
\]
It follows from \cite{Dieu} that  $\Delta_n(\cA)$ does not depend on the choice 
of elementary row transformations, neither on the order in the product of  
diagonal elements of $\cA^\vartriangle$.

It follows from Definition 
\ref{deford} and (\ref{ordF}) that $\Ord P=0$ for any $P\in \cQ^{ (1)}$, 
thus function $\Ord $ has a constant value on a coset and the map 
\[
 \Ord\, :\, \bar \cQ \ \mapsto\ \bbbz 
\]
is defined correctly.
\begin{Def}\label{qordm}
 The total order of a matrix rational operator $\cA\in\cM_n(\cQ)$ is 
\[
 \Ord\,\cA:=\Ord\,\Delta_n(\cA).
\]
\end{Def}

In the case of difference operators $\cA\in \cM_n(\cR)$ we have defined a 
function $\delta\epsilon\tau(\cA^\vartriangle)\in \cR$ (\ref{delepta}). 
Although the value of 
this function depends on the choice of $\cR$--elementary row transformations, 
but its natural projection to $\bar\cQ$ does not, since it coincides with the 
Dieudonn\'e determinant
\[
 \pi (\delta\epsilon\tau(\cA^\vartriangle))=\Delta_n(\cA).
\]

This restriction of the total order definition to the ring of matrix 
difference operators together with Proposition \ref{invert} results in the 
exact sequence of monoid homomorphisms (similar to Theorem 1.1 in 
\cite{Taelman}):
\[
 1\ \longmapsto\ GL_n(\cR)\ \longmapsto\ \cM_n(\cR)\ \longmapsto\ \bbbz_{\ge 
0}\cup\{\infty\}\ \longmapsto\ 0\ .
\]
Definition \ref{rordm} is a way to define the total order of a matrix 
difference operator, bypassing the  skew field of  rational 
operators, its quotient group $\bar \cQ$ and the theory of Dieudonn\'e 
determinants. 

Note that the Dieudonn\'e determinant and the total order of a matrix 
(rational) difference operator and the transposed matrix operator may not 
coincide. In the above example 
(\ref{exampla}): 
\[
 \Delta_2(\cA)=\pi((a_1-a_{-1})\cS),\qquad 
\Delta_2(\cA^{tr})=0.
\]

A formally conjugated  matrix 
(rational) difference operator has a 
usual definition, i.e. the corresponding matrix is transposed and each entry is 
formally conjugated: $(\cA^\dagger)_{i,j}=(\cA_{j,i})^\dagger$. For formally 
conjugated operators we have 
$\Delta_n(\cA^\dagger)=(\Delta_n \cA)^\dagger$ and therefore $\Ord 
\cA^\dagger=\Ord \cA$.

There are many ways to represent a matrix rational operator as a 
ratio 
of matrix difference operators. For example any $\cL\in\cM_n(\cQ)$ can be represented 
as
\[
 \cL=\hat \cA\cdot \cD^{-1}=\tilde\cA\cdot M^{-1},\quad \cD=\mbox{diag} 
(M_1,\ldots,M_n),\ M_k,M\in\cR\setminus\{0\}.
\]
Indeed, the entries  $\cL_{i,j}\in\cQ$ and thus 
$\cL_{i,j}=A_{i,j}B_{i,j}^{-1},\ A_{i,j},B_{i,j}\in\cR$. Since ring $\cR$ 
satisfies the Ore property (Proposition \ref{Oreprop}) there exist the least 
right common multiplier  $M_i$  of elements $B_{1,i},\ldots B_{n,i}$ and 
therefore there exist $P_{1,i},\ldots P_{n,i}\in \cR$ such that 
$M_i=B_{1,i}P_{1,i}=\cdots=B_{n,i}P_{n,i}$. Taking 
$\hat\cA_{i,j}=\cL_{i,j}P_{i,j}$ we obtain the first representation. 
Let $M$ be 
the least right common multiplier of $M_1,\ldots,M_n$ and therefore there exist 
$Q_1,\dots Q_n\in\cR$  such that $M=M_1 Q_1=\cdots =M_nQ_n$, then $\tilde A=\hat 
A\cdot \mbox{diag}(Q_1,\ldots,Q_n)$.

Since the ring of difference operators $\cR$ is principal ideal domain, the 
ring of matrices $\cM_n(\cR)$ satisfies the left and right Ore 
property (see proof in \cite{MR})
and thus
\begin{equation*}
\begin{split}
\mathcal{M}_{n}(\cQ) =&\{\cA\cB^{-1}|(\cA,\cB) \in 
\mathcal{M}_{n}(\cR) \times 
\mathcal{M}_{n}^{\times}(\cR)\} \\
 =&\{\cB^{-1}\cA| (\cA,\cB) \in \mathcal{M}_{n}(\cR) \times 
\mathcal{M}_{n}^{\times}(\cR)\} 
 \end{split}
\end{equation*}
A representation of matrix rational operators as right (left) 
fractions is not unique. However, once we clear the 
common right (resp. left) divisors, we get a 
minimal fraction, in the following sense:
\begin{The}
For any $\cL \in \mathcal{M}_{n}(\cQ)$ there is a minimal right 
(resp. left) decomposition $\cL=\cA\cB^{-1}$ (resp. $\cL=\hat\cB^{-1}\hat\cA$) 
with $\cA,\cB$ right 
(resp.  $\hat \cA,\hat \cB$ left) coprime. Any other right  
decomposition  $\cL=\cA_1\cB_1^{-1}$ (resp. left 
decomposition $\cL=\hat\cA_1^{-1}\hat\cB_1$) 
is of the  from $\cA_1=\cA\cdot \cC,\ \cB_1=\cB\cdot \cC$ (resp. 
$\hat\cA_1=\cC\cdot \hat\cA,\ \cB_1=\cC\cdot\hat \cB$), where 
$\cC\in\cM_n^\times(\cR)$. Moreover $\Ord\, \cB=\Ord\,\hat\cB$ and is minimal 
possible  among all decompositions.
\end{The}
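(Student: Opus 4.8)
The plan is to mirror the scalar argument of Proposition \ref{pro52}, replacing $\cR$ by the principal ideal ring $\cM_n(\cR)$ and the order function by $\Ord=\Ord\circ\Delta_n$. First I would attach to $\cL$ the set
\[
 \cJ_R=\{\mathcal{X}\in\cM_n(\cR)\,|\,\cL\mathcal{X}\in\cM_n(\cR)\},
\]
which is a right ideal: it is closed under addition, and if $\cL\mathcal{X}\in\cM_n(\cR)$ then $\cL(\mathcal{X}\mathcal{Y})=(\cL\mathcal{X})\mathcal{Y}\in\cM_n(\cR)$ for every $\mathcal{Y}\in\cM_n(\cR)$. Since $\cM_n(\cR)$ is a principal ideal ring, $\cJ_R=\cB\cdot\cM_n(\cR)$ for some $\cB$. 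Writing $\cL=\cP\cQ^{-1}$ with $\cQ\in\cM_n^\times(\cR)$ (a representation guaranteed by the Ore property for $\cM_n(\cR)$) shows $\cQ\in\cJ_R$, so $\cQ=\cB\cC_0$; multiplicativity of $\Delta_n$ together with $\Delta_n(\cQ)\ne0$ forces $\Delta_n(\cB)\ne0$, i.e.\ $\cB\in\cM_n^\times(\cR)$. Hence $\cA:=\cL\cB\in\cM_n(\cR)$ is well defined and $\cL=\cA\cB^{-1}$.

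Next I would read off the remaining claims. If $\cL=\cA_1\cB_1^{-1}$ is any right decomposition then $\cL\cB_1=\cA_1\in\cM_n(\cR)$, so $\cB_1\in\cJ_R$ and $\cB_1=\cB\cC$ for some $\cC$; consequently $\cA_1=\cL\cB_1=\cA\cC$. Since $\cB_1$ and $\cB$ are regular, $\cC$ must be regular as well (otherwise $\cB_1=\cB\cC$ would be a zero divisor), so $\cC\in\cM_n^\times(\cR)$, and additivity of $\Ord$ under $\Delta_n$ gives $\Ord\,\cB_1=\Ord\,\cB+\Ord\,\cC\ge\Ord\,\cB$, so $\cB$ has minimal denominator order. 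The same additivity yields right coprimality of $\cA,\cB$: a common right divisor $\cD$, say $\cA=\cA'\cD$ and $\cB=\cB'\cD$ with $\cD,\cB'$ regular, gives $\cL=\cA'\cB'^{-1}$, whence $\cB'\in\cJ_R$, $\cB'=\cB\cE$, and $\cB=\cB'\cD=\cB\cE\cD$; regularity of $\cB$ forces $\cE\cD=I$, so $\Ord\,\cD=0$ and $\cD$ is a unit by Proposition \ref{invert}. The left statements follow by the symmetric construction with the left ideal $\cJ_L=\{\mathcal{X}\,|\,\mathcal{X}\cL\in\cM_n(\cR)\}=\cM_n(\cR)\hat\cB$, or, more economically, by applying the anti-automorphism $\dagger$ (which satisfies $\Ord\,\mathcal{X}^\dagger=\Ord\,\mathcal{X}$ and $\Delta_n(\mathcal{X}^\dagger)=(\Delta_n\mathcal{X})^\dagger$) to a minimal right decomposition of $\cL^\dagger$.

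For the final equality $\Ord\,\cB=\Ord\,\hat\cB$ I would link the two constructions through a least common left multiple. Cross-multiplying $\cL=\cA\cB^{-1}=\hat\cB^{-1}\hat\cA$ gives $\cM:=\hat\cB\cA=\hat\cA\cB$, and I would show
\[
 \cM_n(\cR)\cA\cap\cM_n(\cR)\cB=\cM_n(\cR)\cM :
\]
the inclusion $\supseteq$ is immediate, while for $\subseteq$ any $\cP\cA=\mathcal{Y}\cB$ satisfies $\cP\cL=\mathcal{Y}\in\cM_n(\cR)$, so $\cP\in\cJ_L$, $\cP=\mathcal{X}\hat\cB$, and $\cP\cA=\mathcal{X}\cM$. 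Thus $\cM$ generates the intersection. It then remains to prove the order formula $\Ord\,\cM=\Ord\,\cA+\Ord\,\cB$, after which $\Ord\,\hat\cB=\Ord\,\cM-\Ord\,\cA=\Ord\,\cB$ follows from $\Delta_n$-additivity applied to $\cM=\hat\cB\cA$.

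The order formula is the step I expect to be the main obstacle, since Proposition \ref{Oreprop} supplies it only in the scalar ring $\cR$. I would establish it by introducing the length function $\lambda$ on finite-length left $\cM_n(\cR)$-modules determined by $\lambda\big(\cM_n(\cR)/\cM_n(\cR)\mathcal{X}\big)=\Ord\,\mathcal{X}$ for regular $\mathcal{X}$; this is legitimate because a regular $\mathcal{X}$ factors into $\Ord\,\mathcal{X}$ elementary (order-one) factors — visible after diagonalization via Lemma \ref{lem1} and the Euclidean structure of $\cR$ — which $\Delta_n$ counts exactly, so $\lambda$ is additive on short exact sequences by Jordan--H\"older. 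Applying $\lambda$ to the standard pullback relation between $\cM_n(\cR)/\cM_n(\cR)\cM$, $\cM_n(\cR)/\cM_n(\cR)\cA$, $\cM_n(\cR)/\cM_n(\cR)\cB$ and $\cM_n(\cR)/(\cM_n(\cR)\cA+\cM_n(\cR)\cB)$ gives
\[
 \Ord\,\cM=\Ord\,\cA+\Ord\,\cB-\lambda\big(\cM_n(\cR)/(\cM_n(\cR)\cA+\cM_n(\cR)\cB)\big),
\]
and the last term vanishes because $\cM_n(\cR)\cA+\cM_n(\cR)\cB=\cM_n(\cR)\cD$ with $\cD$ the greatest common right divisor of $\cA$ and $\cB$, which is a unit by the right coprimality already established. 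This closes the argument.
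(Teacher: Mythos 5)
Your architecture is sound, and for everything except the order identity it coincides with the paper's (which simply points back to Propositions \ref{Oreprop} and \ref{pro52}; your ideal-of-denominators argument, the regularity of the generator via $\Delta_n$, and the coprimality/minimality deductions reproduce that faithfully). Where you genuinely diverge is the crux, $\Ord\,\cM=\Ord\,\cA+\Ord\,\cB$ for the least common multiple of the coprime pair: the paper proves the general identity $\Ord\,\cA+\Ord\,\cB=\Ord\,\cM+\Ord\,\cN$ by induction on the matrix size, reducing $\cA,\cB$ to block upper-triangular form and explicitly constructing a common multiple of controlled order, whereas you argue homologically via a length function and the exact sequence relating the quotients by $\cJ_1\cap\cJ_2$, by $\cJ_1$ and $\cJ_2$ separately, and by $\cJ_1+\cJ_2$. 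Your route is shorter and more conceptual, provided the length function exists.

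That is where the gap sits. You justify additivity of $\lambda$ by asserting that a regular $\mathcal{X}\in\cM_n(\cR)$ factors into $\Ord\,\mathcal{X}$ order-one factors. After diagonalizing via Lemma \ref{lem1} this amounts to claiming every scalar difference operator of total order $m$ is a product of $m$ total-order-one operators, and that is false: factoring $\cS+u\cS^{-1}$ into two first-order factors forces an equation of the form $d\,\cS(d)=-u$ in $\cF$, which has no solution (comparing multiplicities of the irreducible factors $u_j$ in numerator and denominator gives an even number equal to an odd number — the same flavour of obstruction the paper uses to show $\ker(\cS-u)=0$). So irreducible operators of total order $>1$ exist, the Jordan--H\"older length of $\cM_n(\cR)/\cM_n(\cR)\mathcal{X}$ need not equal $\Ord\,\mathcal{X}$, and counting order-one factors does not define $\lambda$. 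The lemma you need is nevertheless true and is easily repaired: set $\lambda(M)=\tfrac{1}{n}\dim_\cF M$. This is additive on short exact sequences, Euclidean division in $\cR$ gives $\dim_\cF(\cR/\cR d)=\Ord\,d$ for scalar $d$, and the diagonal form $\mathcal{X}=\mathcal{U}\,\mathrm{diag}(d_1,\dots,d_n)\,\mathcal{V}$ with $\mathcal{U},\mathcal{V}$ invertible then yields $\dim_\cF\bigl(\cM_n(\cR)/\cM_n(\cR)\mathcal{X}\bigr)=n\,\Ord\,\mathcal{X}$. With that substitution your exact-sequence computation, together with the vanishing of the $\lambda$-term for $\cM_n(\cR)\cA+\cM_n(\cR)\cB=\cM_n(\cR)$ (which follows from the coprimality you already established), delivers exactly the order formula the paper obtains by its block induction, and the rest of your argument closes.
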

\begin{proof}
We will first prove by induction on $n$ that
if $A$ and $B$ are matrix difference operators of size $n \times n$ with $B$ 
regular, if $M$ is a generator of the right ideal $A\cM_n(\cR) \cap B
\cM_n(\cR)$ and $N$ a greatest left common divisor of $A$ and $B$, then $\Ord A+ 
\Ord B=\Ord M+\Ord N$.
\\ It is true for $n=1$ by Proposition \ref{Oreprop}. 
Let us now consider $A$ and $B$ of size $n+1$. Using invertible matrices we can 
assume that $A$ and $B$ are both upper 
triangular. Indeed, one can factorize them as $A=T_AU_A$ and $B=T_BU_B$ with 
$T_A,T_B$ upper triangular and $U_A$, $U_B$ 
invertible. Hence if there exist $C$ and $D$ such that $T_AD=T_BC$ with $\Ord D 
\leq \Ord T_B=\Ord B$, then we can write 
$A(U_A^{-1}D)=B(U_B^{-1}C)$. Let us consider $A$ and $B$ in block matrix form:
\begin{equation*}
A=
\left(
\begin{matrix}
E & X \\
0 & P
\end{matrix}
\right),
\hspace{2 mm}
B=
\left(
\begin{matrix}
F & Y \\
0 & Q
\end{matrix}
\right),
\end{equation*}
where $E$ and $F$ are of size $n \times n$, $P$ and $Q$ are difference operators 
and $X$ and $Y$ have size $n \times 1$. 
First, let $EG=FH$ be a generator of the right ideal $E\cM_n(\cR) \cap F 
\cM_n(\cR)$ in $\cM_n(\cR)$,  $P \hat Q=Q \hat 
P$ be a generator of the right ideal $P \cR \cap Q \cR$ in $\cR$ and $K$ be a 
generator of the right ideal  
$E\cM_n(\cR)+ F \cM_n(\cR)$ in $\cM_n(\cR)$ (which is also called the greatest 
left common divisor of $E$ and $F$). We 
have by the induction hypothesis $\Ord K=\Ord F-\Ord G$. One can find a 
difference operator $R$ with $\Ord R \leq \Ord 
K$ and a vector difference operator $Z$ such that $KZ=(Y \hat P-X \hat Q)R$. 
Indeed, by Lemma \ref{lem1} one can assume that $K$ is a diagonal matrix 
$diag(K_0,...,K_{n})$. Let us call by 
$L_0,...,L_n$
the entries of the vector $Y \hat P-X \hat Q$. Then we can find for all 
$i=0,...n$ difference operators $M_i$ and $N_i$ 
such that $\Ord N_i \leq \Ord K_i$ and $K_iM_i=L_iN_i$. Let $R$ be a generator 
of the right ideal $N_0 \cR \cap ... \cap 
N_n \cR$. Then $\Ord R \leq \sum_{i=0}^n{\Ord N_i} \leq \sum_{i=0}^n{\Ord 
K_i}=\Ord K$ and there exists a vector $Z$ 
such that $KZ=LR$. Finally, by definition of $K$ there exist two matrix 
difference operator $V$ and $W$ such that 
$EV-FW=K$. Let
\begin{equation*}
C=
\left(
\begin{matrix}
H & WZ \\
0 & \hat P R
\end{matrix}
\right),
\hspace{2 mm}
D=
\left(
\begin{matrix}
G & VZ \\
0 & \hat Q R
\end{matrix}
\right).
\end{equation*}
Then $\Ord D \leq \Ord B$ and $AD=BC$.

The proof of the remaining parts of the statement are identical to the scalar 
case, see the proofs of Propositions \ref{Oreprop} and \ref{pro52}.
\end{proof}

The inequality (\ref{kerord}) is also true for a regular matrix difference 
operator $\cA\in\cM_n^\times(\cR)$ and we say that $\cA$ is a full kernel 
operator if $\mbox{Dim}_\cK \mbox{Ker}\, \cA=\Ord\,\cA$. Theorem \ref{thmrw}, 
Corollary \ref{cork} 
and Corollary \ref{inverse} from the previous section are also true for matrix 
rational operators.

\section{PreHamiltonian pairs and Nijenhuis operators}\label{Sec3}
Zhiber and Sokolov, in their study of Liouville integrable hyperbolic 
equations 
\cite{mr1845643}, have discovered  a family of 
special differential operators with the property 
that they define a new Lie bracket and are homomorphisms from the Lie algebra 
with the newly induced bracket to the 
original Lie algebra. These operators can be viewed as a generalization of 
Hamiltonian 
operators, although they are not necessarily anti--symmetric.  Inspired by the 
work of Zhiber and 
Sokolov, infinite sequences of such scalar differential operators of arbitrary 
order were constructed in 
\cite{mr1923781} using symbolic representation \cite{mr99g:35058, mnw08}.
Kiselev and van de Leur gave some examples of such matrix differential 
operators 
\cite{Kiselev2010} and investigated 
the geometric meaning of such operators. They named them preHamiltonian 
operators in \cite{kvdl1} and defined the 
compatibility of two such operators.
Recently, Carpentier renamed them as integrable pairs and investigated the 
interrelations between such pairs and 
Nijenhuis operators \cite{Carp2017}. In principle, many results for 
differential operators also work for difference 
operators since $\cR$ is a principal ideal domain. In this section, 
we develop further the theory of preHamiltonian operators  and extend it to 
the difference case. Similar to the previous section, we illustrate our results 
for the scalar case.

\begin{Def}
 A difference operator $A$ is called preHamiltonian if $\im A$ is a Lie 
subalgebra, i.e.,
 \begin{equation}\label{preH}
 [\im A,\ \im A]\subseteq \im A 
 \end{equation}
\end{Def}
By direct computation, it is easy to see that \cite{mr1923781} operator $A$ is 
preHamiltonian if and only if there 
exists a 2-form on $\cF$ denoted by $\omega_{A}$ such that
\begin{equation}\label{PreH1}
 A_*[A a](b)-A_*[A b](a)=A \omega_{A}(a,b)\quad \mbox{ for all }\quad a,b \in 
\cF .
\end{equation}
For a given $a \in \cF$, both $\omega_{A}(a, \bullet)$ and $\omega_{A}(\bullet, a)$ 
are in $\cR$, i.e. difference operators on $\cF$.

For a Hamiltonian operator $H$, the Jacobi identity is equivalent to
\begin{eqnarray}\label{loch}
 [H a, H b]=H\left( b_*[H a]+(Ha)_*^{\dagger}(b)-a_*[H 
b]+a_*^{\dagger}(Hb)\right),
\end{eqnarray}
for all $a, b \in \cF$, where $\dagger$ is the adjoint of the operator.
Clearly, Hamiltonian operators are preHamiltonian with $\omega_H(a, b)=(Ha)_*^{\dagger}(b)+a_*^{\dagger}(Hb).$
We are going to explore the relation between preHamiltonian pairs and
Hamiltonian pairs in the forthcoming paper \cite{CMW18}. Here we look at their relations with Nijenhuis operators.

Similar to Hamiltonian operators, in general, the
linear combination of two preHamiltonian operators is no longer preHamiltonian. 
This naturally leads to the following 
definition:
\begin{Def}\label{defpair}
 We say that two difference operators $A$ and $B$ form a preHamiltonian pair if 
$A+\lambda B$ is 
preHamiltonian for all constant $\lambda \in \cK$.
\end{Def}
A preHamiltonian pair $A$ and $B$ implies the existence of 2-forms $\omega_A$, 
$\omega_B$ and $\omega_{A+\lambda 
B}=\omega_A +\lambda \omega_B$. They satisfy
\begin{eqnarray}\label{pair}
A_*[B a](b) +B_*[A a](b)-A_*[B b](a)-B_*[A b](a)=A \omega_B(a,b)+B 
\omega_A(a,b) 
 \  \mbox{ for all }\  a,b \in 
\cF .
\end{eqnarray}
Gel'fand \& Dorfman \cite{GD79} and Fuchssteiner \& Fokas
\cite{mr82g:58039,mr84j:58046} discovered the relations between Hamiltonian 
pairs and Nijenhuis operators. They 
naturally generate
Nijenhuis operators. In what follows, we show that preHamiltonian pairs also 
give rise to Nijenhuis operators.
This also explains why we chose the terminology `preHamiltonian' 
instead of `integrable' for such operators.
These operators naturally appear in describing invariant evolutions of 
curvature 
flows \cite{Manbeffawang13}.

\begin{Def}
 A difference operator $R$ is Nijenhuis if 
 \begin{equation}\label{here}
 [R a, \ R b]-R[R a, b]-R[a, R b]+R^2[a,b]=0 \quad \mbox{ for all }\quad a,b 
\in 
\cF .
 \end{equation}
\end{Def}
Clearly, a Nijenhuis operator is also preHamiltonian with 
$$\omega_R (a,b)=(Rb)_*[a]-(R a)_*[b]-R[a,b] .$$ 

For a rational operator $R=A 
B^{-1}$, which is defined on $\im B$,
we define the Nijenhuis identity as
\begin{equation} \label{Nijrbis}
\begin{split}
&A_*[Aa]-[(Aa)_*,A]
+ AB^{-1}AB^{-1}(B_*[Ba]-[(Ba)_*,B]) \\
&=AB^{-1}(B_*[Aa]+A_*[Ba]-[(Aa)_*,B]-[(Ba)_*,A]) \quad \mbox{ for all }\quad 
a \in \cF ,
\end{split}
\end{equation}
where the bracket denotes the commutator of two difference operators.

\begin{The}\label{prop1}
If two difference operators $A$ and $B$ form a preHamiltonian pair, then $R=A 
B^{-1}$ is 
Nijenhuis.
\end{The}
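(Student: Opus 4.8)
The plan is to prove the operator identity (\ref{Nijrbis}) by factoring $A$ and $B$ out of each term and reducing the whole equation to a trivial cancellation inside the skew field $\cQ$. The point is that (\ref{Nijrbis}) has the shape $\Theta_A(a)+R^2\,\Theta_B(a)=R\,\Psi(a)$, where for a difference operator $P$ I abbreviate $\Theta_P(a):=P_*[Pa]-[(Pa)_*,P]$, the mixed term is $\Psi(a):=B_*[Aa]+A_*[Ba]-[(Aa)_*,B]-[(Ba)_*,A]$, and $R=AB^{-1}$ so that $R^2=AB^{-1}AB^{-1}$. Each of $\Theta_A(a),\Theta_B(a),\Psi(a)$ is an honest difference operator in $\cR$, so the identity itself is an identity in $\cQ$.

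First I would establish the key factorisation lemma: if $P$ is preHamiltonian, then
\[
\Theta_P(a)=P\,\Omega_P(a),\qquad \Omega_P(a):=\omega_P(a,\cdot)-a_*\,P+(Pa)_*,
\]
an identity of difference operators, where $\omega_P(a,\cdot)$, $a_*P$ and $(Pa)_*$ all lie in $\cR$. To prove it I evaluate $\Theta_P(a)$ on an arbitrary $b\in\cF$ and apply the product rule $(Pc)_*[f]=P_*[f](c)+P\,c_*[f]$ (which holds because $X_f$ is an evolutionary derivation and hence commutes with $\cS$), giving $(Pa)_*[Pb]=P_*[Pb](a)+P\,a_*[Pb]$. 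This turns the evaluation into
\[
\Theta_P(a)(b)=P_*[Pa](b)-P_*[Pb](a)-P\,a_*[Pb]+P\,(Pa)_*[b],
\]
and the preHamiltonian relation (\ref{PreH1}) rewrites the first two terms as $P\,\omega_P(a,b)$, leaving exactly $P\,\Omega_P(a)(b)$.

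Next I would apply the lemma to $A$ and to $B$ to get $\Theta_A(a)=A\,\Omega_A(a)$ and $\Theta_B(a)=B\,\Omega_B(a)$, and then show that the pair hypothesis is precisely the factorisation of the mixed term,
\[
\Psi(a)=A\,\Omega_B(a)+B\,\Omega_A(a).
\]
I would verify this by the same evaluate-on-$b$ computation: expanding $\Psi(a)(b)$ with the product rule, the terms $B\,(Aa)_*[b]$ and $A\,(Ba)_*[b]$ cancel against their counterparts, and after substituting $(Aa)_*[Bb]=A_*[Bb](a)+A\,a_*[Bb]$ and $(Ba)_*[Ab]=B_*[Ab](a)+B\,a_*[Ab]$ the remaining equality collapses to exactly (\ref{pair}). (Conceptually this is the $\lambda$-linear coefficient of $\Theta_{A+\lambda B}(a)=(A+\lambda B)\,\Omega_{A+\lambda B}(a)$, using $\omega_{A+\lambda B}=\omega_A+\lambda\omega_B$.)

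Finally I would substitute the three factorisations into (\ref{Nijrbis}). The left-hand side becomes $A\,\Omega_A(a)+AB^{-1}AB^{-1}B\,\Omega_B(a)=A\,\Omega_A(a)+AB^{-1}A\,\Omega_B(a)$, while the right-hand side is $AB^{-1}\big(A\,\Omega_B(a)+B\,\Omega_A(a)\big)=AB^{-1}A\,\Omega_B(a)+A\,\Omega_A(a)$; the two sides coincide term by term, so (\ref{Nijrbis}) holds. The only genuine content is the factorisation lemma of the second paragraph, and the point requiring care is the bookkeeping of Fréchet derivatives of operators versus of functions — the product rule and the identification of $\omega_P(a,\cdot)$, $a_*P$, $(Pa)_*$ as elements of $\cR$. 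The cancellations involving $B^{-1}$ are harmless, since they are carried out in $\cQ$ where $B^{-1}B=\mathrm{Id}$, even though $R$ itself is only defined on $\im B$.
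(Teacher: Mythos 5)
Your proposal is correct and follows essentially the same route as the paper: you factor $P_*[Pa]-[(Pa)_*,P]=P\bigl(\omega_P(a,\bullet)+(Pa)_*-a_*P\bigr)$ for $P=A,B$ using (\ref{PreH1}), and reduce the remaining mixed term of (\ref{Nijrbis}) to the pair identity (\ref{pair}), which is exactly what the paper does (you merely spell out the product-rule verification of the factorisation that the paper asserts without proof).
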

\begin{proof}
Since $A$ and $B$ are preHamiltonian we can write for all $a \in \cF$
\begin{equation}
\begin{split}
A_*[Aa]-[(Aa)_*,A]&=A(\omega_A(a,\bullet)+(Aa)_*-a_*A);\\
B_*[Ba]-[(Ba)_*,B]&=B(\omega_B(a,\bullet)+(Ba)_*-a_*B).
\end{split}
\end{equation}
Hence, we see that, 
provided that $A$ and $B$ are preHamiltonians, \eqref{Nijrbis} is equivalent to
\begin{equation}
\begin{split}
&A B^{-1} \left(B\omega_A(a,\bullet)+A \omega_B(a,\bullet)-B_*[Aa]-A_*[Ba] \right.\\&\left.
+(Aa)_*B+(Ba)_*A-Aa_*B-Ba_*A\right)=0,
\end{split}
\end{equation}
where the expression inside the parentheses is nothing else than \eqref{pair}. Therefore, given two preHamiltonians 
difference operators $A$ and $B$, the ratio $AB^{-1}$ is Nijenhuis if and only 
if $A$ and $B$ form a preHamiltonian pair.
\end{proof}

Conversely, we have the following statement:
\begin{The}\label{prop2}
 Let $R$ be a Nijenhuis rational difference operator with minimal decomposition 
$AB^{-1}$ such that $B$ is preHamiltonian. Then $A$ and $B$ form a 
preHamiltonian pair. 
\end{The}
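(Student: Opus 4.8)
The plan is to exploit the converse direction of the equivalence already established in Theorem~\ref{prop1}. Reading that proof carefully, the key observation is that its final computation shows something stronger than the forward implication alone: provided $A$ and $B$ are both preHamiltonian, the Nijenhuis identity \eqref{Nijrbis} for $R=AB^{-1}$ is \emph{equivalent} to the pair condition \eqref{pair}. So if I already know $R$ is Nijenhuis and I can show $A$ is preHamiltonian (given that $B$ is), then reversing that same chain of equivalences immediately yields \eqref{pair}, which is exactly the statement that $A$ and $B$ form a preHamiltonian pair.

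Thus the real content to be supplied is: \emph{if $R=AB^{-1}$ is Nijenhuis and $B$ is preHamiltonian, then $A$ is preHamiltonian.} This is where I expect the main difficulty to lie. My approach would be to write the Nijenhuis identity \eqref{Nijrbis} and substitute the preHamiltonian expression for the $B$-terms, namely $B_*[Ba]-[(Ba)_*,B]=B(\omega_B(a,\bullet)+(Ba)_*-a_*B)$, as in the proof of Theorem~\ref{prop1}. What remains is an identity relating the purely $A$-dependent quantity $A_*[Aa]-[(Aa)_*,A]$ to terms that, after clearing $B^{-1}$ factors, all lie in $\im A$ or involve $A$ composed with difference operators. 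The goal is to deduce that $A_*[Aa]-[(Aa)_*,A]$ has the form $A(\text{difference operator})$, i.e. that $A$ is preHamiltonian via characterization \eqref{PreH1}.

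The crucial technical point is that the minimality of the decomposition $AB^{-1}$ must be used: it guarantees that $A$ and $B$ are right coprime, so that whenever an expression of the form $AB^{-1}X$ (with $X\in\cR$) turns out to be a difference operator, one can conclude $X$ is divisible on the left by $B$ and hence the expression lies in $\im A$. Concretely, from the Nijenhuis identity rearranged I would isolate $A_*[Aa]-[(Aa)_*,A]$ and show the remaining right-hand side, after multiplying through appropriately, forces $A_*[Aa](b)-A_*[Ab](a)\in\im A$ for all $a,b\in\cF$. The coprimeness is what lets me pass from ``$AB^{-1}(\cdots)$ is local'' to ``$(\cdots)\in\im B$,'' and it is precisely the hypothesis that $B$ is preHamiltonian together with minimality that makes this factorization argument go through cleanly.

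The hard part, concretely, will be the bookkeeping in separating the $A$-only contributions from the mixed $A$--$B$ contributions in \eqref{Nijrbis} and verifying that the mixed terms, once $R$ is known to be Nijenhuis and $B$ preHamiltonian, organize into something manifestly in $\im A$. I would expect to verify that the expression inside the parentheses in the displayed consequence of \eqref{Nijrbis} (the one that equals \eqref{pair} in the forward proof) is well-defined and that setting it to zero is equivalent to both the Nijenhuis property and the preHamiltonian property of $A$ simultaneously. Once $A$ is shown preHamiltonian, the equivalence chain of Theorem~\ref{prop1} closes the argument and delivers \eqref{pair}, completing the proof that $A$ and $B$ form a preHamiltonian pair.
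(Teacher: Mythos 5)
Your proposal is correct and follows essentially the same route as the paper: substitute the preHamiltonian identity for $B$ into \eqref{Nijrbis}, then use the right-coprimality of $A$ and $B$ (coming from minimality, via the left least common multiple $CA=DB$) to conclude that the difference-operator-valued left-hand side $A_*[Aa]-(Aa)_*A+Aa_*A$ forces the expression under $AB^{-1}$ to be left-divisible by $B$, yielding simultaneously that $A$ is preHamiltonian and that \eqref{pair} holds. The only cosmetic difference is that the paper extracts both conclusions in a single step from that divisibility argument, rather than first proving $A$ preHamiltonian and then re-running the equivalence from Theorem \ref{prop1} to recover \eqref{pair}.
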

\begin{proof} 
Since $B$ is preHamiltonian, we have for all $a \in \cF$
\begin{equation}
B_*[Ba]-[(Ba)_*,B]=B(\omega_B(a,\bullet)+(Ba)_*-a_*B)
\end{equation}
Therefore, we can transform \eqref{Nijrbis} into the equivalent form
\begin{equation}
\begin{split}
&A_*[Aa]-(Aa)_*A+Aa_*A= \\
& =AB^{-1}(B_*[Aa]+A_*[Ba]+Ba_*A+Aa_*B-(Ba)_*A-(Aa)_*B-A \omega_B(a,\bullet).
\end{split}
\end{equation}
Let $CA=DB$ be the left least common multiple of the pair $A$ and $B$. It is 
also the right least common multiple of 
the pair $C$ and $D$ since $AB^{-1}$ is minimal. Therefore there exists a form 
$\omega_A$ such that 
\begin{equation}
\begin{split}
&A_*[Aa]-(Aa)_*A+Aa_*A=A \omega_A(a, \bullet); \\
& B_*[Aa]+A_*[Ba]+Ba_*A+Aa_*B-(Ba)_*A-(Aa)_*B-A \omega_B(a,\bullet)=B 
\omega_A(a,\bullet),
\end{split}
\end{equation}
which implies that $A$ and $B$ form a preHamiltonian pair.
\end{proof}

It is much easier to determine whether an operator is preHamiltonian than Hamiltonian. It can be systematically
done by any computer algebra software. Theorem 
\ref{prop1} provides an efficient 
method to check Nijenhuis property for rational operators, which is 
important in the theory of integrability.
\begin{Ex} The operators $A$ and $B$ defined in (\ref{volab}) form a 
preHamiltonian pair. Thus the recursion operator 
for the Volterra chain (\ref{vol}) is Nijenhuis.
\end{Ex}
\begin{proof} Let $C=A+\lambda B$. According to Definition \ref{defpair}, we 
check the existence of 2-form 
$\omega_C$ in (\ref{PreH1}). By direct computation, we have
\begin{eqnarray*}
&&\qquad C_*[C a](b)-C_*[C b](a)\\
&&\hspace{-0.6cm}= u \langle u_1 u_2 a_3 b_2+(u_1+u) u_1 a_2 
b_1+(u+u_{-1})u_{-1}a_{-1}b+u_{-2} 
u_{-1} a_{-2} b_{-1}+\lambda 
u_1 a_2 b_1+\lambda u_{-1}a_{-1} b\rangle_{\mathcal{P}_{a,b}}, 
\end{eqnarray*}
where ${\mathcal{P}_{a,b}} $ stands for anti-symmetrization with respect to 
$a_i$'s and $b_j$'s. We can now 
compute its preimage $\omega_C(a, b)$ by comparing its highest order either of 
$a$ or $b$ and we get
$$\omega_C(a, b)=u (a_1 b-a b_1)+u_{-1} ( a b_{-1} -a_{-1} b).$$
It follows from Theorem \ref{prop1} that the recursion operator $R=A B^{-1}$ 
for 
the Volterra chain (\ref{vol}) is 
Nijenhuis. 
\end{proof}

The previous two theorems provide the interrelations between preHamiltonian 
pairs 
and Nijenhuis operators. The following 
theorem (the analogue of its differential version which can be found in 
\cite{Carp2017}) gives another motivation for the definition of preHamiltonian 
pair: it is a necessary condition for a 
rational difference 
operator $R=AB^{-1}$ to `generate' an infinite commuting hierarchy.
\begin{The}
Let $R$ be a rational difference operator with minimal decomposition 
$R=AB^{-1}$. 
Suppose that there exist $(f^{(n)})_{n 
\geq 0} \in \cF$ spanning an infinite dimensional space over $\cK$ such that 
for 
all $n \geq 0$, 
$A(f^{(n)})=B(f^{(n+1)})$ and such that $[B(f^{(n)}), B(f^{(m)})]=0$ for all $n,m\geq 0$. 
Then $A$ and $B$ form a preHamiltonian 
pair.
\end{The}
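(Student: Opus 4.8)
The plan is to reduce the pair condition to a single statement about each operator in the pencil $A+\lambda B$, and then to upgrade an infinite family of vanishing brackets to a genuine operator identity. First I would set $g^{(n)}:=Bf^{(n)}$, so that $A(f^{(n)})=B(f^{(n+1)})$ reads $Rg^{(n)}=g^{(n+1)}$ for $R=AB^{-1}$ (defined on $\im B$), while the hypothesis becomes $[g^{(n)},g^{(m)}]=0$ for all $n,m\ge 0$. Fixing $\lambda\in\cK$ and writing $C=A+\lambda B$, one has $Cf^{(n)}=g^{(n+1)}+\lambda g^{(n)}$, so expanding $[Cf^{(n)},Cf^{(m)}]$ by $\cK$-bilinearity and antisymmetry gives a sum of four brackets of the form $[g^{(i)},g^{(j)}]$, all zero. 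Hence \emph{every} operator in the pencil carries an infinite commuting hierarchy built from the \emph{same} functions $f^{(n)}$, which span an infinite-dimensional $\cK$-subspace $V\subseteq\cF$.

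It therefore suffices to prove: if $C\in\cR$ and the functions $Cf^{(n)}$ pairwise commute, then $\im C$ is a Lie subalgebra, i.e.\ $C$ is preHamiltonian. Applying this to every $C=A+\lambda B$ yields Definition \ref{defpair}; the special cases $C=A$ (with hierarchy $g^{(n+1)}$) and $C=B$ (with hierarchy $g^{(n)}$) simultaneously show that $A$ and $B$ are themselves preHamiltonian. To establish preHamiltonianity of $C$ I would use $[Ca,Cb]=\Xi_C(a,b)+C\bigl(b_*[Ca]-a_*[Cb]\bigr)$, where $\Xi_C(a,b):=C_*[Ca](b)-C_*[Cb](a)$; since the last summand already lies in $\im C$, the subalgebra property is equivalent to $\Xi_C(a,b)\in\im C$ for all $a,b$, and by (\ref{PreH1}) the left quotient realizing this is the required $2$-form $\omega_C$. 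For fixed $b$ the map $a\mapsto\Xi_C(a,b)$ is an honest difference operator in $\cR$ (each term is $\cK$-linear in $a$ and assembled from $C$, its Fr\'echet derivative and multiplication operators), and likewise in $b$ for fixed $a$; the commutation hypothesis gives exactly $\Xi_C(f^{(n)},f^{(m)})\in\im C$ for all $n,m$. One could instead try to prove $R$ Nijenhuis and invoke Theorem \ref{prop2}, but that needs the same upgrade applied to the torsion together with preHamiltonianity of $B$, so the route through the pencil is more economical.

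The main obstacle is the upgrade from this infinite but \emph{proper} family of memberships to $\Xi_C(a,b)\in\im C$ for all $a,b$. Because ``belonging to $\im C$'' is membership in a subspace, not vanishing of an operator, the naive argument (a difference operator killing infinitely many $\cK$-independent functions must vanish, by $\dim_\cK\ker\le\Ord$ in (\ref{kerord})) does not apply directly. The plan is to fix $b=f^{(m)}$, set $P:=\Xi_C(\cdot,f^{(m)})\in\cR$, and study the rational operator $C^{-1}P$: it sends every $f^{(n)}$ into $\cF$, since $P(f^{(n)})=\Xi_C(f^{(n)},f^{(m)})\in\im C$. By Theorem \ref{thmrw} and Proposition \ref{prop41} the obstruction to $C^{-1}P$ being local is a finite collection of nonlocal terms $a^{(i)}(\cS-1)^{-1}c^{(i)}$, and evaluating on the infinitely many $\cK$-independent $f^{(n)}$ produces far more independent locality constraints than there are such terms, forcing them all to vanish; hence $C^{-1}P\in\cR$ and $\Xi_C(a,f^{(m)})\in\im C$ for every $a\in\cF$. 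Running the same argument in the second slot, with $a$ arbitrary and $b$ ranging over the $f^{(m)}$, gives $\Xi_C(a,b)\in\im C$ for all $a,b$, and the recovered left quotient is $\omega_C$; minimality of $R=AB^{-1}$ (Proposition \ref{pro52}) is used to ensure these quotients are bona fide difference operators and not artifacts of a hidden common factor. I expect the genuine difficulty to lie entirely in this locality/genericity step and in the careful bookkeeping of the weakly nonlocal tails and their independent constraints; the algebraic reductions surrounding it are routine.
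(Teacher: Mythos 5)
Your algebraic reduction is fine as far as it goes: the identity $[Ca,Cb]=\Xi_C(a,b)+C\bigl(b_*[Ca]-a_*[Cb]\bigr)$ is correct, so preHamiltonianity of $C=A+\lambda B$ is indeed equivalent to $\Xi_C(a,b)\in\im C$, and the hypotheses do give $\Xi_C(f^{(n)},f^{(m)})\in\im C$. The gap is in the step that carries all the weight: upgrading this to $\Xi_C(a,f^{(m)})\in\im C$ for \emph{all} $a$ by studying $C^{-1}P$ with $P=\Xi_C(\cdot,f^{(m)})$. First, $C^{-1}P$ has no reason to be weakly nonlocal --- by Theorem \ref{thmrw} that would require the (minimal) denominator to have a full kernel --- so "the obstruction to locality" is not a finite list of tails $a^{(i)}(\cS-1)^{-1}c^{(i)}$ in general. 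Second, and more fatally, even when it is weakly nonlocal, the conditions you extract from $P(f^{(n)})\in\im C$ are \emph{subspace-membership} conditions (e.g.\ $c^{(i)}f^{(n)}\in\im(\cS-1)$), not linear equations in a finite-dimensional space, and infinitely many independent such conditions do \emph{not} force the tails to vanish: take $c=1$ and $f^{(n)}=u_n-u_0=(\cS-1)(u_{n-1}+\cdots+u_0)$, which are $\cK$-independent and all lie in $\im(\cS-1)$. Equivalently, writing $P=CQ+R$ with $\Ord R<\Ord C$, you would need "$R(f^{(n)})\in\im C$ for all $n$ $\Rightarrow R=0$", whereas only "$R(f^{(n)})=0$ for all $n$ $\Rightarrow R=0$" follows from \eqref{kerord}. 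Note also that your intermediate claim --- a \emph{single} operator $C$ with pairwise-commuting $Cf^{(n)}$ is automatically preHamiltonian --- is strictly stronger than the theorem and is not what the paper proves.

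The paper's proof never isolates one operator of the pencil; it uses the pair structure exactly where your argument breaks. From $[Bf^{(n)},Bf^{(m)}]=0$ and $[Af^{(n)},Af^{(m)}]=0$ (the latter via $Af^{(n)}=Bf^{(n+1)}$) one finds that the two difference operators $D\bigl(B_*[Bf^{(n+1)}]-(Bf^{(n+1)})_*B\bigr)$ and $C\bigl(A_*[Af^{(n)}]-(Af^{(n)})_*A\bigr)$, where $CA=DB$ is the left lcm of $A$ and $B$, agree on every $f^{(m)}$ and hence are \emph{equal}, since a nonzero difference operator has finite-dimensional kernel. Minimality of $AB^{-1}$ (so that $CA=DB$ is also the right lcm of $C$ and $D$) then converts this into genuine operator divisibilities $B_*[Bf^{(n+1)}]-(Bf^{(n+1)})_*B=BP^{(n)}$ and $A_*[Af^{(n)}]-(Af^{(n)})_*A=AP^{(n)}$. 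Only after that does the "infinitely many independent $f^{(n)}$" argument appear, applied to the remainder $R_f$ of a Euclidean division of a bidifference operator by $B$ (resp.\ $A$): that remainder genuinely \emph{vanishes} at each $f^{(n)}$, so its difference-operator coefficients are annihilated on an infinite-dimensional space and must be zero. To repair your route you would have to reinstate precisely this use of $A(f^{(n)})=B(f^{(n+1)})$, the left lcm, and minimality; the locality/counting argument cannot substitute for it.
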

\begin{proof}
 Since $[B(f^{(m)}),B(f^{(n)})]=0$ for all $m,n \geq 0$ by assumption, we have
\begin{equation}
(B_*[B(f^{(n+1)})]-(B(f^{(n+1)}))_*B)(f^{(m)})=B\left(-(f^{(m)})_*[B(f^{(n+1)})]\right) 
\hspace{2 mm} \forall \hspace{1 mm} m,n 
\geq 0.
\end{equation}
Similarly, replacing $B$ with $A$ we get for all $n,m \geq 0$ 
\begin{equation}
(A_*[A(f^{(n)})]-(A(f^{(n)}))_*A)(f^{(m)})=A(-(f^{(m)})_*[A(f^{(n)})]).
\end{equation}
Let $CA=DB$ be the left least common multiple of the pair $A$ and $B$. A non-zero 
difference operator has a finite 
dimensional kernel over $\cK$, therefore 
one must have for all $n \geq 0$ that
\begin{equation}
D(B_*[B(f^{(n+1)})]-(B(f^{(n+1)}))_*B)=C(A_*[A(f^{(n)})]-(A(f^{(n)}))_*A).
\end{equation}
By minimality of the fraction $AB^{-1}$, we deduce that for all $n \geq 0$ 
there 
exists a difference operator $P^{(n)}$ 
such that
\begin{equation} \label{auxeq}
\begin{split}
B_*[B(f^{(n+1)})]-(B(f^{(n+1)}))_*B&=BP^{(n)}, \\
A_*[A(f^{(n)})]-(A(f^{(n)}))_*A&=AP^{(n)}.
\end{split}
\end{equation}
For all $f \in \cF$ we can write $B(f)_*=B f_*+(D_B)_f$, where $(D_B)_f$ is defined by 
$(D_B)_f(g)=B_*[g](f)$ for all $g \in \cF$. $D_B$ is a 
bidifference operator, 
i.e., $(D_B)_f$ is a difference operator and its coefficients are difference 
operators applied to $f$. In other 
words $(D_B)_f=P_M(f)S^M+...P_N(f)S^N$ for all $f$, where $P_M,...P_N$ are 
difference operators. We can find a unique pair 
of bidifference operators
$Q$ and $R$ such that $\Ord R_f < \Ord B$ for all $f$ and 
\begin{equation} \label{auxeq2}
B_*[B(f)]-(D_B)_f B=BQ_f+R_f .
\end{equation}
From (\ref{auxeq}) we see that $R_{f^{(n)}}=0$ for all $n \geq 1$. This implies 
that 
$R=0$ since the $f^{(n)}$ span an 
infinite 
dimensional space over $\cK$. Therefore, for all $f,g$, we have
\begin{equation*}
\begin{split}
B_*[B(f)](g)-B_*[B(g)](f)
                =B Q_f(g)
  \end{split}              
\end{equation*}
implying that $B$ is preHamiltonian. 
Finally, since  for all constant $\lambda$, operator $$R+ \lambda=(A+ \lambda 
B)B^{-1}$$ satisfies the same hypothesis as 
$R$,
we conclude that $A+\lambda B$ is preHamiltonian. 
\end{proof}

\section{Basic definitions of differential-difference equations}\label{Sec4}
In this section we introduce some basic concepts for differential-difference 
equations relevant to the contents of this 
paper.
More details on the variational difference complex and Lie derivatives can be 
found in \cite{kp85,mwx2}.

Let $\bu =(u^1(n,t),\ldots ,u^N(n,t))$ be a vector function  
of a discrete variable $n\in \bbbz$ and time variable $t$, where $n$ and $t$ 
are 
``independent variables'' and $\bu$ will play
the role of a ``dependent'' variable in an evolutionary differential-difference 
system
\begin{equation}\label{evol}
 \bu_t=\f(\bu_p,\ldots,\bu_q),\qquad p\le q,\ \ p,q\in\bbbz.
\end{equation}
Equation (\ref{evol}) is an abbreviated form to encode the infinite 
sequence of ordinary differential systems of equations
\begin{equation*}
 \partial _t\bu(n,t)=\f(\bu(n+p,t),\ldots,\bu(n+q,t)),\qquad 
n\in\bbbz.
\end{equation*}

A vector function $\f$ is assumed 
to be a locally holomorphic function in its arguments. In the majority of cases 
it will be a rational or polynomial function which does not depend explicitly 
from the variables $n,t$. 
The corresponding vector field coincides with (\ref{Xf}). 
Thus there is a bijection between evolutionary derivations of 
$\cF$ and differential-difference systems with $\f\in\cF^N$.

\begin{Def} There are three equivalent definitions of  symmetry of 
an evolutionary equation.  We say that $\g\in\cF^N$ is a symmetry of  
(\ref{evol}) if
\begin{enumerate}
 \item $[\g,\f]=0.$
 \item  $\hat{\bu}_k=\bu_k+\epsilon \g_k$ satisfy equation (\ref{evol}) ${\rm 
mod}\,\epsilon^2$ whenever $\bu$ is a solution.
\item
Equation $\bu_\tau=\g$ is compatible with  (\ref{evol}).
\end{enumerate}
\end{Def}

Symmetries of an equation form a Lie subalgebra in $\Der \,\cF$. The existence 
of an infinite dimensional commutative Lie algebra of symmetries is a 
characteristic property of an integrable equation and it can be taken as a 
definition of integrability.

Often the symmetries of integrable equations can be generated by recursion 
operators
\cite{mr58:25341}. Roughly speaking, a recursion operator
is a linear operator $R: \cF^N\rightarrow \cF^N$ mapping a symmetry to a new 
symmetry.  For an evolutionary equation 
(\ref{evol}), it satisfies
\begin{equation}\label{reopev}
R_t=R_*[\f]=[ \f_*,\ R] .
\end{equation}
Recursion operators for nonlinear integrable equations are often Nijenhuis
operators.
Therefore, if the Nijenhuis operator $R$ is a recursion operator of 
(\ref{evol}), the operator $R$ is also a
recursion operator for each of the evolutionary equations in the hierarchy 
$\bu_t=R^k(\f)$, where $k=0,1,2,\ldots \ .$

Nijenhuis operators are closely related to Hamiltonian and symplectic operators.
The general framework 
in the context of difference variational complex and Lie derivatives can be 
found in \cite{kp85,mwx2}.
Here we recall the basic definitions related to Hamiltonian systems.

For any element $a\in \cF$, we define an equivalent class (or a functional) 
$\int\! a$
by saying that two elements $a,b\in\cF$ are equivalent if \(a-b\in
\mbox{Im}(\cS-1)\).  The space of functionals is denoted by $\cF'$.

For any functional $\int\!\!f\in \cF'$ (simply written $f\in \cF'$ without 
confusion), we define its difference variational derivative (Euler operator) 
denoted by
$\delta_{\bu} f \in \cF^N$ (here we identify the dual space with itself) as
$$\delta_{\bu} f=\left(\delta_{u^1} f, \cdots, \delta_{u^N} f \right)^{tr},\qquad 
\delta_{u^l} f=\sum_{i\in\bbbz} \cS^{-i}  \frac{\partial f}{\partial u_{i}^l}=
\frac{\partial }{\partial u^l}\left(\sum_{i\in\bbbz} \cS^{-i}  f \right).$$

\begin{Def}
An evolutionary equation (\ref{evol}) is said to be a Hamiltonian equation if 
there exists a Hamiltonian operator $H$
 and a Hamiltonian $\int g\in \cF'$ such that
$\bu_t=H \delta_{\bu} \int g. $
\end{Def}

This is the same to say that the evolutionary vector field $\f$ is a 
Hamiltonian 
vector field and thus the Hamiltonian operator
is invariant along it, that is, 
\begin{equation}\label{hameq}
 H_t=H_*[\f]=\f_* H+H \f_*^{\dagger} .
\end{equation}

Nijenhuis recursion operators for some integrable difference equations, e.g., 
the Narita-Itoh-Bogoyavlensky lattice \cite{wang12}, are no longer weakly 
nonlocal, 
but rational difference operators of the form $R=AB^{-1}$. The following 
statement tells us how operators $A$ and $B$ 
are related to a given equation.
\begin{The}\label{preP} If a rational difference operator $R$ with minimal 
decomposition 
$AB^{-1}$ is a recursion operator for 
equation (\ref{evol}), then there exists an difference operator $P$ such that 
the same 
relation
\begin{equation}\label{reab}
 Q_t=Q_*[\f]=\f_*Q+QP.
\end{equation}
is satisfied for both $Q=A$ and $Q=B$.
\end{The}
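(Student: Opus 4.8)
The plan is to substitute the minimal decomposition $R=AB^{-1}$ into the recursion relation (\ref{reopev}), read off a natural candidate for the operator $P$ as an element of the skew field $\cQ$, and then exploit the coprimality coming from minimality to show that this candidate is in fact an honest difference operator. First I would differentiate $R=AB^{-1}$ along the flow, writing $Q_t:=Q_*[\f]$ for short. Since $Q\mapsto Q_*[\f]$ is a derivation on operators and $(B^{-1})_*[\f]=-B^{-1}B_*[\f]B^{-1}$ (obtained by applying the derivation to $BB^{-1}=1$), I get $R_t=A_tB^{-1}-AB^{-1}B_tB^{-1}$. Equating this with $[\f_*,R]=\f_*AB^{-1}-AB^{-1}\f_*$ and multiplying on the right by $B$ yields the master identity
\begin{equation*}
A_t-\f_* A=AB^{-1}(B_t-\f_* B),
\end{equation*}
an equality in $\cQ$.

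This identity suggests defining $P:=B^{-1}(B_t-\f_* B)\in\cQ$. By construction $BP=B_t-\f_* B$, and since $B\in\cR$ forces $B_t=B_*[\f]\in\cR$ while $\f_* B\in\cR$, the product $BP$ lies in $\cR$. Substituting back, the master identity becomes $A_t-\f_* A=AP$, so $AP=A_t-\f_* A\in\cR$ as well. At this stage $P$ is only known to be a rational operator with the special feature that both left multiples $AP$ and $BP$ are genuine difference operators; it remains to promote $P$ itself into $\cR$.

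This last step is the crux and the main obstacle. Because $AB^{-1}$ is the minimal right decomposition, $A$ and $B$ are right coprime, so their greatest common right divisor is a unit. As $\cR$ is a principal ideal domain (Corollary \ref{principal}), the left ideal $\cR A+\cR B$ is principal and generated by that greatest common right divisor, hence equals $\cR$; this gives a B\'ezout identity $UA+VB=1$ with $U,V\in\cR$. Therefore
\begin{equation*}
P=(UA+VB)P=U(AP)+V(BP)\in\cR.
\end{equation*}
Once $P\in\cR$ is established, the relations $AP=A_t-\f_* A$ and $BP=B_t-\f_* B$ rearrange to $Q_t=\f_* Q+QP$ for $Q=A$ and $Q=B$ respectively, which is exactly the claim. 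The only genuinely nontrivial point is the passage from $P\in\cQ$ to $P\in\cR$: the recursion identity alone produces $P$ as a fraction, and it is the right coprimality of the minimal decomposition that forces the denominator to cancel; everything else is a direct manipulation of (\ref{reopev}).
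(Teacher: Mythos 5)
Your proof is correct, and it reaches the same master identity as the paper,
\begin{equation*}
A_*[\f]-\f_*A \;=\; AB^{-1}\bigl(B_*[\f]-\f_*B\bigr),
\end{equation*}
by the same differentiation of $R=AB^{-1}$. Where you genuinely differ is in the crux step of showing that $P=B^{-1}(B_*[\f]-\f_*B)$ lies in $\cR$ and not merely in $\cQ$. The paper introduces the left least common multiple $CA=DB$ of the pair $(A,B)$, rewrites the identity as $C(A_*[\f]-\f_*A)=D(B_*[\f]-\f_*B)$, and uses the fact that (by minimality) $CA=DB$ is also the right least common multiple of $C$ and $D$, so that this common right multiple must factor as $CAP=DBP$, giving $P$ after cancellation. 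You instead invoke the B\'ezout identity $UA+VB=1$ supplied by right coprimality of a minimal decomposition (Corollary \ref{principal} together with the definitions in Appendix A) and conclude $P=U(AP)+V(BP)\in\cR$ since both $AP$ and $BP$ are already known to be difference operators. Both arguments rest on the same underlying fact --- minimality forces $A$ and $B$ to be right coprime --- but yours is shorter and avoids the lcm machinery; the paper's version has the advantage of being the exact pattern it reuses in Theorems \ref{prop1} and \ref{prop2} and in Lemma \ref{lem9}. The one point worth making explicit in your write-up is why minimality implies right coprimality in the scalar setting: if $A=A'C$ and $B=B'C$ with $C$ non-invertible, then $R=A'(B')^{-1}$ with $\Ord B'<\Ord B$, contradicting the minimality characterization of Proposition \ref{pro52} (the paper states coprimality explicitly only in the matrix version of the minimal-decomposition theorem).
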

\begin{proof}
 Since $R=AB^{-1}$ is a recursion operator of (\ref{evol}), substituting it 
into 
(\ref{reopev}) we have
 $$
 R_t=A_*[\f]B^{-1}-A B^{-1} B_*[\f] B^{-1}= \f_* A B^{-1}-A B^{-1}\f_* ,
 $$
 that is,
 \begin{equation}\label{prfab}
 \left( A_*[\f]-\f_*A \right)=A B^{-1}\left( B_*[\f]-\f_*B\right). 
 \end{equation}
Let $CA=DB$ be the left least common multiple of the pair $A$ and $B$. It is 
also the right least common multiple of 
the pair $C$ and $D$. Moreover, we rewrite (\ref{prfab}) as 
$$
C \left( A_*[\f]-\f_*A \right)=D \left( B_*[\f]-\f_*B\right). 
$$
Therefore there exists a operator $P$ such that
\begin{eqnarray*}
&&  A_*[\f]-\f_*A=A P,\qquad  B_*[\f]-\f_*B=B P.
\end{eqnarray*}
Thus operators $A$ and $B$ satisfy the same relation (\ref{reab}).
\end{proof}
Comparing to (\ref{hameq}), for Hamiltonian operators, we have 
$P=\f_*^{\dagger}$. 
Conversely, it can be easy to show that
\begin{Pro} For an equation (\ref{evol}) if there exist two operators $A$ and 
$B$ satisfy (\ref{reab}), then 
$R=AB^{-1}$ is a recursion operator for the equation.
\end{Pro}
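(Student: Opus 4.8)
The plan is to verify directly that $R=AB^{-1}$ satisfies the recursion operator condition (\ref{reopev}), namely $R_t=R_*[\f]=[\f_*,R]$. The strategy is essentially to run the computation in the proof of Theorem \ref{preP} in reverse: I would expand the Fr\'echet derivative $R_*[\f]$ of the rational operator $R$ by the product rule together with the rule for differentiating $B^{-1}$, then substitute the two hypotheses (\ref{reab}) and observe that the auxiliary operator $P$ drops out, leaving precisely the commutator $[\f_*,R]$.

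First I would compute the Fr\'echet derivative of $R$ along the flow. Writing $R=AB^{-1}$ and applying the product rule gives
\[
 R_*[\f]=A_*[\f]\,B^{-1}+A\,(B^{-1})_*[\f].
\]
The only nontrivial ingredient is $(B^{-1})_*[\f]$, which I would obtain by differentiating the identity $B\,B^{-1}=\Id$: since $\Id$ is constant, the product rule yields $B_*[\f]\,B^{-1}+B\,(B^{-1})_*[\f]=0$, and therefore
\[
 (B^{-1})_*[\f]=-B^{-1}\,B_*[\f]\,B^{-1}.
\]
Substituting this back gives $R_*[\f]=A_*[\f]\,B^{-1}-AB^{-1}\,B_*[\f]\,B^{-1}$.

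It then remains to insert the two relations from (\ref{reab}), namely $A_*[\f]=\f_*A+AP$ and $B_*[\f]=\f_*B+BP$. After substitution the expression becomes
\[
 R_*[\f]=(\f_*A+AP)B^{-1}-AB^{-1}(\f_*B+BP)B^{-1}.
\]
Expanding and using $BB^{-1}=\Id$ and $B^{-1}B=\Id$, the two local contributions $APB^{-1}$ cancel, while the remaining terms collapse to $\f_*AB^{-1}-AB^{-1}\f_*=\f_*R-R\f_*=[\f_*,R]$. This is exactly (\ref{reopev}), so $R$ is a recursion operator for (\ref{evol}).

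The computation is routine once the product rule and the inverse rule are in hand, so there is no genuine obstacle; the only point requiring care is that $A$, $B$, $R$, $\f_*$ and $P$ are noncommuting (pseudo-)difference operators in $\cQ$, so the order of every factor must be preserved and the cancellation of the $P$-contributions rests precisely on the identity $AB^{-1}\cdot B=A$. Thus the main (mild) difficulty is bookkeeping in the noncommutative setting rather than any structural subtlety, and the result follows.
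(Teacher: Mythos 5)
Your proposal is correct and coincides with the paper's own proof: the paper likewise writes $R_t=A_*[\f]B^{-1}-AB^{-1}B_*[\f]B^{-1}$, substitutes the two relations (\ref{reab}), and observes the cancellation of the $P$-terms to obtain $\f_*R-R\f_*$. You have merely spelled out the derivation of $(B^{-1})_*[\f]=-B^{-1}B_*[\f]B^{-1}$, which the paper uses implicitly.
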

\begin{proof} By direct computation, we have
\begin{eqnarray*}
R_t=A_*[\f]B^{-1}-A B^{-1} B_*[\f] B^{-1}= (\f_*A+A P)B^{-1}-A B^{-1} (\f_*B+B 
P)B^{-1}=\f_*R-R\f_* 
\end{eqnarray*}
satisfying (\ref{reopev}). Thus $R=A B^{-1}$ is a recursion operator.
\end{proof}
This proposition has been used in \cite{Mikhailov2000} in constructing recursion 
operators for integrable noncommutative 
ODEs.
\begin{Ex}
For the operators $A$ and $B$ defined in (\ref{volab})
of the Volterra chain (\ref{vol}), the difference operator $P$ 
in Theorem \ref{preP} is $P=(1+\cS^{-1})u (1-\cS)$.
\end{Ex}

In what follows, we give the conditions for a rational recursion operator $R=AB^{-1}$ to
generate infinitely many local commuting symmetries. We first prove the following lemma:

\begin{Lem} \label{recpreham}
Assume that  $B$ is a preHamiltonian operator $R=AB^{-1}$ with minimal decomposition is a recursion operator 
for $\bu_t=B(\g)$, where $\g\in \cF^N$. Then $[B(\g),A(\g)]=0$.
\\ In particular, if there exists $\h \in \cF^N$ such that
$R$ is a recursion operator for $\bu_t=B(\h)$ and $[B(\g),A(\h)]=0$, then $[A(\g),B(\h)]=0$.
\end{Lem}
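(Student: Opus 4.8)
The backbone of the argument is relation (\ref{reab}) from Theorem \ref{preP}, specialised to the right hand side $\f=B(\g)$, combined with the preHamiltonian identity (\ref{PreH1}) for $B$ in order to pin down the accompanying operator explicitly. Since $R=AB^{-1}$ is a recursion operator for $\bu_t=B(\g)$, Theorem \ref{preP} furnishes a difference operator $P_\g$ with $B_*[B(\g)]-(B(\g))_*B=BP_\g$. I would expand the composite Fréchet derivative by the product rule $(B(\g))_*[h]=B_*[h](\g)+B\g_*[h]$ and evaluate on an arbitrary $h\in\cF^N$ to obtain
\[
BP_\g(h)=B_*[B(\g)](h)-B_*[B(h)](\g)-B\g_*[B(h)].
\]
The first two terms are exactly the left hand side of (\ref{PreH1}), namely $B_*[B(\g)](h)-B_*[B(h)](\g)=B\omega_B(\g,h)$, so they collapse and leave $BP_\g(h)=B(\omega_B(\g,h)-\g_*[B(h)])$. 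Because $\cR$ is a domain, left cancellation gives the closed form $P_\g=\omega_B(\g,\bullet)-\g_*B$, and the identical computation yields $P_\h=\omega_B(\h,\bullet)-\h_*B$ for the flow $\bu_t=B(\h)$. This is the only place where the hypothesis that $B$ is preHamiltonian is used: it is what makes $\omega_B$ (and its antisymmetry) available.

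For the first assertion I would expand $[B(\g),A(\g)]=(A(\g))_*[B(\g)]-(B(\g))_*[A(\g)]$ by the same product rule and substitute the $A$-version of (\ref{reab}), namely $A_*[B(\g)]=(B(\g))_*A+AP_\g$. Applying this to $\g$ turns $(A(\g))_*[B(\g)]$ into $A\g_*[B(\g)]+(B(\g))_*[A(\g)]+A(P_\g(\g))$, and the copy of $(B(\g))_*[A(\g)]$ cancels against the second half of the bracket, leaving $[B(\g),A(\g)]=A\g_*[B(\g)]+A(P_\g(\g))$. Since $\omega_B$ is antisymmetric we have $P_\g(\g)=\omega_B(\g,\g)-\g_*[B(\g)]=-\g_*[B(\g)]$, so the two surviving terms cancel and $[B(\g),A(\g)]=0$.

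For the ``in particular'' clause I would not recompute but polarize the first assertion. The recursion-operator requirement in the form (\ref{reab}) is $\cK$-linear in the right hand side: if $P_\g$ and $P_\h$ realise it for $B(\g)$ and $B(\h)$, then $P_\g+\lambda P_\h$ realises it for $B(\g)+\lambda B(\h)=B(\g+\lambda\h)$, so $R$ is a recursion operator for $\bu_t=B(\g+\lambda\h)$ for every $\lambda\in\cK$. Applying the first assertion to the characteristic $\g+\lambda\h$ gives $[B(\g+\lambda\h),A(\g+\lambda\h)]=0$; expanding by bilinearity of the bracket and linearity of $A$ and $B$, the $\lambda^0$ and $\lambda^2$ coefficients vanish by the first assertion applied to $\g$ and to $\h$ separately, and the coefficient of $\lambda$ yields $[B(\g),A(\h)]+[B(\h),A(\g)]=0$. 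Rewriting $[B(\h),A(\g)]=-[A(\g),B(\h)]$ by antisymmetry of the bracket, this reads $[B(\g),A(\h)]=[A(\g),B(\h)]$, so the hypothesis $[B(\g),A(\h)]=0$ forces $[A(\g),B(\h)]=0$.

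The main obstacle is purely the bookkeeping of the Fréchet-derivative expansions: writing each $(Q(\g))_*[h]=Q_*[h](\g)+Q\g_*[h]$ correctly for $Q\in\{A,B\}$ and tracking which occurrences of $(B(\g))_*[A(\g)]$ and $B_*[B(h)](\g)$ cancel. The single structural input, as opposed to routine linear algebra over the domain $\cR$, is the explicit determination of $P_\g$ in the first paragraph; once the first assertion is in place, the polarization step delivers the second with no further computation. I would finally remark that all these manipulations, phrased in the scalar case, transfer verbatim to $\cF^N$ and $\cM_n(\cR)$, since only the linearity of $A,B$ and the (regular) domain property of $\cR$ are used.
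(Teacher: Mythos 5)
Your proof is correct and follows essentially the same route as the paper's: identify the operator $P_\g=\omega_B(\g,\bullet)-\g_*B$ by combining Theorem \ref{preP} with the preHamiltonian identity for $B$, apply the resulting relation for $A$ to $\g$ to obtain $[B(\g),A(\g)]=A(\omega_B(\g,\g))=0$, and polarize over $\g+\lambda\h$ for the second claim. Your write-up merely makes explicit two steps the paper leaves implicit --- the left-cancellation of $B$ that pins down $P_\g$, and the linearity argument showing $R$ is a recursion operator for $\bu_t=B(\g+\lambda\h)$ --- so there is nothing further to add.
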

\begin{proof}
We know that $B$ is preHamiltonian. So for any $\a\in \cF^N$, we have
\begin{eqnarray}\label{preB}
B_*[B\a]-(B\a)_*B &=B(\omega_B(\a,\bullet)-\a_*B). 
\end{eqnarray}
From Theorem \ref{preP}, it follows, when $\a=\g$ or $\a=\h$, that
\begin{eqnarray}\label{reAB}
A_*[B\a]-(B\a)_*A &=A(\omega_B(\a,\bullet)-\a_*B).
\end{eqnarray}
Using (\ref{reAB}) for $\a=\g$, we get
\begin{eqnarray*}
[B(\g),A(\g)]= A_*[B(\g)](\g)+A{\g}_*[B(\g)]-(B\g)_*[A \g]=A(\omega_B(\g,\g))=0.
\end{eqnarray*}
If there exists $\h \in \cF^N$ such that
$R$ is a recursion operator for $\bu_t=B(\h)$ then from the former we deduce that
\begin{equation}
[B(\g+\h),A(\g+\h)]=[B(\g),A(\g)]=[B(\h),A(\h)]=0.
\end{equation}
Hence $[B(\h),A(\g)]=-[B(\g),A(\h)]$.
\end{proof}

\begin{Pro} \label{com}
Assume that $A$ and $B$ form a preHamiltonian pair and $R=AB^{-1}$ is a recursion operator 
for $\bu_t=B(\g^{(0)})$, where $\g^{(0)}\in \cF^N$. If there exists $\g^{(n)}\in\cF^N$ such that
 $A(\g^{(n)})=B(\g^{(n+1)})$ for all $n \geq 0$, then $[B(\g^{(n)}),B(\g^{(m)})]=0$ for all $n,m \geq 0$.
\end{Pro}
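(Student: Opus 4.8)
The plan is to identify $\f^{(n)}:=B(\g^{(n)})$ with the hierarchy generated by $R$ and then to use the elementary fact that a recursion operator maps symmetries of its flow to symmetries. Since $R(\f^{(n)})=AB^{-1}B(\g^{(n)})=A(\g^{(n)})=B(\g^{(n+1)})=\f^{(n+1)}$, we have $\f^{(n)}=R^n(\f^{(0)})$. Because $A$ and $B$ form a preHamiltonian pair, Theorem \ref{prop1} gives that $R=AB^{-1}$ is Nijenhuis, and since $R$ is a recursion operator for $\bu_t=\f^{(0)}$, the property of Nijenhuis recursion operators recalled in Section \ref{Sec4} shows that $R$ is a recursion operator for every member $\bu_t=\f^{(n)}$ of the hierarchy.

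The computational core is the following identity: if $R$ is a recursion operator for $\bu_t=\f$, so that $R_*[\f]=\f_* R-R\f_*$ by (\ref{reopev}), then for every $\g\in\cF^N$
\begin{equation*}
\begin{split}
[\f,R\g]&=(R\g)_*[\f]-\f_*[R\g]=R_*[\f]\,\g+R\,\g_*[\f]-\f_* R\,\g\\
&=(\f_* R-R\f_*)\g+R\,\g_*[\f]-\f_* R\,\g=R\big(\g_*[\f]-\f_*[\g]\big)=R[\f,\g],
\end{split}
\end{equation*}
where I used the product rule $(R\g)_*[\f]=R_*[\f]\,\g+R\,\g_*[\f]$ and the definition (\ref{bracket}) of the bracket. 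In particular, $R$ sends symmetries of $\bu_t=\f$ to symmetries.

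With this in hand the conclusion follows by a short induction. For fixed $n$, $R$ is a recursion operator for $\bu_t=\f^{(n)}$, so applying the identity $m-n$ times (with $\f=\f^{(n)}$) yields, for $m\ge n$,
\begin{equation*}
[\f^{(n)},\f^{(m)}]=[\f^{(n)},R^{\,m-n}\f^{(n)}]=R^{\,m-n}[\f^{(n)},\f^{(n)}]=0,
\end{equation*}
and the case $m<n$ follows from the antisymmetry of the bracket (\ref{bracket}). A route closer to the tools just developed is to take as the base case the adjacent commutator $[\f^{(n)},\f^{(n+1)}]=[B(\g^{(n)}),A(\g^{(n)})]=0$, which is exactly the first assertion of Lemma \ref{recpreham} applied with $\g=\g^{(n)}$, and then to propagate it upward by $[\f^{(n)},\f^{(m)}]=R^{\,m-n-1}[\f^{(n)},\f^{(n+1)}]=0$.

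I expect the main obstacle to be justifying that $R$ is a recursion operator for each $\bu_t=\f^{(n)}$ rather than only for the base equation $\bu_t=\f^{(0)}$: both Lemma \ref{recpreham} and the map-symmetry identity are invoked with $\f=\f^{(n)}$, so this is precisely the ingredient that drives the induction, and Jacobi together with $[\f^{(0)},\f^{(m)}]=0$ alone does not suffice. Conceptually this propagation is the statement that a Nijenhuis recursion operator stays invariant along its own hierarchy (equivalently $L_{R\f}R=R\,L_\f R$), but in the present rational difference setting it must be argued with care, since $R=AB^{-1}$ is only defined on $\im B$ and the invariance is really a relation between the difference operators $A$ and $B$ read off from their minimal decomposition; the cleanest self-contained way to secure it is to promote the relation of Theorem \ref{preP} from $\f^{(0)}$ to $\f^{(n+1)}=A(\g^{(n)})$ inductively, producing the operator $P^{(n+1)}$ from the preHamiltonian pair identity (\ref{pair}) and then invoking the converse of Theorem \ref{preP}.
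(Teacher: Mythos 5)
Your setup is sound and matches the paper's: you reduce to a minimal decomposition, invoke Theorem \ref{prop1} to get that $R$ is Nijenhuis and hence a recursion operator for every flow $\bu_t=\f^{(n)}$, and obtain the adjacent commutators $[\f^{(n)},\f^{(n+1)}]=0$ from the first part of Lemma \ref{recpreham}. The gap is in the propagation step $[\f^{(n)},\f^{(m)}]=R[\f^{(n)},\f^{(m-1)}]=R(0)=0$. The identity $[\f,R\g]=R[\f,\g]$ holds only modulo the ambiguity of the rational operator $R=AB^{-1}$ on $\ker B$. Making it precise along the lines you yourself suggest (via Theorem \ref{preP} applied to $\f=\f^{(n)}$, which gives $A_*[\f]-\f_*A=AP$ and $B_*[\f]-\f_*B=BP$), one finds for $\g=B(x)$ the exact identities $[\f,B(x)]=B(y)$ and $[\f,A(x)]=A(y)$ with $y=P(x)+x_*[\f]$. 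Knowing $[\f^{(n)},\f^{(m-1)}]=B(y)=0$ therefore only tells you $y\in\ker B$, and $[\f^{(n)},\f^{(m)}]=A(y)$ need not vanish: right coprimality of $A$ and $B$ does not force $A$ to annihilate $\ker B$. (For the Volterra pair (\ref{volab}), $B=u(\cS-1)$ kills the constants while $A(c)=c\,u(u_1-u_{-1})\neq 0$.) So ``$R(0)=0$'' is exactly the point that fails; your proposed remedy secures that $R$ is a recursion operator for each flow, but not this.

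The paper closes the gap with the second half of Lemma \ref{recpreham}, proved by polarization: since the recursion relation (\ref{reopev}) is linear in $\f$, $R$ is also a recursion operator for $\bu_t=B(\g+\h)$, and expanding the unambiguous identity $[B(\g+\h),A(\g+\h)]=0$ by bilinearity of (\ref{bracket}) yields $[B(\h),A(\g)]=-[B(\g),A(\h)]$. Feeding $\g=\g^{(n+N)}$, $\h=\g^{(n)}$ into this, together with $B(\g^{(n+1)})=A(\g^{(n)})$, is what carries the induction on $|n-m|$ from step $N$ to step $N+1$. Your argument becomes correct if each application of ``$R$ to a vanishing bracket'' is replaced by this polarization identity; as written, the conclusion $R^{\,m-n}[\f^{(n)},\f^{(n)}]=0$ does not follow.
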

\begin{proof} We can assume that $AB^{-1}$ is a minimal decomposition of $R$. Indeed, if not we write $A=A_0C$ and $B=B_0C$ where $R=A_0B_0^{-1}$ is minimal and replace $\g^{(n)}$ by $C(\g^{(n)})$.  By Theorem \ref{prop1}, we know that $R$ is Nijenhuis and thus it is a recursion operator for all $B(\g^{(n)})$, $n\geq 0$. 
We proceed the proof by induction on $|n-m|$. If $n=m$ there is nothing to prove. If $|n-m|=1$, we deduce $[B(\g^{(n)}),B(\g^{(n+1)})]=0$ as a direct application of Lemma \ref{recpreham} 
since $B(\g^{(n+1)})=A(\g^{(n)})$ for all $n \geq 0$. Suppose that $[B(\g^{(n)}),B(\g^{(m)})]=0$ for all $n,m \geq 0$ 
such that $|n-m| \leq N$, which implies
$[B(\g^{(n+N)},B(\g^{(n+1)})]=0$. Hence by Lemma \ref{recpreham}, 
we have $[B(\g^{(n+N+1)},B(\g^{(n)})]=0$.
\end{proof}

\section{Application: a new rational recursion operator}\label{Sec5}
In this section, we construct a recursion operator of system 
\begin{equation}\label{bv}
 u_t=u^2(u_2 u_1-u_{-1} u_{-2})-u (u_1-u_{-1}):=f
\end{equation}
from its Lax
representation and show that it is Nijenhuis and generates local commuting symmetries.
In general, it is not easy to construct a recursion operator for a
given integrable equation although the explicit formula is given. The 
difficulty 
lies in how to determine the starting terms of $R$, i.e.,
the order of the operator, and how to construct its nonlocal terms. Many papers 
are
devoted to this subject, see \cite{mr1974732, mr88g:58080, hssw05}. If the Lax
representation of the equation is known, there is an amazingly simple approach 
to
construct a recursion operator proposed in \cite{mr2001h:37146}. The idea in
\cite{mr2001h:37146} can be developed for the Lax pairs that are invariant 
under 
the
reduction groups, which applied for both differential and difference equations 
\cite{wang09, wang12}.

Equation (\ref{bv}) first appeared in \cite{adler2}, where the authors 
presented 
its scalar Lax representation. We rewrite it in the matrix form as follows:
\begin{eqnarray}
&& L=\cS-\bU(\lambda)=\cS-\lambda \bU^{(1)}-\bU^{(0)}=\cS- \left(\begin{matrix}
                      0 &1 &0\\0 &0 &1\\\lambda & -\frac{1}{u} & 
\frac{\lambda}{u}
                     \end{matrix}\right) \label{A}\\
&& M=D_t-\bV(\lambda)=D_t+ \left(\begin{matrix}
                      -\frac{1}{\lambda^2}+u_{-1} &\frac{1}{\lambda}(1-u_{-1} 
u_{-2}) &u_{-1}u_{-2}-\frac{u_{-1}}{\lambda^2}\\\lambda u 
u_{-1}-\frac{u}{\lambda} &u-u_{-1} &\lambda u_{-1}-\frac{u u_{-1}}{\lambda}\\
                      \lambda^2 u- u u_1 & \lambda (u u_1-1) & \lambda^2-u
                     \end{matrix}\right)\label{B}   ,                  
\end{eqnarray}
where $\lambda$ is a spectral parameter.
The commutativity of the above operators leads to the zero curvature condition
\begin{equation}\label{lax}
\bU(\lambda)_t=\cS \bV(\lambda)\cS^{-1} \bU(\lambda)-\bU(\lambda) \bV(\lambda)
\end{equation}
and subsequently it leads to system (\ref{bv}). System (\ref{bv}) defines a 
derivation $X_f\in \cA_1$ of $\cR$ with $f=u^2(u_2 u_1-u_{-1} u_{-2})-u 
(u_1-u_{-1})$.
The representation (\ref{A}), (\ref{B}) is invariant with respect to 
transformations:
\begin{equation}
 \label{inv1}
 \cS \bV(\lambda)\cS^{-1}=-J \cT \bV(\lambda^{-1}) \cT J,\qquad 
\bU^{-1}(\lambda)= J \cT \bU(\lambda^{-1}) \cT J
\end{equation}
and
\begin{equation}
 \label{inv2}
 \bV(\lambda)=H  \bV(-\lambda) H,\qquad 
\bU(\lambda)= -H\bU(-\lambda)H,
\end{equation}
where
\[
 J=\left(\begin{array}{rrr}
          0&0&1\\0&1&0\\1&0&0
         \end{array}\right),\qquad 
H=\left(\begin{array}{rrr}
          1&0&0\\0&-1&0\\0&0&1
         \end{array}\right)\, .
\]
Transformation (\ref{inv1}) reflects the symmetry $\cT(f)=-f$ of the equation 
(\ref{bv}).

For given matrix $\bU$, we can build up a hierarchy of nonlinear systems by 
choosing
different matrices $\bV$ with the degree of $\lambda$ from $-2l$ to $2l$.
The idea to construct a recursion operator directly from a Lax representation 
is 
to
relate the different operators $\bV$ using ansatz $${\bar 
\bV}=(\lambda^{-2}+\lambda^2) \bV
+\bW$$ and then to find the relation between two flows corresponding to $\bar 
\bV$
and $\bV$. The multiplier $\lambda^{-2}+\lambda^2$ is the automorphic function 
of the group generating by transformations $\lambda\mapsto \lambda^{-1}$ and 
$\lambda \mapsto-\lambda$. Here  $\bW$ is the reminder and we assume that it 
the same symmetry as $\bV$:
\begin{eqnarray}\label{ansat}
\bW=\sum_{j=-2}^{2} \la^{j} \bW^{(j)} ,
\end{eqnarray}
where $\bW^{(j)}$ are $3\times 3$ matrices of the following form (invariant under (\ref{inv2}))
\begin{eqnarray*}
 && \bW^{(2)}=\left(\begin{matrix}
             a&0&b\\0&c&0\\d&0&e
            \end{matrix}\right); \quad \bW^{(0)}=\left(\begin{matrix}
             a_0&0&b_0\\0&c_0&0\\d_0&0&e_0
            \end{matrix}\right);\quad \bW^{(-2)}=\left(\begin{matrix}
             a_{-}&0&b_{-}\\0&c_{-}&0\\d_{-}&0&e_{-}
            \end{matrix}\right);\\
&& \bW^{(1)}=\left(\begin{matrix}
             0&r&0\\s&0&p\\0&q&0
            \end{matrix}\right); \quad
            \bW^{(-1)}=\left(\begin{matrix}
             0&r_{-}&0\\s_{-}&0&p_{-}\\0&q_{-}&0
            \end{matrix}\right)
\end{eqnarray*}
and following from $\bW$ is invariant under (\ref{inv1}), they satisfy
\begin{eqnarray*}
 \cS (\bW^{(2)})=-J \cT(\bW^{(-2)}) J; \quad \cS (\bW^{(1)})=-J \cT(\bW^{(-1)}) J; \quad \cS (\bW^{(0)})=-J \cT(\bW^{(0)}) J.
\end{eqnarray*}
The zero curvature condition leads to
\begin{eqnarray}\label{Laxt}
\begin{array}{l}
\bU_{\tau}=(\frac{1}{\lambda^2}+\lambda^2)\bU_t+\cS(\bW) \bU-\bU \bW .
\end{array}
\end{eqnarray}
Substituting the ansatz (\ref{ansat}) into it and collecting the coefficient
of powers of $\la$, we obtain six matrix equations for $\bW^{(j)}, j=-2, 
\cdots, 
2$. For example, the equation corresponding to linear terms of $\la$ is
\begin{eqnarray}\label{relt}
\bU^{(1)}_{\tau}=\cS(\bW^{(1)}) \bU^{(0)}+\cS(\bW^{(0)}) \bU^{(1)}- \bU^{(1)} 
\bW^{(0)}-\bU^{(0)} \bW^{(1)}.
\end{eqnarray}
Through them we are able to determine the entries of matrices $\bW^{(j)}$ and 
we 
finally get
\begin{eqnarray}
&&c_{-}=(\cS^2-1)^{-1}\frac{u_t}{u};\nonumber\\
&&b_0=\cS^{-1}u (\cS u-u \cS^{-1})^{-1} \left(u(\cS-\cS^{-2})u 
(\cS^2+\cS+1)+\cS^2-\cS \right) c_{-};\nonumber\\
&&e_0=(\cS^2+\cS+1)^{-1} \left(-(\cS\frac{1}{u} \cS+\frac{1}{u} 
\cS+\cS\frac{1}{u}) \cS b_0+(\cS+1)\frac{1}{u}(\cS^2-\cS)c_{-}\right)  
;\nonumber\\
&&u_{\tau}=u^2(\cS^3-1)b_0-u (u \cS-\cS^{-1} u) (\cS^2+\cS+1) c_{-}+u 
(\cS-1)e_0.\label{taut}
\end{eqnarray}
Note that
$$
\cS\frac{1}{u} \cS u+\frac{1}{u} \cS u+\cS=(\cS+1) \frac{1}{u} (\cS u-u 
\cS^{-1})+\cS^{-1}+1+\cS .
$$
We simplify the above expression of $e_0$. It becomes
$$
e_0= (\cS^{-2}-1)u (\cS^2+\cS+1)  c_{-}
-\cS^{-1} (\cS u-u \cS^{-1})^{-1} \left(u(\cS-\cS^{-2})u 
(\cS^2+\cS+1)+\cS^2-\cS 
\right) c_{-}
$$

Substituting $c_{-}, b_0$ and $e_0$ into (\ref{taut}), we obtain the relation 
between two symmetry flows $u_t$ and $u_{\tau}$. Thus we obtain the following 
statement:
\begin{Pro}\label{pro51} A recursion operator for equation (\ref{bv}) is
\begin{eqnarray}
&&R=u\left(u (\cS^2\!-\!\cS^{-1}\!) u +\cS^{-1}\!\!\!-\!1\right)(\cS 
u\!-\!u\cS^{-1}\!)^{-1}\left( u (\cS\!-\!\cS^{-2}\!)u 
(\cS^2+\cS+1)+\cS^2\!-\!\cS \right)(\cS^2\!-\!1)^{-1}\! \frac{1}{u}\nonumber\\
&&\quad +u(2 \cS^{-1} u-\cS^{-2} u-\cS u+u-u\cS) 
(\cS^2+\cS+1)(\cS^2-1)^{-1}\frac{1}{u}. \label{readler}
\end{eqnarray}
\end{Pro}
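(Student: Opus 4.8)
The plan is to verify by a direct substitution that the operator carrying the symmetry flow $u_t$ into the flow $u_\tau$ produced by the spectral construction coincides with (\ref{readler}). That the resulting $R$ is a genuine recursion operator, and not merely an operator relating these two particular flows, is already guaranteed by the construction: the ansatz $\bar\bV=(\la^{-2}+\la^2)\bV+\bW$ together with the zero-curvature relation (\ref{Laxt}) holds for every member of the hierarchy, so the same $R$ relates consecutive flows universally. Thus the content of the proof is the algebraic identification of $R$.

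First I would rewrite each solved quantity as an operator acting on $u_t$. Since $c_-=(\cS^2-1)^{-1}\frac1u\,(u_t)$, the operator $(\cS^2-1)^{-1}\frac1u$ is exactly the common rightmost factor in both lines of (\ref{readler}); composing the given formulas for $b_0$ and for the simplified $e_0$ with this factor turns them into operators applied to $u_t$ as well. Writing $G=u(\cS-\cS^{-2})u(\cS^2+\cS+1)+\cS^2-\cS$ for the block shared by $b_0$ and $e_0$, I would substitute these operator forms into $u_\tau=u^2(\cS^3-1)b_0-u(u\cS-\cS^{-1}u)(\cS^2+\cS+1)c_-+u(\cS-1)e_0$ and split the outcome into the terms carrying the nonlocal factor $(\cS u-u\cS^{-1})^{-1}$ and the terms that do not.

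For the nonlocal group, the identity $(\cS^3-1)\cS^{-1}=\cS^2-\cS^{-1}$ converts $u^2(\cS^3-1)b_0$ into $u\cdot u(\cS^2-\cS^{-1})u\,(\cS u-u\cS^{-1})^{-1}G(\cS^2-1)^{-1}\frac1u$, while the nonlocal part $-\cS^{-1}(\cS u-u\cS^{-1})^{-1}G\,c_-$ of $e_0$, multiplied by $u(\cS-1)$, contributes $u(\cS^{-1}-1)(\cS u-u\cS^{-1})^{-1}G(\cS^2-1)^{-1}\frac1u$; added together these are exactly the first line of (\ref{readler}). For the local group, I would combine $-u(u\cS-\cS^{-1}u)(\cS^2+\cS+1)c_-$ with the contribution $u(\cS-1)(\cS^{-2}-1)u(\cS^2+\cS+1)c_-$ coming from the local part of $e_0$; using $(\cS-1)(\cS^{-2}-1)=\cS^{-1}-\cS-\cS^{-2}+1$, the bracketed coefficient collapses to $2\cS^{-1}u-\cS^{-2}u-\cS u+u-u\cS$, which reproduces the second line of (\ref{readler}).

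The main obstacle is the non-commutative bookkeeping: moving the operators $\cS^{\pm1}$ past the multiplication operators while keeping the single nonlocal block $(\cS u-u\cS^{-1})^{-1}$ intact, so that the two groups separate cleanly. The key reduction here is the already-recorded identity $\cS\frac1u\cS u+\frac1u\cS u+\cS=(\cS+1)\frac1u(\cS u-u\cS^{-1})+\cS^{-1}+1+\cS$, which is precisely what lets $e_0$ be written with its nonlocal dependence confined to one term; without it the nonlocal and local pieces would not organise into the two lines of (\ref{readler}).
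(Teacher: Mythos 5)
Your proposal is correct and follows essentially the same route as the paper: the paper obtains (\ref{readler}) precisely by substituting the solved expressions for $c_-$, $b_0$ and the simplified $e_0$ into $u_\tau=u^2(\cS^3-1)b_0-u(u\cS-\cS^{-1}u)(\cS^2+\cS+1)c_-+u(\cS-1)e_0$, using the same identity to confine the nonlocal dependence to a single $(\cS u-u\cS^{-1})^{-1}$ block. Your bookkeeping of the nonlocal terms (via $(\cS^3-1)\cS^{-1}=\cS^2-\cS^{-1}$) and of the local terms (via $(\cS-1)(\cS^{-2}-1)=\cS^{-1}-\cS-\cS^{-2}+1$) reproduces both lines of (\ref{readler}) exactly as in the text.
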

We represent $R$ as
\begin{eqnarray*}
R=R^{(3)}+R^{(1)}+R^{(-1)},
\end{eqnarray*}
where 
\begin{eqnarray*}
&& R^{(3)}=u^2 (\cS^3-1)\cS^{-1}u (\cS u-u\cS^{-1})^{-1} u (\cS-\cS^{-2})u 
(\cS^2+\cS+1) (\cS^2-1)^{-1} \frac{1}{u};\\
&&R^{(-1)}=u (\cS^{-1}-1) (\cS u-u \cS^{-1})^{-1} (\cS^{-1}+1)^{-1} \frac{1}{u}.
\end{eqnarray*}
Note that $R^{(3)}$ is a recursion operator for $u_t=u^2(u_1 u_2-u_{-1} 
u_{-2})$ 
\cite{wang12} and that $R^{(-1)}$ is the 
inverse recursion operator 
for the Volterra chain $u_t=u (u_1-u_{-1})$ \cite{kmw13}.

The recursion operator (\ref{readler}) is not weakly nonlocal. We now rewrite 
it 
as a rational difference operator. It is convenient to first write $R$ as
\begin{eqnarray}\label{readler1}
 R=\left(Q\Delta^{-1}C+P\right)(\cS^2-1)^{-1} \frac{1}{u},
\end{eqnarray}
where
\begin{eqnarray}
&& Q=u\left(uu_1-1+(1-uu_{-1})S^{-1}\right); \ \ \Delta=\cS u -u \cS^{-1}; 
\label{qq}\\
&& C=w_2\cS^2-w_{-1} \cS, \quad w=1-u_{-1} u_1;\label{cc}\\
&& P=\left(u^2\cS u (\cS-\cS^{-2})u+Q u_{-1}+ u(2 \cS^{-1} u-\cS^{-2} u-\cS 
u+u-u\cS)\right)(\cS^2+\cS+1)\nonumber\\
&&\quad +u^2 \cS (\cS^2-\cS)+Q(u_{-1} \cS^{-1}+u_2 \cS)=\tilde{P}(S+1)+p; 
\label{pp} 
\end{eqnarray}
where $\tilde P$ is a difference operator and $p=u(u_2+2u_1-2u_{-1}-u_{-2})$.

\begin{Lem}\label{lem51}
The recursion operator $R$ given by (\ref{readler1}) can be factorized as 
$R=AB^{-1}$ with 
\begin{equation}\label{Bb}
B=u(\cS-\cS^{-1})(\cS \alpha + \beta +\cS^{-1}\gamma ),
\end{equation}
where
\begin{eqnarray*}
&& \alpha=u_{-1}uw_{-1}w-u_{-1}u_1w_1w_2;\quad 
\beta=u^2w^2-u_{-1}u_1w_{-2}w_2;\\
&&\gamma=u_{1}uw_{1}w-u_{-1}u_1w_{-1}w_{-2}; \quad w=1-u_{-1} u_1.
\end{eqnarray*} 
and
\begin{equation}\label{Aa}
A=Q (\frac{1}{u}w_1 \alpha_1 \cS+\frac{1}{u_1} w \gamma) +P\cS^{-1} (S \alpha + 
\beta + S^{-1}\gamma ).
\end{equation}
\end{Lem}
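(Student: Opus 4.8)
The identity to establish is $R=AB^{-1}$, which in the skew field $\cQ$ is equivalent to $RB=A$. The plan is to compute $RB$ directly from (\ref{readler1}) and (\ref{Bb}) and to match it term by term against (\ref{Aa}). The key observation is that the nonlocal factor telescopes: as elements of $\cR$ one has $\cS^2-1=(\cS-\cS^{-1})\cS$, hence $(\cS^2-1)^{-1}(\cS-\cS^{-1})=\cS^{-1}$, and together with $\frac{1}{u}\cdot u=1$ this yields
\begin{equation*}
RB=\bigl(Q\Delta^{-1}C+P\bigr)\cS^{-1}\bigl(\cS\alpha+\beta+\cS^{-1}\gamma\bigr).
\end{equation*}
So all the nonlocality of $R$ cancels against $B$ and $RB$ is a genuine difference operator, as it must be if the factorization holds.

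Distributing, the summand $P\cS^{-1}(\cS\alpha+\beta+\cS^{-1}\gamma)$ is literally the second term of $A$ in (\ref{Aa}); so it remains only to reproduce the first term, i.e. to show
\begin{equation*}
Q\Delta^{-1}C\cS^{-1}\bigl(\cS\alpha+\beta+\cS^{-1}\gamma\bigr)=Q\Bigl(\frac{1}{u}w_1\alpha_1\cS+\frac{1}{u_1}w\gamma\Bigr).
\end{equation*}
Since $\cR$ is a domain and $Q,\Delta\neq0$ are invertible in $\cQ$, I would left-cancel $Q$ and then left-multiply by $\Delta$, which reduces the whole lemma to the single local operator identity
\begin{equation*}
C\cS^{-1}\bigl(\cS\alpha+\beta+\cS^{-1}\gamma\bigr)=\Delta\Bigl(\frac{1}{u}w_1\alpha_1\cS+\frac{1}{u_1}w\gamma\Bigr)\qquad\text{in }\cR.
\end{equation*}

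To verify this identity I would expand both sides into Laurent polynomials in $\cS$ using the multiplication rule (\ref{smult}), writing $\Delta=u_1\cS-u\cS^{-1}$ and $C=w_2\cS^2-w_{-1}\cS$. Both sides are supported on $\cS^2,\cS,\cS^0,\cS^{-1}$; the coefficients of $\cS^2$ (both equal to $w_2\alpha_2$) and of $\cS^{-1}$ (both equal to $-w_{-1}\gamma_{-1}$) agree at once, while the coefficients of $\cS$ and of $\cS^0$ reduce to the two scalar identities
\begin{equation*}
w_2\beta_1-w_{-1}\alpha_1=\frac{u_1}{u_2}w_1\gamma_1,\qquad w_2\gamma-w_{-1}\beta=-\frac{u}{u_{-1}}w\alpha.
\end{equation*}
Substituting the explicit $\alpha,\beta,\gamma$ and $w=1-u_{-1}u_1$, the quartic cross-terms cancel in each case and both sides factor through the common expression $u_1w_1w_2-uww_{-1}$ (with prefactors $u_1w_1$ and $uw$ respectively), whence the equality is immediate.

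The computation is elementary throughout. The only genuine content is recognizing the telescoping $(\cS^2-1)^{-1}(\cS-\cS^{-1})=\cS^{-1}$ that clears the nonlocal terms, and of course arriving at the correct factors $\alpha,\beta,\gamma$ (hence $B$) in the first place. I therefore expect the main difficulty to be organizational rather than conceptual: keeping the shift bookkeeping of (\ref{smult}) straight while expanding $C\cS^{-1}(\cS\alpha+\beta+\cS^{-1}\gamma)$ and $\Delta(\frac{1}{u}w_1\alpha_1\cS+\frac{1}{u_1}w\gamma)$, and confirming the two scalar cancellations above.
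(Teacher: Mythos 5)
Your proposal is correct and follows essentially the same route as the paper: the whole lemma reduces to rewriting $\Delta^{-1}C$ as a right fraction via the single identity $C\cS^{-1}(\cS\alpha+\beta+\cS^{-1}\gamma)=\Delta\bigl(\frac{1}{u}w_1\alpha_1\cS+\frac{1}{u_1}w\gamma\bigr)$, after which $A=RB$ by definition. The only difference is that the paper merely asserts this identity ("it turns out that..."), whereas you carry out the coefficient-by-coefficient verification — your two scalar identities and the common factor $u_1w_1w_2-uww_{-1}$ with prefactors $u_1w_1$ and $uw$ check out.
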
 
\begin{proof}
To find $A$ and $B$ for (\ref{readler1}) we need to rewrite $\Delta^{-1}C$ as a 
right fraction. It turns out that 
$$C \cS^{-1}(\cS \alpha + \beta + \cS^{-1}\gamma)= \Delta (\frac{1}{u}w_1 
\alpha_1 \cS+\frac{1}{u_1} w \gamma),$$ 
from which we can find that $\alpha, \beta$ and $\gamma$ as stated is a 
solution. Then $A=R B$ by definition as given in the statement.
\end{proof}
The authors in \cite{zhang02} showed that the recursion operators derived from 
certain Lax representations under 
certain boundary conditions are Nijenhuis once every step is uniquely 
determined.  Here we prove the
Nijenhuis property using the results in Section \ref{Sec3}.
\begin{The}\label{AdlerN}
The operators $A$ and $B$ defined by (\ref{Aa}) and (\ref{Bb}) are compatible 
preHamiltonian operators. In particular,
the recursion operator $R$ for equation (\ref{bv}) given by (\ref{readler}) is 
Nijenhuis.
\end{The}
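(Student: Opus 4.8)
The two assertions are not independent: as soon as $A$ and $B$ are known to form a preHamiltonian pair, Theorem \ref{prop1} gives at once that $R=AB^{-1}$ is Nijenhuis. So the whole content is the pair property, namely that $A+\la B$ is preHamiltonian for every $\la\in\cK$. By Definition \ref{defpair} and the linearity $\omega_{A+\la B}=\omega_A+\la\omega_B$, this is equivalent to three identities together: that $A$ is preHamiltonian with some form $\omega_A$, that $B$ is preHamiltonian with some form $\omega_B$ (both in the sense of (\ref{PreH1})), and that $\omega_A,\omega_B$ satisfy the cross relation (\ref{pair}). The plan is to produce $\omega_A$ and $\omega_B$ explicitly and then verify (\ref{pair}), following the scheme already used for the Volterra operators (\ref{volab}) in Section \ref{Sec3}. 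One could instead try to deduce the pair property from the theorem at the end of Section \ref{Sec3} relating commuting hierarchies to preHamiltonian pairs, but that route would first demand an independent proof that the whole hierarchy commutes, which is harder; I prefer the finite, direct computation.

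The computation is organised around the factorisations of Lemma \ref{lem51}. Writing $G=\cS\alpha+\beta+\cS^{-1}\gamma$ we have $B=u(\cS-\cS^{-1})G$ of total order $4$, while $A=Q\tilde G+P\cS^{-1}G$ of total order $8$, where $\tilde G=\tfrac1u w_1\alpha_1\cS+\tfrac1{u_1}w\gamma$ and $Q,P$ are the explicit operators (\ref{qq})--(\ref{pp}); all coefficients $u,\alpha,\beta,\gamma,w$ are short rational expressions in $u$ and its shifts, so the Fr\'echet derivatives $A_*[\,\cdot\,]$ and $B_*[\,\cdot\,]$ are mechanical to assemble. To extract $\omega_B$ I would form the antisymmetric expression $B_*[Ba](b)-B_*[Bb](a)$ and divide it on the left by $B$: working down in the top shift of $a$, each step is fixed by the leading monomial $u\alpha_2\cS^2$ of $B$, which determines one monomial of $\omega_B(a,b)$; one then subtracts $B\,\omega_B$ and repeats until the remainder vanishes. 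The same loop applied to $A$ yields $\omega_A$. Here the decisive algebraic input is the identity $C\,\cS^{-1}G=\Delta\tilde G$ of Lemma \ref{lem51}, with $\Delta=\cS u-u\cS^{-1}$ and $C$ as in (\ref{cc}), which ties the two summands of $A$ together over common right factors and is what makes the division by $A$ close up.

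To keep the bookkeeping finite I would use the structure already attached to this model. The reflection $\cT$ is an automorphism of $\cR$, transformation (\ref{inv1}) records $\cT(f)=-f$, and the whole construction is $\cT$--graded, so it is enough to check the identities on homogeneous components. The coefficients are moreover weight--homogeneous in $u$, which fixes the admissible monomials of $\omega_A$ and $\omega_B$ up to finitely many undetermined constants, turning the extraction into a small linear problem. With these reductions the verification becomes a template computation that is entirely mechanical and can be confirmed with computer algebra. A convenient partial check is Theorem \ref{preP}: since $R=AB^{-1}$ is already a recursion operator for (\ref{bv}) by Proposition \ref{pro51}, there is a difference operator $\widehat P$ with $A_*[f]-f_*A=A\widehat P$ and $B_*[f]-f_*B=B\widehat P$, and these relations must be compatible with the candidate forms $\omega_A,\omega_B$, which helps catch arithmetic errors.

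The genuine obstacle is the two left divisions themselves, above all the division by $A$: since $\Ord A=8$ while $\Ord B=4$, the antisymmetrised Fr\'echet expression to be divided is long, and one must confirm that it really lies in $\im A$ --- equivalently, that the iterative subtraction terminates with zero remainder --- before $\omega_A$ can be read off. The cross relation (\ref{pair}) then requires that a further long expression equal $A\,\omega_B+B\,\omega_A$ exactly; this is the step that uses a true compatibility of $A$ and $B$, rather than the separate preHamiltonian property of each, and it is governed by the same cancellations encoded in the Lemma \ref{lem51} identity. Once $\omega_A$ and $\omega_B$ are exhibited and (\ref{pair}) is confirmed, $A$ and $B$ form a preHamiltonian pair, and Theorem \ref{prop1} completes the proof that $R$ is Nijenhuis.
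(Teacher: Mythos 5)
Your proposal follows essentially the same route as the paper: the paper likewise reduces everything to showing that $I=A+\lambda B$ is preHamiltonian for all $\lambda$ and verifies this by a computer algebra (Maple) computation that iteratively extracts $\omega_I$ by dividing on the left using the leading monomial $u^2u_1u_2\,\cS^4\alpha$ of $I$, terminating with zero remainder after $11$ steps, and then invokes Theorem \ref{prop1}. Splitting the verification into the three $\lambda$-coefficients (preHamiltonicity of $A$, of $B$, and the cross relation (\ref{pair})) instead of carrying a symbolic $\lambda$ through one computation is only a cosmetic difference.
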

\begin{proof}
We know from Lemma \ref{lem51} that $R=AB^{-1}$.  To prove that it is Nijenhuis, we 
only need to show operators $A$ 
and $B$ form a preHamiltonian pair following from Theorem \ref{prop1}. 

Let 
$I=A+\lambda B$. For any $a, b\in\cF$ and 
constant $\lambda$, we use computer algebra package Maple to compute
$e^{(0)}=I_*[Ia](b)-I_*[Ib](a)$, which is linear in $a$ and its shifts. We take 
the coefficient of the highest 
order term $a_k$ (here $k=11$) in $e^{(0)}$ and denote it by $v^{(0)}$. Notice that the 
highest order term in $I$ is $u^2 u_1 u_2 
\cS^4 \alpha$. We set $\omega^{(0)}=\frac{1}{\alpha}\cS^{-4}(\frac{v^{(0)} 
a_k}{u^2 u_1 u_2 })$. We then compute $ 
e^{(1)}=e^{(0)}-I(\omega^{(0)})$ and repeat the procedure. Finally we get 
$e^{(11)}=0$ after $n=11$ steps implying $I$ is 
preHamiltonian.
\end{proof}

Since the operator $R$ is not weakly nonlocal, the results on the locality of 
symmetries generated by $R$ in 
\cite{wang09} are no longer valid. In the rest of this section, we are going to 
show that $R$ generates 
infinitely many commuting symmetries of (\ref{bv})  starting from the equation 
itself. 

\begin{Pro}\label{propr}
Let $h$ be a difference polynomial such that $R$ is a recursion operator for 
$u_t=h$. Then 
$h$ lies in the image of $B$. More 
precisely
$h=B(x)$ for some $x \in \cF$ and $A(x)$ is a difference polynomial. Moreover, 
$R$
is a recursion operator for $u_t=A(x)$.
\end{Pro}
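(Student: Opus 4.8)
The plan is to convert the recursion hypothesis into operator identities via Theorem~\ref{preP}, then to establish the two locality statements (that $h\in\im B$ and that $A(x)$ is polynomial), and finally to deduce the last assertion from the Nijenhuis property proved in Theorem~\ref{AdlerN}. First I would apply Theorem~\ref{preP} with $\f=h$: since $R=AB^{-1}$ is a recursion operator for $u_t=h$, there is a single difference operator $P_h$ with $A_*[h]-h_*A=AP_h$ and $B_*[h]-h_*B=BP_h$. Feeding the second identity an arbitrary $a\in\cF$ and using the Leibniz rule $X_h(B(a))=B_*[h](a)+B(a_*[h])$ together with the bracket $[h,B(a)]=X_h(B(a))-h_*[B(a)]$, one gets $[h,B(a)]=B\bigl(a_*[h]+P_h(a)\bigr)$, so that $\im B$ (and likewise $\im A$) is stable under $[h,\,\cdot\,]$. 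This is the structural consequence of the hypothesis I would carry into the locality analysis.

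The crux is to produce $x\in\cF$ with $h=B(x)$. Using the factorization $B=u(\cS-\cS^{-1})(\cS\alpha+\beta+\cS^{-1}\gamma)$ from Lemma~\ref{lem51}, membership $h\in\im B$ amounts to the two successive integrability conditions $h/u\in\im(\cS-\cS^{-1})=\im(\cS^2-1)$ and then $(\cS-\cS^{-1})^{-1}(h/u)\in\im(\cS\alpha+\beta+\cS^{-1}\gamma)$, after which $x=(\cS\alpha+\beta+\cS^{-1}\gamma)^{-1}(\cS-\cS^{-1})^{-1}(h/u)$. The tool I would exploit here is the $\cT$-grading: a direct check gives $\cT(\alpha)=\gamma$, $\cT(\beta)=\beta$, so the inner factor is $\cT$-even and $B\in\cR_1$, hence $B^\dagger\in\cR_1$ and $\ker B^\dagger$ is $\cT$-invariant. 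Since $f\in\cF_1$ and $R$ comes from a $\cT$-invariant Lax pair, I would first reduce to the case where $h$ is $\cT$-homogeneous (splitting $h$ into its $\cF_0$ and $\cF_1$ parts), and then use that a $\cT$-odd element has vanishing functional to kill the pairing of $h$ against the odd cosymmetries in $\ker B^\dagger$. The hard part is exactly that $B$ need \emph{not} be a full-kernel operator over $\cF$ (one computes $\dim_\cK\ker B\le 3<4=\Ord B$), so the image/adjoint duality of Theorem~\ref{thmrw} and Corollary~\ref{cork} does not apply verbatim; closing the argument on the even remainder of $\ker B^\dagger$ is where I expect to need the explicit generators and the technical lemmas of Appendix~B.

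The second substantial point is that $A(x)=R(h)$ is again a difference polynomial, even though the potential $x$ generically carries nonlocal tails coming from $(\cS-\cS^{-1})^{-1}$. I would track these tails through the rational presentation $R=\bigl(Q\Delta^{-1}C+P\bigr)(\cS^2-1)^{-1}\frac1u$ of (\ref{readler1}), using the decomposition $P=\tilde P(\cS+1)+p$ from (\ref{pp}) and the key factorization identity $C\cS^{-1}(\cS\alpha+\beta+\cS^{-1}\gamma)=\Delta\bigl(\frac1u w_1\alpha_1\cS+\frac1{u_1}w\gamma\bigr)$ from the proof of Lemma~\ref{lem51}, to exhibit the cancellation of the nonlocal contributions. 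The finitely many surviving nonlocal terms are organized by $A(\ker B)$ and $\ker B^\dagger$ exactly as in Corollary~\ref{cork}, and I would show their coefficients vanish on the polynomial $h$; this is again where the appendix lemmas enter.

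Finally, once $h=B(x)$ with $A(x)$ polynomial, one has $R(h)=AB^{-1}B(x)=A(x)$. By Theorem~\ref{AdlerN} the operator $R$ is Nijenhuis, so by the hierarchy property recorded after (\ref{reopev}) a Nijenhuis recursion operator for $u_t=h$ is automatically a recursion operator for $u_t=R(h)$; applying this with $R(h)=A(x)$ gives the last assertion. I expect the whole proof to hinge on the second paragraph: the abstract recursion identity only yields invariance of $\im B$ under $[h,\,\cdot\,]$, not membership, so the genuine work is the constructive verification of the two integrability conditions and the nonlocal cancellation, both controlled by the explicit structure of $A$ and $B$ and by the lemmas of Appendix~B.
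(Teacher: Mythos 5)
You have the right skeleton --- factor $B=u(\cS^2-1)E$ with $E=\cS^{-1}(\cS\alpha+\beta+\cS^{-1}\gamma)$, integrate in two stages, and finish with $R(h)=A(x)$ plus the Nijenhuis property of Theorem \ref{AdlerN} --- and that last step matches the paper exactly. But the two central locality claims are never actually proved, and the mechanisms you propose for them would not work. For the first stage you suggest pairing $h$ against $\ker B^\dagger$, organized by the $\cT$-grading; as you yourself observe, $B$ is not a full-kernel operator, so Theorem \ref{thmrw} and Corollary \ref{cork} give you no such pairing, and the grading (which is correct: $\cT B\cT=-B$) only constrains parities --- it cannot produce membership of $h/u$ in $\im(\cS^2-1)$ with a \emph{polynomial} potential, let alone membership of that potential in $\im E$. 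The idea that carries the entire weight in the paper, and which is absent from your proposal, is asymptotic: substitute $h=u(g_2-g)$ into the recursion identity for the explicit presentation \eqref{readler1}, expand the nonlocal terms as Laurent series in $\cS^{-1}$ using $\Delta^{-1}=\sum_{n\ge 0}\beta^n\cS^{-2n-1}$, and apply the projection $\pi$ onto difference polynomials to the coefficients of $\cS^{-N}$ for $N\gg 0$. Because the $\beta^n$ acquire unboundedly many poles while every other ingredient has a bounded pole set, the obstruction terms (the remainder $r'$ in Lemma \ref{lem3}, the constant $\lambda$ in Lemma \ref{lem7}) are forced to vanish; this, together with Lemma \ref{lem8}, is what yields $h=u(g_2-g)$ with $g$ polynomial and then $C(g)=\Delta(k)$ with $k$ polynomial. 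Saying ``this is where I expect to need the appendix lemmas'' without identifying this mechanism leaves the crux unproved.

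Your third paragraph also misreads where the difficulty sits: you frame the polynomiality of $A(x)$ as a delicate cancellation of nonlocal tails to be organized by Corollary \ref{cork}, but in the paper this step is immediate once the first stage is done. Writing $M=CE=\Delta D$ for the right least common multiple (this is precisely the factorization identity in the proof of Lemma \ref{lem51}), Lemma \ref{lem9} converts $C(g)=\Delta(k)$ into a common potential $x$ with $g=E(x)$ and $k=D(x)$, whence $h=u(\cS^2-1)E(x)=B(x)$ and $A(x)=QD(x)+PE(x)=Q(k)+P(g)$ is manifestly a difference polynomial, since $Q$ and $P$ have polynomial coefficients. So the genuine gap is concentrated in one place: the Laurent-expansion-plus-projection argument proving the two locality statements is neither supplied nor replaced by anything that could do its job.
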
 

We will break the proof of this proposition in two parts using \eqref{readler1}.
First we will prove that $h=u(g_2-g)$ for some difference polynomial $g$. Second 
we will show that $C(g)=\Delta(k)$ for some difference polynomial $k$. We begin with proving a few lemmas.
To improve the readability, we put them in Appendix B. We now write the proof for Proposition \ref{propr} using these lemmas.

\begin{proof}
By Lemma \ref{lem3}, we know that $h=u(g_2-g)$ for some difference polynomial 
$h$. By Lemma \ref{lem4} and Lemma 
\ref{lem7},
for some constant $\lambda \in \cK$  we get that
\begin{equation*}
C(g)_*u(S^2-1)-C(g)(S^2-S) \equiv \lambda C.
\end{equation*}
Since $g$ is a difference polynomial, the constant term in 
$C(g)_*u(S^2-1)-C(g)(S^2-S)-\lambda C$ is $\lambda 
(S-S^2)$. This constant
term must be divisible on the left by $\Delta$, which implies $\lambda=0$. 
Moreover, we can divide the congruence 
relation by $(S^2-S)$ on the
right since $\Delta$ has a trivial kernel:
\begin{equation*}
C(g)_*u(1+S^{-1}) \equiv C(g).
 \end{equation*}
  After applying Lemma \ref{lem8} we deduce that $C(g)=\Delta(k)$ for 
some difference polynomial $k$.

Let 
$M$ be a generator of the right ideal $C \cR \cap \Delta \cR$ in $\cR$. This 
means that
$M=CE=\Delta D$ for some pair of right coprime difference operators $D$ and $E$. 
By Lemma \ref{lem9},
there exists  $x \in \cF$ such that $g=E(x)$ and $k=D(x)$. Since $B=u(S^2-1)E$, 
we conclude that $h=B(x)$. Finally, 
$A=QD+PE$, 
hence $A(x)=P(g)+Q(k)$ is a difference polynomial. $R$ is a recursion operator for $u_t=A(x)$ 
since $R$ is Nijenhuis following from 
Theorem \ref{AdlerN}.
\end{proof}

\begin{The}
 There exists a sequence $g^{(2)},g^{(4)},g^{(6)},...$ in $\cF$ such that 
\begin{enumerate} 
\item[(1)] $u^2(u_1u_2-u_{-1}u_{-2})-u(u_1-u_{-1})=B(g^{(2)})$; 
\item[(2)] $A(g^{(2n)})=B(g^{(2n+2)}) \text{  for all }n \geq 1$; 
\item[(3)] $B(g^{(2n)})$ is a difference polynomial for all $n \geq 1$; 
\item[(4)] $ [B(g^{(2n)}),B(g^{(2m)})]=0$ for all $n,m \geq 1$;
\item[(5)] The order of $B(g^{(2n)})$ is $(-2n,2n)$;
\item[(6)]  $R$ is a recursion operator for all the $u_t=B(g^{(2n)})$.
\end{enumerate}
Finally, let $V=\text{Span}_{\cK} \{ B(g^{(2n)}) | n \geq 1 \}$. If $f \in \cF$ commutes with some element 
$h \in V$, then $f \in V$.
\end{The}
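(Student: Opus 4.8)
The plan is to show that $V$ is self-centralising by a leading-coefficient analysis followed by a downward induction on the order. Fix the given nonzero $h=\sum_{n}c_nB(g^{(2n)})\in V$ and let $2N$ be its top order, so that by part (5) the $B(g^{(2n)})$ have pairwise distinct orders $(-2n,2n)$, are $\cK$-linearly independent, and the top coefficient $H:=\tfrac{\partial h}{\partial u_{2N}}$ equals $c_N G_N$ with $c_N\neq 0$, where $G_N:=\tfrac{\partial}{\partial u_{2N}}B(g^{(2N)})$. A nonzero constant $c$ satisfies $[c,h]=X_c(h)=c\sum_j\tfrac{\partial h}{\partial u_j}$, which is nonzero for our $h$; hence the only constant commuting with $h$ is $0$, and from now on $f$ is non-constant of finite order $(p,q)$. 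I first reduce to $f$ being $\cT$-homogeneous: applying $\cT$ to $[f,h]=0$ and using $\cT h=-h$ (each $B(g^{(2n)})\in\cF_1$) gives $[\cT f,h]=0$, so both $\cT$-components of $f$, lying in $\cF_0$ and $\cF_1$, commute with $h$ and may be treated separately; a nonzero $\cT$-homogeneous element has balanced order $(-q,q)$ because $\cT$ interchanges $\tfrac{\partial}{\partial u_q}$ and $\tfrac{\partial}{\partial u_{-q}}$. I then induct on $\Ord f=2q$.

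The heart of the argument is the leading-coefficient relation. Comparing in $[f,h]=h_*[f]-f_*[h]=0$ the coefficients of the highest variable $u_{q+2N}$ — which in $X_f(h)$ arises only from $\cS^{2N}(f)\,\tfrac{\partial h}{\partial u_{2N}}$ and in $X_h(f)$ only from $\cS^{q}(h)\,\tfrac{\partial f}{\partial u_q}$ — yields, with $F:=\tfrac{\partial f}{\partial u_q}$,
\[
\frac{\cS^{2N}(F)}{F}=\frac{\cS^{q}(H)}{H}.
\]
I would analyse this multiplicative equation through the $u_j$-adic valuations $v_j:\cF^\times\to\bbbz$. Since $v_j(\cS^aX)=v_{j-a}(X)$, the relation forces $v_{j-2N}(F)-v_j(F)=v_{j-q}(H)-v_j(H)$ for all $j$; summing over each residue class modulo $2N$ makes the left side telescope to zero, so the integers $B_r:=\sum_{j\equiv r\ (\mathrm{mod}\ 2N)}v_j(H)$ satisfy $B_{r-q}=B_r$, i.e. $q$ is a period of the sequence $(B_r)_{r\in\bbbz/2N\bbbz}$. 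The same computation applied to $B(g^{(2)})\in V$, which commutes with $h$ by part (4), shows that $2$ is a period of $(B_r)$; its minimal period, a divisor of $2$, must then equal $2$ provided $(B_r)$ is not constant, so that the set of periods is exactly the even residues and $q$ is forced to be even.

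Granting $q=2M\geq 2$, I peel off the top. The relation applied to $B(g^{(2M)})$ (which commutes with $h$ and has top order $2M$) gives $\cS^{2N}(G_M)/G_M=\cS^{q}(H)/H$, hence $\cS^{2N}(F/G_M)=F/G_M$; since the fixed field of $\cS^{2N}$ in $\cF$ is exactly $\cK$, we get $F=\lambda G_M$ for some $\lambda\in\cK$. Then $f-\lambda B(g^{(2M)})$ still commutes with $h$ and, its two leading coefficients having cancelled, has strictly smaller top order; for the balanced $f$ this means strictly smaller total order. By the induction hypothesis $f-\lambda B(g^{(2M)})\in V$, whence $f\in V$; if $f$ was $\cT$-even this forces $f\in V\cap\cF_0=\{0\}$, while the $\cT$-odd case lands in $V$ directly. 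The base case $\Ord f=0$ is a single-variable function $f(u_k)$, for which substituting into $[f(u_k),h]=0$ and inspecting the dependence on the extreme variable leaves only $f'=0$, i.e. $f=0$.

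The main obstacle is the nondegeneracy input that $(B_r)$ is not constant, equivalently that the sign-weighted valuation $\sum_j(-1)^jv_j(G_N)=\sum_{r}(-1)^rB_r$ is nonzero for the leading coefficient of every hierarchy member. For $N=1$ this is immediate, since $G_1=u^2u_1$ gives $\sum_j(-1)^jv_j(G_1)=2-1=1$. For general $N$ it has to be extracted from the recursion $B(g^{(2N+2)})=A(g^{(2N)})$: tracking how $G_N$ transforms under one application of the leading part of $A$ produces a recursion for $G_N$, and hence (using $v_j(\cS^aX)=v_{j-a}(X)$) for $\sum_j(-1)^jv_j(G_N)$, from which non-vanishing follows inductively. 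Establishing this leading-coefficient computation, together with the routine single-variable base-case check, is the only genuinely equation-specific part of the argument; everything else is the formal order bookkeeping above.
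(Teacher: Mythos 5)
Your proposal addresses only the final sentence of the theorem (that the centraliser of any $h\in V$ is $V$ itself) and silently assumes parts (1)--(6) throughout: you invoke ``part (4)'' and ``part (5)'' as given facts. But those six items are the bulk of the statement, and they are exactly what the paper has to work for: existence of $g^{(2)}$ with $B(g^{(2)})=f$ and polynomiality of $A(g^{(2)})$ come from Proposition \ref{propr} (which itself rests on Lemmas \ref{lem3}--\ref{lem9}), the iteration producing $g^{(4)},g^{(6)},\dots$ uses that $R$ is Nijenhuis (Theorem \ref{AdlerN}) so that $R$ remains a recursion operator for each new flow, and the pairwise commutativity (4) is Proposition \ref{com} applied to the preHamiltonian pair $A,B$. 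None of this is present in your write-up, so as a proof of the stated theorem it has a genuine and large gap: the sequence whose span you are analysing is never shown to exist or to have the properties you use.

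For the part you do treat, your method is essentially the paper's own (which is explicitly only a sketch there): compare leading coefficients in $[f,h]=0$ to show the top of $f$ matches the top of some $B(g^{(2k)})$ up to a constant, subtract, and induct on total order with a single-variable base case. Your valuation bookkeeping $B_{r-q}=B_r$ is a nice way to make ``the order of $f$ must be even and its leading coefficient proportional to $G_M$'' precise, and the identity $\cS^{2N}(F/G_M)=F/G_M$ together with the triviality of the fixed field of $\cS^{2N}$ is correct. However, the argument hinges on the nondegeneracy that $(B_r)$ is non-constant for \emph{every} $N$, which you verify only for $N=1$ ($G_1=u^2u_1$) and defer for general $N$ to an unexecuted analysis of the recursion $B(g^{(2N+2)})=A(g^{(2N)})$; this is precisely the equation-specific content that would have to be supplied. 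Two smaller points: the cancellation of the \emph{lower} leading coefficient of $f-\lambda B(g^{(2M)})$ needs the $\cT$-parity computation spelled out (in the $\cT$-even case the bottom coefficients add rather than cancel, so the ``strictly smaller total order'' claim requires the separate argument you only gesture at), and the base case ``$f=f(u_k)$ commuting with $h$ implies $f=0$'' is asserted rather than checked.
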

\begin{proof}
We already know that $R$ is a recursion operator for \eqref{bv}, hence by Proposition 
\ref{propr} there exists $g^{(2)} \in 
\cF$ such 
that $(1)$ is satisfied and $A(g^{(2)})$ is a difference polynomial. Since $R$ is
Nijenhuis (following from 
Theorem \ref{AdlerN}) it must be a recursion operator for $u_t=A(g^{(2)})$ as well. Using Proposition 
\ref{propr} a second time 
we find 
$g^{(4)}\in \cF$ such that $B(g^{(4)})=A(g^{(2)})$ and $A(g^{(4)})$ is a difference 
polynomial. Iterating this argument we 
prove the statements $(2)$, $(3)$ and $(6)$. Statement $(5)$ is obvious and statement $(4)$ follows from Proposition 
\ref{com} and Theorem \ref{AdlerN}.
Finally, if $f \in \cF$ commutes with $h \in V$, let us 
sketch the proof of how to show that $f \in V$. If $(M,N)$ is the order of $f$ 
and $N>0$ it is not hard to prove from the equation
\begin{equation} \label {eq6}
X_h(f_*)=[h_*,f_*]+X_f(h_*).
\end{equation}
Note that the leading term of $f_*$ is up to multiplication by a constant the 
leading 
term of $B(g^{(2k)})_*$ for some $k \geq 2$. Similarly, if $M<0$, one sees that 
the negative leading term of $f_*$ is up to multiplication by a constant the 
negative leading term of $B(g^{(2l)})_*$ for some $l \geq 2$. We conclude by 
induction on the total order of $f$, after checking that the only $f$ commuting 
with an element of $V$ and which depend either on $u,...,u_N$ or on 
$u_{-N},...,0$ for $N \geq 0$ is $f=0$.
\end{proof}

\begin{Rem}
Note that $g^{(2)}=\frac{u_{-1}uu_1w_{-1}ww_1}{\alpha \gamma}$ is in $\cF$ but is not a difference polynomial.
\end{Rem}

\begin{Rem}
Let $\cT$ be the automorphism of $\mathrm{K}$ defined in section \ref{Sec2}. Then we have
 $\cT A \cT=-A$ and  $\cT B \cT=-B.$ This implies 
that $\cT (B(g^{(2n)}))=-B(g^{(2n)})$  and $\cT(g^{(2n)})=g^{(2n)}$ for 
all $n \geq 1$.
\end{Rem}

\section{Application: inverse Nijenhuis recursion operators}\label{Sec6}
In \cite{kmw13}, the authors listed integrable differential-difference equations 
with their algebraic properties. For some systems, they presented both recursion 
operators and their inverse in weakly nonlocal
form.  In this section, we'll explain the (non)existence of weakly nonlocal 
inverse recursion 
operators and how to work out the nonlocal terms based on Theorem \ref{thmrw} and its corollaries in 
Section \ref{sec23} using examples in 
\cite{kmw13}. 

We select four examples: in section \ref{toda}, we show the nonexistence of 
weakly nonlocal inverse 
recursion operator for the Toda lattice; in section \ref{secrtoda}, we show the 
existence of weakly nonlocal inverse 
recursion operator with only one nonlocal term for a relativistic Toda system; 
in section \ref{AL}, we deal with a 
recursion operator with two nonlocal terms; for our last example, we demonstrate 
that the inverse operator $R$ itself 
is not weakly nonlocal, but that of $R-{\rm id}$ is!
\subsection{The Toda Lattice}\label{toda}
The Toda equation \cite{toda67} is given by
\begin{eqnarray*}
q_{tt}=\exp(q_{1}-q)-\exp(q-q_{-1})
\end{eqnarray*}
In the Manakov-Flaschka coordinates \cite{flaschka1, manakov} defined by 
$u=\exp(q_1-q),\  v=q_{t}$, it can be rewritten as two-component evolution 
system:
\begin{eqnarray}\label{todaeq}
\left\{ {\begin{matrix} u_t= u(v_{1}-v)\\
   v_t=  u-u_{-1}  \end{matrix} } \right. ,
 \end{eqnarray}
which admits two compatible Hamiltonian local structures
\begin{equation*}
H_1=\left(\begin{matrix}
             0&u(\cS-1)\\(1-\cS^{-1})u&0
            \end{matrix}\right), \hspace{2 mm}
H_2=\left(\begin{matrix}
             u(\cS-\cS^{-1})u&u(\cS-1)v\\v(1-\cS^{-1})u&u\cS-\cS^{-1}u
            \end{matrix}\right)
\end{equation*}
It is clear to see that ${\rm Ord} H_1=2$ and that the kernel of $H_1$ is 
spanned by $\left(\begin{matrix} \frac{1}{u} \\ 0 \end{matrix} \right)$ and 
$\left(\begin{matrix} 0 \\ 1 \end{matrix} \right)$. 
One can check that the kernel of $H_2$ is spanned by 
$\left(\begin{matrix} \frac{1}{u} \\ 0 \end{matrix} \right)$. In other words, 
$H_1$ and $H_2$ have a common right divisor 
$H$ of the total order being $1$ and can be written as $BH$ and $AH$, where ${\rm Ord} B=1$
and ${\rm Ord} A=3$, that is,
\begin{eqnarray*}
&&H_1=BH=\left(\begin{matrix}
           0& u (1-\cS)\\1&0
            \end{matrix}\right)\left(\begin{matrix}
            (1-\cS^{-1})u&0\\0&1
            \end{matrix}\right);\\
&&H_2=AH=\left(\begin{matrix}
            u(\cS+1)&u(\cS-1)v\\v&u\cS-\cS^{-1}u
            \end{matrix}\right)\left(\begin{matrix}
            (1-\cS^{-1})u&0\\0&1
            \end{matrix}\right) .
\end{eqnarray*}
Thus $B$ has full kernel and $A$ has trivial kernel. Thus the recursion operator $$R=H_2 H_1^{-1}=AB^{-1}$$ is 
weakly non-local but $BA^{-1}$ is not. Indeed,
\begin{eqnarray*}
&& R=\left(\begin{matrix}  v_1& u\cS+u  \\
  1+\cS^{-1}&v \end{matrix}\right)+\left(\begin{matrix} u(v_1-v) \\u-u_{-1} \end{matrix}\right) (\cS-1)^{-1}
 \left(\begin{matrix} \frac{1}{u} & 0 \end{matrix}\right)
\end{eqnarray*}

\subsection{A Relativistic Toda system }\label{secrtoda}
The relativistic Toda system \cite{ruijsenaars1} is given by
 \begin{eqnarray*}
q_{tt}=q_{t}q_{-1t}\frac{\exp(q_{-1}-q)}{1+\exp(q_{-1}-q)}-q_{t}q_{1t}\frac{
\exp(q-q_{1})}{1+\exp(q-q_{1})} .
\end{eqnarray*}
Introducing the dependent variables as follows \cite{mr90j:58061}:
\begin{eqnarray*}
u=\frac{q_{t} \exp(q-q_1)}{1+\exp(q-q_1)},\qquad
v=\frac{q_{t}}{1+\exp(q-q_1)},
\end{eqnarray*}
then the equation can be written as 
\begin{eqnarray*}\label{rtoda}
\left\{ {\begin{matrix} u_t= u(u_{-1}-u_1+v-v_1)\\
   v_t= v(u_{-1}-u)  \end{matrix} } \right.
 \end{eqnarray*}
It admits two compatible Hamiltonian local structures 
\begin{equation*}
H_1=\left(\begin{matrix}
             0&u(1-\cS)\\(\cS^{-1}-1)u&u\cS-\cS^{-1}u
            \end{matrix}\right), \hspace{2 mm}
H_2=\left(\begin{matrix}
             u(\cS^{-1}-\cS)u&u(1-\cS)v\\v(\cS^{-1}-1)u&0
            \end{matrix}\right)
\end{equation*}
It is clear to see that ${\rm Ord} H_1=2$ and that the kernel of $H_1$ is 
spanned by $\left(\begin{matrix} \frac{1}{u} \\ 0 \end{matrix} \right)$ and 
$\left(\begin{matrix} 1 \\ 1 \end{matrix} \right)$. 
Similarly ${\rm Ord} H_2=2$ and the kernel of $H_2$ is spanned by 
$\left(\begin{matrix} \frac{1}{u} \\ 0 \end{matrix} \right)$ and 
$\left(\begin{matrix} 0 \\ \frac{1}{v} \end{matrix} \right)$. 
In other words, $H_1$ and $H_2$ have a common right divisor ${\rm Ord} H=1$
and can be written as $BH$ and $AH$ as follows:
\begin{eqnarray*}
&&H_1=BH=\left(\begin{matrix}
           0& u (1-\cS)\\1&u\cS-\cS^{-1}u
            \end{matrix}\right)\left(\begin{matrix}
            (1-S^{-1})u&0\\0&1
            \end{matrix}\right);\\
&&H_2=AH=\left(\begin{matrix}
            u(\cS+1)&u(1-\cS)v\\-v&0
            \end{matrix}\right)\left(\begin{matrix}
            (1-\cS^{-1})u&0\\0&1
            \end{matrix}\right) ,
\end{eqnarray*}
where $A$ and $B$ are of the total order $1$ and their kernels are of dimension $1$
Therefore both recursion operator
$R=AB^{-1}$ and its inverse $R^{-1}=BA^{-1}$ are weakly non-local, and
\begin{eqnarray*}
&&R=\left( {\begin{matrix}
u\cS+u+v_{1}+u_{1}+u\cS^{-1} & u\cS+u \\
 v+v\cS^{-1} & v
 \end{matrix} } \right)-\left(\begin{matrix} u_t \\v_t \end{matrix}\right) (\cS-1)^{-1}
 \left(\begin{matrix} \frac{1}{u} & 0 \end{matrix}\right);\\
&&R^{-1}= \left( \begin{matrix}
\frac{1}{v_1} &-\frac{u}{v_1^2} \cS+\frac{u}{v^2}-\frac{2u}{v v_1}\\
-\cS^{-1} \frac{1}{v}-\frac{1}{v_1}& \frac{u}{v_1^2} \cS+\cS^{-1} \frac{u}{v^2} +\frac{2 u}{v 
v_1}+\frac{1}{v}\end{matrix}
\right)+\left(\begin{matrix} \frac{u}{v_{1}} -\frac{u}{v} \\\frac{u_{-1}}{v_{-1}}-\frac{u}{v_1} \end{matrix}\right) 
(\cS-1)^{-1} \left(\begin{matrix} \frac{1}{u} & -\frac{2}{v} \end{matrix}\right).
\end{eqnarray*}

Note that the kernel of $A$ is spanned by $\left(\begin{matrix}0\\ 
\frac{1}{v}\end{matrix} \right)$, the kernel of $A^\dagger$ is spanned by 
$\left(\begin{matrix} \frac{1}{u} \\ -\frac{2}{v} \end{matrix} \right)$
and $B\left(\begin{matrix}0\\ \frac{1}{v}\end{matrix} 
\right)=\left(\begin{matrix} \frac{u}{v}-\frac{u}{v_{1}} \\ 
\frac{u}{v_1}-\frac{u_{-1}}{v_{-1}}\end{matrix} \right)$. This explains the 
nonlocal term 
in the inverse of the recursion operator.

\subsection{The Ablowitz-Ladik Lattice}\label{AL}
Consider the Ablowitz-Ladik Lattice \cite{ablowitz2}
\begin{eqnarray*}
\left\{ {\begin{matrix} u_t=(1-u v) (\alpha u_1-\beta  u_{-1}) \\ v_t= (1- uv) 
(\beta v_1-\alpha v_{-1})  \end{matrix} } \right. 
 \end{eqnarray*}
Its recursion operator \cite{mr93c:58096}
\begin{eqnarray}
&& R=\left(\begin{matrix} (1-uv)~\cS - u_1 v -u v_{-1} &-uu_{1}  \\ vv_{-1}&(1-uv)~\cS^{-1}\end{matrix} 
\right)+\left(\begin{matrix} -u\\v \end{matrix}\right) (\cS-1)^{-1}
\left(\begin{matrix} v_{-1} & u_{1} \end{matrix}\right) 
\nonumber\\&&\qquad -\left(\begin{matrix} (1-uv) u_{1}\\-(1-uv) v_{-1} \end{matrix}\right) (\cS-1)^{-1}
\left(\begin{matrix} \frac{v}{1-uv} & \frac{u}{1-uv} \end{matrix}\right)\label{real}
\end{eqnarray}
can be written as $R=AB^{-1},$
where by letting $w=1-u v$,  $w_i=\cS^i w$ and $p=u_1 v-u v_{-1}$ we have
\begin{eqnarray*}
&&A=\!\left(\!{\begin{matrix} w \cS \left(\frac{uv}{v_1}-u_1\right)
+\frac{u_1vw}{v_1}-\frac{u v_{-1}w_1}{v_1}
&w \cS \frac{u}{p} (1-\cS^{-1})-\frac{u^2 v_{-1}}{p}+\frac{u u_1 v}{p} \cS^{-1} 
\\
w \cS^{-1} (v_{-1}-\frac{v^2}{v_1})-u_1 v v_{-1}+\frac{u v^2 v_{-1}}{v_1}&
-w\cS^{-1}\frac{v}{p}(1-\cS^{-1})+\frac{uv v_{-1}}{p}-\frac{u_1 
v^2}{p}\cS^{-1}\end{matrix} } \!\right)
\end{eqnarray*}
and
\begin{eqnarray*}
 B=\left( {\begin{matrix}
 \frac{w}{v}\cS^{-1} p-u_1 w+\frac{uv w_1}{v_1}& \frac{u}{p}(1-\cS^{-1})\\v_{-1} 
w-\frac{v^2 w_1}{v_1}&-\frac{v}{p}(1-\cS^{-1})\end{matrix} } \right) .
\end{eqnarray*}

The operator $A$ can be factorized as follows:
\begin{eqnarray*}
\left( {\begin{matrix} 1 & 0\\(v_{-2}v-v_{-1}^2)w r \cS^{-1}& r\end{matrix}} 
\right)
       \left(\begin{matrix} \frac{q (u_1vw-u v_{-1}w_1)}{(uv-u_1v_1)}&  
-\frac{w}{(v_{-1}v_1w_1-v^2w)}
\cS-\frac{u_1vw-u v_{-1}w_1}{(uv-u_1v_1)(v_{-2}vw-v_{-1}^2w_{-1})}\\0& 1        
         \end{matrix}
 \right)D,
\end{eqnarray*}
where
\begin{eqnarray*}
&&r=\frac{1}{u v_{-1}w_{-1}-u_{-1}v_{-2}w}; \qquad q= 
\frac{1}{(v_{-1}v_1w_1-v^2w)(v_{-2}vw-v_{-1}^2w_{-1})};\\
&&D=\left(\begin{matrix} 0   & \hspace{-1.3cm} 
\begin{array}{c}v_{-1}w(v_{-1}v_1 
w_1-v^2w)\cS^{-2}-vw(v_{-2}v_1w_1-v_{-1}v w_{-1})\cS^{-1}\\+v_1w(v_{-2}v 
w-v_{-1}^2 w_{-1})\end{array}\\
 \frac{1}{v_1}(u_1v_1-uv)(v_{-2}v w-v_{-1}^2 w_{-1}) & \begin{array}{c}v_{-1}w 
\cS^{-2}+(u 
v_{-2}w-u_1v_{-1}w_{-1})\frac{v}{p}\cS^{-1}\\+(v_{-1}^2w_{-1}-v_{-2}v 
w)\frac{u}{p}\end{array}
         \end{matrix}
 \right).
\end{eqnarray*}
Note that ${\rm Ord} A={\rm Ord} D=2$ and $\ker D=\ker A$, which is spanned by 
$${\bf h}^{(1)}=\left(\begin{matrix} -\frac{1}{u_1 v-u v_{-1}}\\ \frac{v}{v_1}
\end{matrix}\right)\quad \mbox{and}\quad {\bf 
h}^{(2)}=\left(\begin{matrix}\frac{u v_1}{u_1v-u v_{-1}}\\ u_1v_1-1 
\end{matrix}\right).$$
Thus the operator $A$ is a full kernel operator and hence the inverse of $A B^{-1}$ 
is weakly nonlocal.
Note that $\ker D^{\dagger}$ is spanned by 
$${\bf g}^{(1)}=\left(\begin{matrix}vq\\0 
\end{matrix}\right)=\left(\begin{matrix}\frac{v}{(v_{-2}v w-v_{-1}^2 
w_1)(v_{-1}v_1w_1-v^2w)}\\0 \end{matrix}\right)
\quad \mbox{and}\quad {\bf g}^{(2)}=\left(\begin{matrix}-\frac{v_{-1}q}{w}\\0 
\end{matrix}\right)
$$ 
Thus $\ker A^{\dagger}$ 
is spanned by 
$$ \left(\begin{matrix} 1 & \cS w(v_{-1}^2-v_{-2}v)\\ 0 & \frac{1}{r}
       \end{matrix} \right)\left(\begin{matrix} \frac{(uv-u_1v_1)}{q (u_1vw-u 
v_{-1}w_1)}&0\\ \cS^{-1}\frac{w(uv-u_1v_1)(v_{-2}vw-v_{-1}^2w_{-1})}{u_1v w-u 
v_{-1}w_1}+v_{-1}v_1w_1-v^2w
&1\end{matrix}\right) {\bf g}^{(1)}
     =\left(\begin{matrix}
v_1\\u_{-1}\end{matrix}\right)$$
and similarly $\left(\begin{matrix} \frac{v}{1-uv} \\ 
\frac{u}{1-uv}\end{matrix}\right)$.
Moreover, we have
\begin{eqnarray*}
 B({\bf h}^{(1)})=\left(\begin{matrix}u\\ -v\end{matrix}\right); \qquad  B({\bf 
h}^{(2)})=\left(\begin{matrix} (1-uv)u_{-1}\\ -(1-uv) v_{1}
\end{matrix}\right).
\end{eqnarray*}
These give us the nonlocal term appearing in the inverse operator as stated in 
Theorem \ref{thmrw}, and indeed
\begin{eqnarray*}
&& R^{-1} =\left(\begin{matrix} (1-uv)~\cS^{-1}&uu_{-1}  \\ -vv_{1}& (1-uv)~\cS -uv_1-u_{-1}v \end{matrix} \right)
+\left(\begin{matrix} u\\-v \end{matrix}\right) (\cS-1)^{-1}
\left(\begin{matrix} v_1 & u_{-1} \end{matrix}\right)\\
&&\qquad +\left(\begin{matrix} (1-uv) u_{-1}\\-(1-uv) v_1 \end{matrix}\right) (\cS-1)^{-1}
\left(\begin{matrix} \frac{v}{1-uv} & \frac{u}{1-uv} \end{matrix}\right)
\end{eqnarray*} 

\subsection{The Kaup-Newell lattice}\label{seckn}
Consider the Kaup-Newell lattice \cite{tsuchida1}:
\begin{eqnarray*}
\left\{\begin{matrix}
 u_t=a\left( 
\frac{u_1}{1-u_{1}v_{1}}-\frac{u}{1-uv}\right)+b\left(\frac{u}{1+uv_{1}}-\frac{
u_{-1}}{1+u_{-1}v}\right)\\
 v_t=a\left( \frac{v}{1-u v}-\frac{v_{-1}}{1-u_{-1} 
v_{-1}}\right)+b\left(\frac{v_1}{1+uv_{1}}-\frac{v}{1+u_{-1}v}\right)
\end{matrix}  \right. :=a K_1 +b K_{-1}
\end{eqnarray*}
Its recursion operator 
\begin{eqnarray*}
&&R=\!\left(\!\!\!\! {\begin{matrix}
 -\frac{1}{(1-u_1v_1)^2} \cS+\frac{1}{(1-uv)^2}-\frac{2 u_1 v}{(1-u_1v_1)(1-uv)}
&-\frac{u_1^2}{(1-u_1 v_1)^2} \cS+\frac{u^2}{(1-u v)^2}-\frac{2 u u_1}{(1-uv)(1-u_1 v_1)} \\
-\frac{v_{-1}^2}{(1-u_{-1} v_{-1})^2} \cS^{-1}-\frac{v^2}{(1-u v)^2}&
-\frac{1}{(1-u_{-1}v_{-1})^2} \cS^{-1}+\frac{1-2uv}{(1-uv)^2}\end{matrix} } \!\!\!\!\right)
\\&&\qquad
 -2 K_1 (\cS-1)^{-1} 
\left(\begin{matrix} \frac{v}{1-u v} & \frac{u}{1-u v}\end{matrix}\right),
\end{eqnarray*}
can be written as $R=AB^{-1},$
where
\begin{eqnarray*}
&&A=\!\left(\!{\begin{matrix} 
(\cS-1)\frac{1}{v(1-uv)}(1-\cS^{-1})+2 (\cS-1)\frac{u}{1-uv}\cS^{-1}
&(\cS-1)\frac{u}{1-uv} \\
(1-\cS^{-1})\frac{v}{1-uv}(\cS^{-1}+1)&
(1-\cS^{-1})\frac{v}{1-uv}\end{matrix} } \!\right)
\\&& =
\left( {\begin{matrix} \cS-1 & 0\\0&1-\cS^{-1}\end{matrix}} \right)
\!\left(\begin{matrix}
 \frac{u}{1-uv}&0\\0&\frac{v}{1-uv}
\end{matrix}\right)\left(\begin{matrix}
2- \frac{1}{uv}&1\\1&1
\end{matrix}\right)\left(\begin{matrix}
 \cS^{-1}-1&0\\2&1
\end{matrix}\right)
\end{eqnarray*}
and
\begin{eqnarray*}
 B=\left( {\begin{matrix}
 \frac{1-u v}{v}(\cS^{-1}-1)&-u \\0&v\end{matrix} } \right) .
\end{eqnarray*}
The operator $A$ does not have a full kernel since ${\rm Ord} A=3$ and its 
kernel is spanned by $\left(\begin{matrix} 1\\ -2 \end{matrix} \right)$.
Surprisingly, operator $C=A-B$ can be factorised as follows:
\begin{eqnarray*}
 \left(\begin{matrix} 1& \frac{1}{v_1^2}\cS\\0 &1 
       \end{matrix} \right)\left(\begin{matrix} 1&0\\ (uv-\cS^{-1}) 
\frac{v_1^2}{1-u^2 v_1^2} &1\end{matrix}\right)
       \left(\begin{matrix} \frac{1+u v_1}{v_1^2(1-uv)}& 0\\0& 1        
         \end{matrix}
 \right)D,
\end{eqnarray*}
where
\begin{eqnarray*}
 D=\left(\begin{matrix} (v-2v_1+uv v_1)+v(1-u v_1) \cS^{-1}& v(1-u^2 v_1^2)
 \\v(1-\cS^{-1})\frac{1+u v_1}{1-u v_1} & 0        
         \end{matrix}
 \right).
\end{eqnarray*}
Note that ${\rm Ord} C={\rm Ord} D=1$ and $\ker D=\ker C$, which is spanned by 
${\bf h}=\left(\begin{matrix}\frac{1-u 
v_1}{1+uv_1}\\\frac{2v_1}{v(1+u v_1)}-\frac{2}{(1+u_{-1}v)} 
\end{matrix}\right)$.
Thus operator $C$ is a full kernel operator and hence the inverse of $(A-B) 
B^{-1}$ is weakly nonlocal as presented in \cite{kmw13} and it equals to
\begin{eqnarray}
&&\!\left(\!\!\!\! {\begin{matrix} \frac{1}{(1+u_{-1}v)^2} \cS^{-1} -\frac{1+2 u v_1}{(1+u v_1)^2} & 
-\frac{u^2}{(1+u v_1)^2} \cS
+\frac{u_{-1}^2}{(1+u_{-1} v)^2}-\frac{2 u u_{-1}}{(1+u_{-1}v) (1+u v_1)}\\
-\frac{v^2}{(1+u_{-1} v)^2} \cS^{-1}-\frac{v_1^2}{(1+u v_1)^2} &
\frac{1}{(1+u v_1)^2} \cS-\frac{1}{(1+u_{-1} v)^2}-\frac{2 u_{-1} v_1}{(1+u_{-1} v) (1+u v_1)}
\end{matrix} } \!\!\!\!\right)
\nonumber\\&&\qquad
 -2 K_{-1} (\cS-1)^{-1} 
\left(\begin{matrix} \frac{v_1}{1+u v_1} & \frac{u_{-1}}{1+u_{-1} v}\end{matrix}\right). \label{inkn}
\end{eqnarray}
Note that $\ker D^{\dagger}$ is spanned by $\left(\begin{matrix}0\\\frac{1}{v} 
\end{matrix}\right)$ and thus $\ker C^{\dagger}$ 
is spanned by 
$$ \left(\begin{matrix} 1& 0\\-\cS^{-1} \frac{1}{v_1^2} &1 
       \end{matrix} \right)\left(\begin{matrix} 1&-\frac{v_1^2}{1-u^2 v_1^2} 
(uv-\cS)\\ 0 &1\end{matrix}\right)
       \left(\begin{matrix} \frac{v_1^2(1-uv)}{1+u v_1}& 0\\0& 1        
         \end{matrix}
 \right)\left(\begin{matrix}0\\\frac{1}{v} 
\end{matrix}\right)=\left(\begin{matrix}\frac{v_1}{1+u 
v_1}\\\frac{u_{-1}}{1+u_{-1}v} \end{matrix}\right).$$
Moreover, we have
\begin{eqnarray*}
 B({\bf h})=2 \left(\begin{matrix}\frac{u}{1+u 
v_1}-\frac{u_{-1}}{1+u_{-1}v}\\\frac{v_1}{1+u v_1}-\frac{v}{1+u_{-1}v} 
\end{matrix}\right)=2 K_{-1}.
\end{eqnarray*}
These give us the nonlocal term appearing in the inverse operator as shown in 
(\ref{inkn}).

\section{Conclusions}
In this paper we have built  a rigorous algebraic setting for difference and
rational (pseudo--difference) operators with coefficients in a difference field 
$\cF$
  and  study their properties. In particular, we formulate a criteria for a 
rational operator to be weakly nonlocal. We have defined and studied 
preHamiltonian pairs, which is a generalization of the well known bi-Hamiltonian structures in the theory 
of integrable systems.
 By definition a preHamiltonian operator is an
operator whose images form a Lie subalgebra in the Lie algebra of evolutionary 
derivations of $\cF$. The latter can be directly verified and it is a 
relatively simple problem comparing to the verification of the Jacobi identity 
for Hamiltonian operators. We have shown that a recursion Nijenhuis operator 
is a ratio of difference operators from  a preHamiltonian pair.
Thus for a given rational operator, to test 
whether it is Nijenhuis or not can be done
systematically. We applied our theoretical results to integrable 
differential difference equations in two aspects: 
\begin{itemize}
 \item We have constructed
a rational recursion operator $R$ (\ref{readler}) for Adler--Postnikov 
integrable equation (\ref{bv}) and shown that it can be written as the ratio of a preHamiltonian pair and thus it
is Nijenhuis. Moreover, we proved that $R$ produces infinitely many 
commuting local 
symmetries;
\item For a given recursion operator we can answer the question whether the 
inverse  operator is
weakly nonlocal  and, if so, how to bring it to the standard weakly nonlocal 
from (examples in Section 6).
\end{itemize}

In section \ref{seckn} we show that for a weakly nonlocal recursion operator $R$ 
which does not have a weakly nonlocal inverse, may exist a constant 
$\gamma\in\cK$ such that $(R-\gamma {\,\rm  Id})^{-1}$ is weakly nonlocal. In 
other words, the total order of the difference operator $A-\gamma B$ in the 
factorisation $R=AB^{-1}$ may be lower for a certain choice of 
$\gamma$. This observation requires further investigation.

The concept of preHamiltonian operators deserves further attention. These 
operators naturally appear in the description of the invariant evolutions 
of curvature flows 
in homogeneous spaces in both continuous\cite{mr2003f:37137} and discrete 
\cite{Manbeffawang13} setting. In the future, we'll look into the geometric 
implication of such operators. 

In this paper, we mainly explored the relation between PreHamiltonian operators and Nijenhuis operators.
We are going to investigate how preHamiltonian pairs relate to 
biHamiltonian pairs. In our forthcoming paper \cite{CMW18}, we'll present the following main result:
{\it if 
$H$ is a Hamiltonian 
(a priori non-local, i.e. rational) operator, then to find a second Hamiltonian 
$K$ compatible with $H$ is the same as to find a preHamiltonian pair $A$ and $B$ 
such that $AB^{-1}H$ is anti-symmetric}.

We have discovered that Adler--Postnikov 
integrable equation \eqref{bv} is indeed a Hamiltonian system. This equation can be written as 
$u_t=H \delta_u (\ln u)$, where $H$ is the following anti-symmetric rational operator
\begin{equation*} \label{ham}
\begin{split}
H&=u^2u_1u_2^2\cS^2-\cS^{-2}u^2u_1u_2^2+\cS^{-1}uu_1(u+u_1)-uu_1(u+u_1)\cS \\
&+u(1-\cS^{-1})(1-uu_1)({\cS}u-u\cS^{-1})^{-1}(1-uu_1)(\cS-1)u.
\end{split}
\end{equation*}
In \cite{CMW18}, we are going to show that $H$ is a 
Hamiltonian operator for equation \eqref{bv} and explain how it is related to 
the recursion operator (\ref{readler}).

\section*{Appendix A. Basic concepts for a unital associative ring}

Recall definitions of some basic concepts for a unital associative ring $\cR$ 
(see for example \cite{CDSK2013}). 

A {\sl left} (respectively {\sl right}) {\sl 
ideal} of  $\cR$ is an additive subgroup $\cI\subset \cR$ such that $\cR 
\cI=\cI$ (resp. $\cI\cR=\cI$).

A left (resp. right) {\sl principal} ideals generated by $a\in\cR$ is, 
by definition, $\cR a$ (resp. $a\cR$).

A ring is 
called a {\sl principal ideal ring}, if every left and right ideal of the ring 
is 
principal.

Given an element $a\in\cR$, an element $d$ is called a {\em right} (resp. 
{\em left}) {\em divisor} of $a$ if $a=b d$ (resp. $a=db $) for some 
$b\in\cR$. An element $m\in\cR$ is called {\em left} (resp. {\em right}) 
{\em multiple} of $a$ if $m=ba$ (resp. $m=ab$) for some $b\in\cR$.

Given elements $a, b \in \cR$, their {\em right} (resp. {\em left}) 
{\em greatest common divisor} ({\em gcd}) is the generator $d$ of the {\em 
left} (resp. {\em right})
ideal generated by $a$ and $b$: $\cR a + \cR b = \cR d$ (resp. $a \cR + b\cR = 
d\cR$). It is uniquely defined up to multiplication by an
invertible element. It follows that $d$ is a right (resp. left) divisor of 
both $a$ and $b$, and we have the {\em Bezout identity}
$d = ua + vb$ (resp. $d = au + bv$) for some $u, v \in \cR$.

Similarly, the {\em left} (resp. {\em right}) {\em least common multiple} ({\em 
lcm}) of $a$ and $b$ is an element $m \in \cR$, defined, uniquely up to
multiplication by an invertible element, as the generator of the intersection 
of 
the {\em left} (resp. {\em right}) principal ideals
generated by $a$ and by $b$: $\cR m = \cR a \cap \cR b$ (resp. $m\cR = a\cR 
\cap b\cR$).

We say that $a$ and $b$ are {\em right} (resp. {\em left}) {\em coprime} if 
their {\em right} (resp. {\em left}) greatest common divisor is 1 (or
invertible), namely if the left (resp. right ) ideal that they generate is the 
whole ring $ \cR a + \cR b =\cR$ (resp.
$a\cR + b\cR = \cR$). 

An element $a\in\cR\setminus \{0\}$ is called a {\em right zero divisor} if 
there exist $b\in\cR\setminus \{0\}$ (called {\em a left zero divisor}) such 
that $ba=0$. 

A non-zero element $a\in\cR$ is called {\em regular} if it is neither a left 
nor a right zero divisor. A set of regular elements $\cR^\times=\{a\in\cR\,|\, 
a \ \mbox{is regular}\}$ is a multiplicative monoid of $\cR$.

Ring $\cR$ is called a {\sl domain}, if it does not have zero divisors.

A domain $\cR$ is called {\sl right (left) Euclidean}, if there exist a 
function 
\[
 {\rm Ord}:\, \cR\setminus \{0\}\mapsto\bbbz_{\ge 0},
\]
 such that 
\begin{enumerate}
 \item $ {\rm Ord}\, (a)\le {\rm Ord}\, (ab)\ge {\rm Ord}\, (b),\quad \forall 
a,b\in \cR\setminus \{0\}$,
 \item for any $a,b\in\cR,\ b\ne 0$ there exist unique $c_r,q_r\in\cR$
(resp.  $c_l,q_l\in\cR$), such that 
 \[
a=b c_r+q_r=c_l b+q_l
 \]
and $q_r=0$ or ${\rm 
Ord}\, q_r< {\rm Ord}\, b$ (resp.  $q_l=0$ or ${\rm Ord}\, q_l< {\rm Ord}\, 
b$).
\end{enumerate}

Ring $\cR$ satisfies the right (left) Ore property if for any $a\in \cR,\ 
b\in\cR^\times$ there exist $c\in \cR^\times, d\in \cR$ (resp.  $c_1\in 
\cR^\times, d_1\in \cR$) such that $ac=bd$ (resp. $c_1 a=d_1 b$).

\section*{Appendix B. Lemmas used for the proof of Proposition \ref{propr}}
We denote by $\pi$ the projection from the space of Laurent difference polynomials $\mathrm{A}$ to the space of difference polynomials $\mathrm{K}$ defined by letting $\pi(b)$ being the nonsingular part of $b$ for all difference Laurent monomial $b \in \mathrm{A}$. For example,
\begin{equation*}
\pi(u+\frac{uu_1}{u_2})=u.
\end{equation*}
If $L=\sum_{n \leq N}{l_n S^n}$ is a Laurent series with coefficients being 
Laurent difference polynomials, we denote by $\pi(L)$ the series $\sum_{n \leq 
N}{\pi(l_n) S^n}$.

\begin{Lem} \label{lem5}
Let $a,b,c,d \in \mathrm{K}$ and $n \in \bbbz$. Then $\pi[(a+bS^{-1})\Delta^{-1}(c+dS^{-1})]$ is a difference operator. 
\end{Lem}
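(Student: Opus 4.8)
The plan is to make $\Delta^{-1}$ completely explicit as an element of $\cQ^L$ and then combine the $\cK$-linearity of $\pi$ with a locality (support) argument. Recall $\Delta=\cS u-u\cS^{-1}$, which as an operator equals $u_1\cS-u\cS^{-1}$. Factoring out the leading monomial $u_1\cS$ and expanding the resulting geometric series gives, in $\cQ^L$,
\[
\Delta^{-1}=\sum_{k\ge 0}\rho_k\,\cS^{-2k-1},\qquad
\rho_k=\frac{u_{-1}u_{-3}\cdots u_{-(2k-1)}}{u\,u_{-2}\cdots u_{-2k}} .
\]
(One verifies $\Delta\Delta^{-1}=1$ at once: the odd coefficients vanish and the even ones telescope.) The feature I will use is that each $\rho_k$ is a single reduced Laurent monomial whose denominator is supported on $\{0,-2,\dots,-2k\}$ and whose numerator is supported on the disjoint set $\{-1,-3,\dots,-(2k-1)\}$; in particular its support spreads over an interval of length about $2k$.

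Next I would expand $L=(a+b\cS^{-1})\,\Delta^{-1}\,(c+d\cS^{-1})=\sum_n l_n\cS^n$. Since passing a monomial $\cS^m$ to the right of a function only shifts that function, each coefficient $l_n$ is a finite sum of terms of the form $\alpha\,(\cS^{j}\rho_k)\,\beta$, where $\alpha\in\{a,b\}$, $j\in\{0,-1\}$, and $\beta$ is a shift $\cS^{-2k+O(1)}(c)$ or $\cS^{-2k+O(1)}(d)$. Because $a,b,c,d\in\mathrm{K}$ have supports contained in a fixed window $[p,q]$, the factor $\alpha$ is supported in $[p,q]$ and $\beta$ in $[p-2k,q-2k]$, while the index $k$ occurring in $l_n$ satisfies $k\to+\infty$ as $n\to-\infty$. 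Note also that $L$ has top degree $\le-1$, so the only coefficients that can fail to be honest difference-operator coefficients lie in the tail $n\to-\infty$.

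Finally I apply $\pi$. By $\cK$-linearity it suffices to show that $\pi$ annihilates each building block $\alpha\,(\cS^{j}\rho_k)\,\beta$ once $k$ is large. This is the one delicate step: I must rule out that multiplication by the polynomial factors $\alpha$ and $\beta$ cancels the denominator of $\cS^{j}\rho_k$. For large $k$ the monomial $\cS^{j}\rho_k$ carries a denominator factor $u_{-i}$ with $-i$ in the central range (so $-i\approx -k$), which lies neither in $[p,q]$ (as $-i<p$ for $k$ large) nor in $[p-2k,q-2k]$ (as $-i>q-2k$ for $k$ large). Hence no monomial of $\alpha$ or of $\beta$ involves $u_{-i}$, every Laurent monomial occurring in $\alpha\,(\cS^{j}\rho_k)\,\beta$ retains the factor $u_{-i}^{-1}$, and so $\pi$ kills the whole term. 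Consequently $\pi(l_n)=0$ for all sufficiently negative $n$, whence $\pi(L)=\sum_n\pi(l_n)\cS^n$ is a finite sum, i.e.\ a difference operator.

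The only bookkeeping requiring care is checking that the terms coming from the even and odd coefficients of $\Delta^{-1}$ cannot cancel one another before $\pi$ is applied: the denominators of $\rho_k$ and of $\cS^{-1}\rho_{k-1}$ are supported on the even, respectively odd, indices, so they share no singular factor and are handled independently by the argument above. Making the inequalities $-i\notin[p,q]$ and $-i\notin[p-2k,q-2k]$ quantitative for $-i\approx -k$ and $k$ large then completes the proof.
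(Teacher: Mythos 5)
Your proposal is correct and follows essentially the same route as the paper: you expand $\Delta^{-1}=\sum_{k\ge 0}\rho_k\cS^{-2k-1}$ (the paper's $\beta^k$), write out the Laurent tail of the product, and show $\pi$ annihilates all sufficiently deep coefficients because the poles of $\rho_k$ spread over $\{0,-2,\dots,-2k\}$ and so, for large $k$, contain an index that neither the fixed-support factors $a,b$ nor the shifted factors $c,d$ can cancel. The paper leaves this last step as ``it is clear'' (arguing elsewhere by bounding the number of cancellable poles by the polynomial degrees); your explicit support-window argument is just a cleaner justification of the same point, and the remark about even/odd coefficients is harmless but unnecessary since $\pi$ is $\cK$-linear.
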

\begin{proof}
We have 
\begin{equation}
\begin{split}
(a+bS^{-1})\Delta^{-1}(c+dS^{-1})&= \sum_{n \geq 1}{(ac_{-2n-1} \beta^n+bd_{-2n}(\beta^{n-1})_{-1})S^{-2n-1}}\\
&+\sum_{n \geq 0}{(ad_{-2n-1} \beta^n+bc_{-2n-2}(\beta^n)_{-1})S^{-2n-2}}\\
&+ac_{-1}\beta^0 S^{-1}
\end{split}
\end{equation}
It is clear that for $n$ large enough $\pi(ac_{-2n-1} \beta^n)=0$ and similarly $\pi(bd_{-2n}(\beta^{n-1})_{-1})=0$. 
\begin{equation*}
\pi(ac_{-2n-1} \beta^n+bd_{-2n}(\beta^{n-1})_{-1})=0 \text{  for } n \gg 0.
\end{equation*}
Similarly,
\begin{equation*}
\pi(ad_{-2n-1} \beta^n+bc_{-2n-2}(\beta^n)_{-1})=0 \text{  for } n \gg 0.
\end{equation*}
\end{proof}

\begin{Lem}\label{lem6}
Let $a,b,c,d, \in \mathrm{K}$ and $e \in \mathrm{A}$. Then $\pi[(a+bS^{-1})\Delta^{-1}eS^{-1} \Delta^{-1}(c+dS^{-1})]$ 
is a difference operator.
\end{Lem}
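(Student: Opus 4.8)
The plan is to reduce everything to the explicit Laurent expansion of $\Delta^{-1}$ and to the singularity bookkeeping already used in Lemma \ref{lem5}. Factoring out the leading monomial $\cS u=u_1\cS$ of $\Delta=\cS u-u\cS^{-1}$ gives $\Delta=u_1\cS\bigl(1-\tfrac{u_{-1}}{u}\cS^{-2}\bigr)$, hence
\[
\Delta^{-1}=\sum_{m\ge 0}\Bigl(\tfrac{u_{-1}}{u}\cS^{-2}\Bigr)^m\tfrac1u\,\cS^{-1}=\sum_{m\ge 0}\gamma^{(m)}\cS^{-2m-1},\qquad \gamma^{(m)}=\frac{u_{-1}u_{-3}\cdots u_{-(2m-1)}}{u_0u_{-2}\cdots u_{-2m}}.
\]
The crucial feature, which I would isolate as the first step, is that each $\gamma^{(m)}$ is a single Laurent monomial whose denominator contains every variable $u_0,u_{-2},\dots,u_{-2m}$ to the exponent $-1$, while its positive powers sit only at the odd indices $-1,\dots,-(2m-1)$; in particular $\gamma^{(m)}$ carries exponent $-1$ at each of the $m+1$ even indices in $[-2m,0]$.

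Next I substitute this expansion twice. Moving all coefficients to the left of the powers, the term indexed by $(m,n)$ in $(a+b\cS^{-1})\Delta^{-1}e\cS^{-1}\Delta^{-1}(c+d\cS^{-1})$ is, up to the four variants coming from $b$ and $d$, a single Laurent monomial of the shape $a\,\gamma^{(m)}\,e_{-2m-1}\,\gamma^{(n)}_{-2m-2}\,c_{-2m-2n-3}$ times $\cS^{-2m-2n-3}$ (the $b$/$d$ variants only shift the whole monomial by $-1$ and multiply it by $b$ or $d$). Since $a,b,c,d\in\mathrm{K}$ and $e\in\mathrm{A}$ all have bounded support, collecting by total power shows that the coefficient of $\cS^{-K}$ is a finite sum over the finitely many $(m,n)$ with $2(m+n)+3\in\{K,K-1,K-2\}$.

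The heart of the proof is the claim that every such monomial is singular (has some variable to a negative power), hence killed by $\pi$, once $K$ is large. I would argue this by a case split on $\max(m,n)$: when $K$ is large, $m+n$ is large, so at least one of $m,n$ exceeds any prescribed threshold. If $n$ is large, the denominator of the shifted factor $\gamma^{(n)}_{-2m-2}$ occupies the even indices in $[-2m-2n-2,-2m-2]$, and for $n$ large enough there is an even index $I$ in this range lying strictly below the supports of $a$, $\gamma^{(m)}$ and $e_{-2m-1}$ and strictly above the support of $c_{-2m-2n-3}$ (and their $b$/$d$-shifted counterparts); at $u_I$ only $\gamma^{(n)}_{-2m-2}$ contributes, with exponent $-1$, so the monomial is singular. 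If instead $m$ is large, the symmetric argument locates an untouched even index $J$ in the denominator of $\gamma^{(m)}$, again forcing exponent $-1$.

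The main obstacle is precisely this isolation step: the \emph{globally} lowest index of the monomial is reached by the polynomial factor $c_{-2m-2n-3}$ (or $d$), whose exponents are all nonnegative and could mask the $-1$ coming from a $\gamma$; so one cannot simply read off singularity from the bottom of the support, and must instead exhibit a variable at an intermediate index that no bounded-support factor can touch. Once this is done, every monomial contributing to the coefficient of $\cS^{-K}$ is singular for $K$ larger than a bound depending only on the supports of $a,b,c,d,e$, so $\pi$ annihilates all of them; the surviving coefficients occupy only finitely many powers of $\cS$, which is exactly the statement that $\pi[(a+b\cS^{-1})\Delta^{-1}e\cS^{-1}\Delta^{-1}(c+d\cS^{-1})]$ is a difference operator.
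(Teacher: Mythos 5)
Your proof is correct, and it shares the paper's overall strategy (expand $\Delta^{-1}=\sum_{m\ge 0}\beta^{m}\cS^{-2m-1}$ with $\beta^{m}$ equal to your $\gamma^{(m)}$, multiply out, and show that for $K$ large every Laurent monomial contributing to the coefficient of $\cS^{-K}$ is singular and hence annihilated by $\pi$), but the two key technical steps are executed differently. The paper first collapses the double sum by the telescoping identity $\beta^{m}(\beta^{k})_{-2m-2}=\beta^{m+k+1}/u_{-2m-1}$, which turns $\Delta^{-1}e\cS^{-1}\Delta^{-1}$ into the single sum $\sum_{m\ge 0}\beta^{m+1}\bigl(\sum_{n=0}^{m}(e/u)_{-2n-1}\bigr)\cS^{-2m-2}$, and then disposes of each monomial $a\,c_{-2m-3}\,\beta^{m+1}(e/u)_{-2n-1}$ by a degree count: $\beta^{m+1}$ contributes $m+2$ distinct even-indexed variables each with exponent $-1$, while the numerators of $a$, $c$ and $e$ have total degree bounded independently of $m$, so for $m$ large at least one of those variables keeps a negative exponent. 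You instead keep the double sum over $(m,n)$ and, for $\max(m,n)$ large, exhibit a single even index hit only by the denominator of one of the two $\gamma$ factors --- a support-gap isolation argument rather than a degree count. Both are sound; the paper's collapse makes the final counting step very short, whereas your version avoids the telescoping identity entirely at the price of the case split on $\max(m,n)$ and the extra care you rightly flag: the globally lowest index of each monomial belongs to $c_{-2m-2n-3}$ (or $d$), so singularity cannot be read off from the bottom of the support and must come from an intermediate index that no bounded-support factor can reach. Your handling of that point, with thresholds depending only on the supports of $a,b,c,d,e$, is correct.
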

\begin{proof}
Let us expand $L=(a+bS^{-1})\Delta^{-1}eS^{-1} \Delta^{-1}(c+dS^{-1})$ as a Laurent series in $S^{-1}$:
\begin{equation}\label{lem6a}
\begin{split}
L=&(a+bS^{-1})(\sum_{n \geq 0}{\beta^ne_{-2n-1} S^{-2n-1}})(\sum_{k \geq 
0}{(\beta^k)_{-1}S^{-2k-1}})(c_{-1}S^{-1}+d_{-1}S^{-2}) \\
=&(a+bS^{-1})(\sum_{m \geq 0}{\beta^{m+1}(\sum_{n= 0}^m{(\frac{e}{u})_{-2n-1}})S^{-2m-2}})(c_{-1}S^{-1}+d_{-1}S^{-2})\\
=& \sum_{m \geq 0}{ac_{-2m-3}\beta^{m+1}(\sum_{n= 0}^m{(\frac{e}{u})_{-2n-1}})S^{-2m-3}} \\
+& \sum_{m \geq 0}{bd_{-2m-4}(\beta^{m+1})_{-1}(\sum_{n= 0}^m{(\frac{e}{u})_{-2n-2}})S^{-2m-5}} \\
+& \sum_{m \geq 0}{bc_{-2m-4}(\beta^{m+1})_{-1}(\sum_{n= 0}^m{(\frac{e}{u})_{-2n-2}})S^{-2m-4}} \\
+& \sum_{m \geq 0}{ad_{-2m-3}\beta^{m+1}(\sum_{n= 0}^m{(\frac{e}{u})_{-2n-1}})S^{-2m-4}}
\end{split}
\end{equation}
After applying $\pi$ to the coefficients of this Laurent series expansion of $L$, we get a difference operator. Let us 
show it for the first summand in the last line of \eqref{lem6a}, namely that 
\begin{equation*}
 \sum_{m \geq 0}{\pi(ac_{-2m-3}\beta^{m+1}(\sum_{n= 0}^m{(\frac{e}{u})_{-2n-1}}))S^{-2m-3}} 
\end{equation*}
is a difference operator (the same argument applies to the remaining three summands). This follows from the claim that 
for large enough $m$, and for all $0 \leq n \leq m$,
\begin{equation}\label{lem6z}
\pi(ac_{-2m-3}\beta^{m+1}(\frac{e}{u})_{-2n-1})=0.
\end{equation}
Indeed, if $e$ can be written as a sum of Laurent monomials for which the degree of the numerators, as polynomials in 
the $u_i$'s are bounded by $m_e$, and if $m_a$ and $m_c$ denote the degrees of $a$ and $c$ as polynomials in the 
$u_i$'s, then \eqref{lem6z} holds for $m>m_a+m_c+m_e$.
\end{proof}

\begin{Lem}\label{lem3}
Let $f$ be a difference polynomial such that $R$ is recursion for the equation 
$u_t=f$. Then there exists
a difference polynomial $k$ such that $f=u(k_2-k)$.
\end{Lem}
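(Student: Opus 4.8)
The starting point is the recursion operator equation \eqref{reopev}, $R_*[f]=[f_*,R]$, which holds precisely because $R$ is a recursion operator for $u_t=f$. I would substitute the factorised form \eqref{readler1}, writing $R=G\,(\cS^2-1)^{-1}\tfrac1u$ with $G=Q\Delta^{-1}C+P$, and note that $(\cS^2-1)^{-1}\tfrac1u=(u(\cS^2-1))^{-1}$ is the only source of nonlocality sitting on the right of $R$. Since $\cS^2-1$ has constant coefficients, differentiating $R$ in the direction $f$ acts only on $G$ and on the factor $\tfrac1u$, for which $X_f(\tfrac1u)=-\tfrac{f}{u^2}$; after multiplying the resulting identity on the right by $u$ one obtains the clean operator equation
\[
(G_*[f]-f_*G)(\cS^2-1)^{-1}=G\,(\cS^2-1)^{-1}L,\qquad L:=\tfrac fu-\tfrac1u f_*u .
\]
Here $L$ is a difference operator whose coefficients are the $\tfrac{u_i}{u}\tfrac{\partial f}{\partial u_i}$ together with $\tfrac fu$, so that the entire content of the lemma is encoded in the rightmost factor $(\cS^2-1)^{-1}L$.

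Next I would expand both sides as Laurent series in $\cS^{-1}$ inside $\cQ^L$ and apply the projection $\pi$ onto difference polynomials. The left-hand side involves only a single $\Delta^{-1}$ (coming from $G=Q\Delta^{-1}C+P$), whereas the right-hand side carries both this $\Delta^{-1}$ and the new $(\cS^2-1)^{-1}$. Because $Q=u\big((uu_1-1)+(1-uu_{-1})\cS^{-1}\big)$ has exactly the shape $a+b\cS^{-1}$ appearing in Lemma \ref{lem5}, and the doubly-nonlocal contributions of the form $(a+b\cS^{-1})\Delta^{-1}(\cdots)\Delta^{-1}(\cdots)$ are of the type treated in Lemma \ref{lem6}, the images under $\pi$ of all the genuinely rational pieces are honest difference operators. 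Comparing the two sides then isolates the tail $(\cS^2-1)^{-1}L$ and forces, first, that $g:=\tfrac fu$ is a difference polynomial (so $u\mid f$), and second, that $(\cS^2-1)^{-1}g$ is again a difference polynomial $k$; equivalently $g=(\cS^2-1)k=k_2-k$, which is exactly $f=u(k_2-k)$.

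The main obstacle is this second step: keeping track of the Laurent tails produced by the two inverses $\Delta^{-1}$ and $(\cS^2-1)^{-1}$ and showing that, once the local pieces are removed by $\pi$, the surviving nonlocal part vanishes only when $g\in\im(\cS^2-1)$ with a \emph{polynomial} preimage. This is precisely the role of Lemmas \ref{lem5} and \ref{lem6}: they guarantee that the $\Delta^{-1}$-sandwiched terms contribute nothing nonlocal, so that the whole nonlocality of the identity is governed by $(\cS^2-1)^{-1}g$, whose polynomiality is therefore equivalent to the recursion equation holding. Verifying that this reduction is both necessary and sufficient, rather than merely sufficient, is the delicate point, and it is the reason the two auxiliary lemmas on $\pi$ are proved beforehand.
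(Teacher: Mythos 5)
Your derivation of the operator identity $(G_*[f]-f_*G)(\cS^2-1)^{-1}=G(\cS^2-1)^{-1}L$ with $L=\tfrac{f}{u}-\tfrac{1}{u}f_*u$ is correct and is essentially the paper's equation \eqref{lem2a} before the substitution $f=u(g_1-g)$; you have also correctly identified Lemmas \ref{lem5} and \ref{lem6} as the tools for disposing of the $\Delta^{-1}$-type terms after applying $\pi$. However, your accounting of the nonlocalities is off: the left-hand side does not involve ``only a single $\Delta^{-1}$'' --- it still carries the factor $(\cS^2-1)^{-1}$ on the right, and $G_*[f]$ contains the doubly nonlocal term $-Q\Delta^{-1}\Delta_*[f]\Delta^{-1}C$, which is precisely why Lemma \ref{lem6} is needed on both sides.

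The genuine gap is at the decisive step: the claim that the residual nonlocality is ``governed by $(\cS^2-1)^{-1}(f/u)$, whose polynomiality is therefore equivalent to the recursion equation holding'' is the content of the lemma itself and is asserted rather than proved. The paper closes this in three moves, all absent from your proposal. (a) It first obtains $f=u(g_1-g)$ with $g$ a difference polynomial from the fact that $\ln u$ is a conserved density of $u_t=f$; this, not the Laurent expansion, is where one learns that $f/u$ is polynomial and lies in $\im(\cS-1)$. (b) It reformulates ``$g_1-g=k_2-k$ for a polynomial $k$'' as the vanishing of the twisted variational derivative \eqref{twistedvd}, equivalently of the remainder $r'$ of the left division of $g_*u(\cS^2-1)$ by $\cS+1$. (c) It proves $r'=0$ by contradiction: if $r'\neq 0$, then projecting the Laurent expansion forces the identity $\pi\bigl[a(\beta^0c_{-1}+\dots+\beta^Kc_{-2K-1})-b(\beta^0c_{-1}+\dots+\beta^{K-1}c_{-2K+1})_{-1}-p\bigr]=0$, which pins $p$ down to an explicit polynomial that disagrees with the actual $p$ in \eqref{pp}. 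Without (b) and especially the explicit computation (c), there is no mechanism forcing the obstruction to vanish: your argument shows only that \emph{if} the nonlocal tail vanishes then $f=u(k_2-k)$, not that it does vanish.
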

\begin{proof}
Operator $R$ given by (\ref{readler}) is recursion for $u_t=f$ which implies 
that $\ln(u)$ is a conserved density of 
$f$, or in other words that there is a difference polynomial
$g$ such that $f=u(g_1-g)$.
\\
To conclude we need to prove that $g_1-g=k_2-k$ for some difference polynomial 
$k$, which is equivalent to say that 
$g=k_1+k+\rho$ for some constant $\rho$. We claim that this is the same as 
saying that
\begin{equation} \label{twistedvd}
\sum_n{(-1)^n S^{-n}(\frac{\partial g}{\partial u_n})}=0.
\end{equation}
Indeed, it is clear that $\sum_n{(-1)^n S^{-n}\frac{\partial }{\partial 
u_n}(S+1)}=0$ by \eqref{spart} and that any 
constant satisfies \eqref{twistedvd}. Conversely, if a difference polynomial 
$g$ 
of order $(M,N)$ satisfies 
\eqref{twistedvd}, then there exists 
a difference polynomial $k$ and a constant $\rho$ such that $g=k_1+k+\rho$. To 
check this, we proceed by induction on 
the total order of $g$. If it is zero, meaning  that $g$ is a function of $u_N$ 
for a single $N$, then $g$ must be a 
constant.
If not, say if $g$ has order $(M,N)$ with $M<N$, then $\frac{\partial 
g}{\partial u_N}$ does not depend on $u_M$. Consequently, we can write $g$ as a 
sum $h+k$ where $k$ has order $(M',N)$ with $M<M'$ and $h$ has order $(M, N')$ 
with $N'<N$. Since $g$ and $k+k_{-1}$ both satisfy \eqref{twistedvd}, it 
follows 
that $h-k_{-1}$ must satisfy \eqref{twistedvd} as well, i.e. we
reduced the problem to a difference polynomial of lesser total order. 
\\
The difference polynomial \eqref{twistedvd} is the remainder of the division of 
$g_*$ by 
$(S+1)$ on the left. Let us call it $r$:
\begin{equation} \label{defr}
g_*=(S+1)X+r, \hspace{2 mm} r=\sum_n{(-1)^n S^{-n}(\frac{\partial g}{\partial 
u_n})},
\end{equation}
where $X$ is some difference operator.
We want to prove that $r=0$. It is equivalent to prove that the remainder $r'$ 
of 
the 
division of 
$g_* u (S^2-1)$ by $S+1$ on the left is $0$. Indeed $r'=ur-(ur)_{-2}$ and $r$ 
is a difference polynomial, therefore 
$r=0 \iff r'=0$. 
\par
We are going to deduce that $r'=0$ from the fact that
$R$ is recursion for $f=u(g_1-g)$. Note that $f_*=u(S-1)g_*+g_1-g$. Recall 
equation \eqref{readler1} where $R$ was 
expressed as $(Q\Delta^{-1}C+P)(S^2-1)^{-1}\frac{1}{u}$. By definition 
\eqref{reopev} of a recursion operator we have
\begin{equation} \label{lem2a}
\begin{split}
&(Q_*[f]-f_* Q)\Delta^{-1}C-Q\Delta^{-1}\Delta_*[f] 
\Delta^{-1}C\\
&+Q\Delta^{-1}C_*[f] +P(S+1)^{-1}g_* u (S^2-1)\\
&-f_* P+ P_*[f]+Q \Delta^{-1}C (S+1)^{-1} g_* u (S^2-1)=0.
\end{split}
\end{equation}
The idea is to expand \eqref{lem2a} as a Laurent series in $S^{-1}$ and to 
project the coefficients in front of $S^{-N}$ for large $N$ on the space of 
difference polynomials. Let us start by rearranging \eqref{lem2a} using two 
Euclidean divisions
\begin{equation}\label{ed}
C= w_2+w_{-1}+Z(S+1),\qquad
g_* u (S^2-1)=r'+(S+1)Y,
\end{equation}
where $Y$ and $Z$ are two difference operators. Combining \eqref{lem2a} with 
\eqref{ed}, we get:
\begin{equation}\label{lem2d}
\begin{split}
&Q \Delta^{-1}(w_2+w_{-1})S^2(S+1)^{-1}r'+p(S+1)^{-1}r'\\
&=Q\Delta^{-1}\Delta_*[f] 
\Delta^{-1}C-(Q_*[f]-f_* Q)\Delta^{-1}C\\
&-Q\Delta^{-1}(C_*[f]+CY+Zr')-Pr'
+f_* P- P_*[f]
\end{split}
\end{equation}
By Lemma \ref{lem5} and Lemma \ref{lem6}, if $M$ is the RHS 
of \eqref{lem2d}, $\pi(M)$ is a difference operator. Therefore,
\begin{equation} \label{lem2b}
\pi[ Q \Delta^{-1}(w_2+w_{-1})S^2(S+1)^{-1}r'+p(S+1)^{-1}r'] 
\end{equation}
must be a difference operator as well.
Let us write $Q=a+bS^{-1}$ where $a=u(uu_1-1)$ and $b=u(1-uu_{-1})$ and let 
$c=w_2+w_{-1}$. Looking only at even powers of 
$S^{-1}$ in the Laurent series expansion of (\ref{lem2b}) we obtain
\begin{equation}
\pi[(a(\beta^0c_{-1}+...+\beta^Nc_{-2N-1})-b(\beta^0c_{-1}+...+\beta^{N-1}c_{
-2N+1})_{-1}-p){r'}_{-2N}]=0 \text{ for all 
} N \gg 0,
\end{equation}
where the Laurent difference polynomials 
$\beta^n=\frac{u_{-1}...u_{-2n-1}}{u...u_{-2n}}, n \geq 1$, 
$\beta^0=\frac{1}{u}$ satisfy
\begin{equation}\label{invofdelta}
\Delta^{-1}=\sum_{n \geq 0}{\beta^nS^{-2n-1}}.
\end{equation}
It is clear that for all $k> 1$ and for all $N \geq \Ord  \, r' +2$, we have
$$\pi(ac_{2k-1}\beta^k {r'}_{-2N})=\pi(b(c_{2k-1}\beta^k)_{-1}{r'}_{-2N})=0.$$ 
In other words, there 
exists 
$K \geq 0$ such that
\begin{equation}
\pi[(a(\beta^0c_{-1}+...+\beta^Kc_{-2K-1})-b(\beta^0c_{-1}+...+\beta^{K-1}c_{
-2K+1})_{-1}-p){r'}_{-2N}]=0 \text{ for all 
} N \gg 0.
\end{equation}
If $r' \neq 0$, $r'$ is either a constant or the order of ${r'}_{-2N}$ must go 
to $(-\infty,-\infty)$ as $N$ grows. In 
both cases we must have:
\begin{equation}
\pi[a(\beta^0c_{-1}+...+\beta^Kc_{-2K-1})-b(\beta^0c_{-1}+...+\beta^{K-1}c_{
-2K+1})_{-1}-p]=0.
\end{equation}
This quantity can be computed directly, and we obtain
\begin{equation}
\begin{split}
p &=-2+u(u_2+3u_1+2u+u_{-1}+u_{-2})+2u_{-1}u_{-3} \\
&-u(2u_1u_{-1}u_{-3}+2u_1uu_{-1}+uu_1u_2+u_{-2}u_{-1}u+2u_{-4}u_{-2}u)
\end{split},
\end{equation}
which is a contradiction to $p$ given in (\ref{pp}). Thus we have $r'=0$ and 
hence $g=k_1+k+\rho$. By now we have
proved the statement.
\end{proof}

\begin{Lem}\label{lem4}
Let $g \in \mathrm{K}$ be such that $R$ is recursion for 
$f=u(g_2-g)$. Then
\begin{equation*}
Q \Delta^{-1} \left(C(g)_*u(S^2-1)-C(g)(S^2-S)\right)
+\left(Q_*[f]-f_*Q-Q(g_1-g_2)\right)\Delta^{-1}C
\end{equation*}
is a difference operator.
\end{Lem}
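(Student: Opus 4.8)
The plan is to exploit the defining relation \eqref{reopev} of a recursion operator, $R_*[f]=[f_*,R]$, together with the explicit factorisation \eqref{readler1}, $R=(Q\Delta^{-1}C+P)E$ with $E=(\cS^2-1)^{-1}\frac1u$. First I would expand $R_*[f]=[f_*,R]$ by the Leibniz rule, using $(\Delta^{-1})_*[f]=-\Delta^{-1}\Delta_*[f]\Delta^{-1}$. The crucial simplification is the purely local identity $E_*[f]+E f_*=g_*$, which holds precisely because $f=u(g_2-g)$: one computes $f_*=(g_2-g)+u(\cS^2-1)g_*$, hence $\frac1u f_*=\frac{g_2-g}{u}+(\cS^2-1)g_*$, while $E_*[f]=-(\cS^2-1)^{-1}\frac{g_2-g}{u}$, and the two non-local contributions cancel. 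Substituting this and multiplying on the right by $E^{-1}=u(\cS^2-1)$ collapses the recursion identity into
\[
\begin{aligned}
&(Q_*[f]-f_*Q)\Delta^{-1}C-Q\Delta^{-1}\Delta_*[f]\Delta^{-1}C+Q\Delta^{-1}C_*[f]\\
&\qquad+(P_*[f]-f_*P)+Q\Delta^{-1}Cg_*u(\cS^2-1)+Pg_*u(\cS^2-1)=0,
\end{aligned}
\]
whose six summands I denote $T_1,\dots,T_6$ in this order.

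Since $P$, $g_*$ and the Fréchet derivatives of $P$ and $f$ are difference operators, $T_4$ and $T_6$ lie in $\cR$, so the recursion hypothesis forces $T_1+T_2+T_3+T_5\in\cR$. Next I would observe that the operator $\mathcal E$ whose locality is to be proved is, by a purely formal rearrangement, $\mathcal E=(T_1+T_2+T_3+T_5)+W$ with
\[
W:=-(T_2+T_3+T_5)-Q(g_1-g_2)\Delta^{-1}C+Q\Delta^{-1}\big(C(g)_*u(\cS^2-1)-C(g)(\cS^2-\cS)\big).
\]
Because the first bracket is already known to be a difference operator, the problem reduces to showing $W\in\cR$. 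Note that $W$ no longer contains the outer derivatives $Q_*[f]$, $f_*Q$ (they sit entirely in $T_1$, which cancels): it is built solely from $Q,\Delta,C$ and $g$ with $f=u(g_2-g)$, so this step is a self-contained computation, independent of the recursion assumption.

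To treat $W$ I would first reduce its only doubly non-local term $Q\Delta^{-1}\Delta_*[f]\Delta^{-1}C$. Setting $h=g_2-g$ and using $\cS=\frac1{u_1}\Delta+\frac{u}{u_1}\cS^{-1}$ yields the local splitting $\Delta_*[f]=f_1\cS-f\cS^{-1}=h_1\Delta+u(h_1-h)\cS^{-1}$, so that $\Delta^{-1}\Delta_*[f]\Delta^{-1}C=\Delta^{-1}h_1C+\Delta^{-1}u(h_1-h)\cS^{-1}\Delta^{-1}C$; the second summand is now exactly of the shape $(\,\cdot\,)\Delta^{-1}e\cS^{-1}\Delta^{-1}(\,\cdot\,)$ covered by Lemma \ref{lem6}, while every other summand of $W$ is of the singly-iterated shape covered by Lemma \ref{lem5}. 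Writing $Q=a+b\cS^{-1}$ with $a=u(uu_1-1)$, $b=u(1-uu_{-1})$ as in the proof of Lemma \ref{lem3}, expanding $C=w_2\cS^2-w_{-1}\cS$ into shift-monomials, and reducing each product $\Delta^{-1}\cS^{k}$ to the standard form, one brings all constituents of $W$ under Lemmas \ref{lem5} and \ref{lem6}, which guarantee that the projection $\pi$ onto the non-singular part sends each nested-$\Delta^{-1}$ contribution to a genuine difference operator.

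The main obstacle is the final verification that $W$ equals its own projection $\pi(W)$, i.e.\ that no singular, negative-degree tail survives. Concretely, one must track the $\beta^n$-coefficients of $\Delta^{-1}$ (cf.\ \eqref{invofdelta}) and check that the singular parts coming from the reduced doubly non-local term, from $Q\Delta^{-1}C_*[f]$, from $Q\Delta^{-1}Cg_*u(\cS^2-1)$ and $Q\Delta^{-1}C(g)_*u(\cS^2-1)$, are cancelled exactly by the two correction terms $-Q(g_1-g_2)\Delta^{-1}C$ and $-Q\Delta^{-1}C(g)(\cS^2-\cS)$ that were inserted into $\mathcal E$ for this purpose. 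This bookkeeping is of the same flavour as, but heavier than, the argument in Lemma \ref{lem3}; once it is completed one obtains $W=\pi(W)\in\cR$, and therefore $\mathcal E=(T_1+T_2+T_3+T_5)+W\in\cR$, as claimed.
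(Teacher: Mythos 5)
Your setup is sound and coincides with the paper's: the identity $E_*[f]+Ef_*=g_*$, the six-term expansion of the recursion relation (the analogue of \eqref{lem2a} for $f=u(g_2-g)$), the observation that $T_4,T_6\in\cR$ and hence $T_1+T_2+T_3+T_5\in\cR$, and the reduction of the lemma to the unconditional claim $W\in\cR$ are all correct. The gap is that you never prove $W\in\cR$: you explicitly defer ``the main obstacle'', the verification that $W=\pi(W)$, to unexecuted bookkeeping. This is not a routine omission. Lemmas \ref{lem5} and \ref{lem6} only assert that the \emph{projections} $\pi(\cdot)$ of the relevant nested expressions are difference operators --- the expressions themselves are genuinely infinite Laurent series in $\cS^{-1}$ --- so they cannot by themselves yield $W\in\cR$. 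What is missing is precisely the exact cancellation of all singular tails among the summands of $W$, and that cancellation is the entire content of the lemma.

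The paper closes this gap with two exact operator identities that make the Laurent-series machinery unnecessary at this stage. First,
$-\Delta^{-1}\Delta_*[f]\Delta^{-1}=(g_1-g_2)\Delta^{-1}+\Delta^{-1}(g_1-g_2)$
(equivalently $\Delta_*[f]=\Delta(g_2-g_1)+(g_2-g_1)\Delta$), which eliminates the doubly non-local term $T_2$ \emph{exactly}, leaving only singly non-local terms of the shapes $Q(\cdot)\Delta^{-1}C$ and $Q\Delta^{-1}(\cdot)$. Your splitting $\Delta_*[f]=h_1\Delta+u(h_1-h)\cS^{-1}$ is the one-sided half of this identity and leaves a doubly non-local remainder that then forces you into Lemma \ref{lem6} and the tail analysis. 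Second, the congruence \eqref{eq5},
$Cg_*u(S^2-1)-(g_2-g_1)C+C_*[f]\equiv C(g)_*u(S^2-1)-C(g)(S^2-S)$
modulo left multiplication by $\Delta$, which converts the remaining $Q\Delta^{-1}(\cdots)$ term into the target expression plus $Q$ times a difference operator. With these two identities the expression in the lemma equals $T_1+T_2+T_3+T_5$ plus an element of $\cR$, and the proof finishes in a few lines with no projections and no cancellation of infinite tails. To repair your argument you would either have to carry out the deferred bookkeeping in full --- which in effect amounts to rediscovering these two identities --- or adopt them directly. (A side remark: the exact computation shows the outer coefficient should be $-Q(g_2-g_1)$, as in \eqref{eq4}, rather than $-Q(g_1-g_2)$ as printed in the statement; with the printed sign the combination you call $W$ would differ from a difference operator by $2Q(g_2-g_1)\Delta^{-1}C$.)
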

\begin{proof}
We have $\Delta^{-1}_*[f]=(g_1-g_2)\Delta^{-1}+\Delta^{-1}(g_1-g_2)$ and
$f_*=u(S^2-1)g_*+g_2-g$. From \eqref{lem2a}  we deduce that 
\begin{equation} \label{eq4}
Q \Delta^{-1}\left(C g_*u(S^2-1)-(g_2-g_1)C+C_*[f]\right)
+\left(Q_*[f]-f_*Q-Q(g_2-g_1)\right)\Delta^{-1}C
\end{equation}
is a difference operator. It remains to rewrite the first non-local term. We 
have modulo left multiplication by $\Delta$ and we have
\begin{equation*}
\begin{split}
C g_*u(S^2-1)&=C(g)_*u(S^2-1)-(g_2(w_2)_*-g_1(w_{-1})_*)u(S^2-1) \\             
            &=C(g)_*u(S^2-1)+u_1u_3g_2(S^5-S)-uu_{-2}g_1(S^2-S^{-2}) \\
                          & \equiv 
C(g)_*u(S^2-1)+(uu_{-2}g_{-2}-u_1u_3g_2)S-u_1u_3(g_5-g_1)S^2
\end{split}
\end{equation*}
and
\begin{equation*}
C_*[f]=uu_{-2}(g_2-g_{-2})S-u_1u_3(g_5-g_1)S^2 .
\end{equation*}
Therefore
\begin{equation}\label{eq5}
C g_*u(S^2-1)-(g_2-g_1)C+C_*[f] \equiv C(g)_*u(S^2-1)-C(g)(S^2-S).
\end{equation}
We conclude combining \eqref{eq4} to \eqref{eq5}.
\end{proof}

\begin{Lem}\label{lem7}
Let $a,b, c,d,e,f,g,h$ be difference Laurent polynomials such that $a, b, g, h \neq 0$ and
\begin{equation*}
(a+bS^{-1})\Delta^{-1}(c+dS^{-1})+(e+fS^{-1})\Delta^{-1}(g+hS^{-1})
\end{equation*}
is a difference operator. Then there exists a constant $\lambda \in \cK$ such that
\begin{equation*}
\begin{split}
e+fS^{-1}&=\lambda (a+bS^{-1})\\
c+dS^{-1}&=-\lambda(g+hS^{-1})
\end{split}
\end{equation*}
\end{Lem}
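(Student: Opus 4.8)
The plan is to turn the locality hypothesis into an infinite family of identities among Laurent difference polynomials by expanding everything as a Laurent series in $\cS^{-1}$. Substituting the explicit expansion \eqref{invofdelta}, namely $\Delta^{-1}=\sum_{n\ge 0}\beta^n\cS^{-2n-1}$, into $(a+b\cS^{-1})\Delta^{-1}(c+d\cS^{-1})$ and $(e+f\cS^{-1})\Delta^{-1}(g+h\cS^{-1})$ exactly as in the computation of Lemma \ref{lem5}, the even and odd powers of $\cS^{-1}$ decouple. Since the sum is assumed to lie in $\cR$, the coefficient of $\cS^{-N}$ must vanish for all sufficiently large $N$. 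This yields, for all large $n$, an identity from the odd powers $\cS^{-2n-1}$,
\begin{equation*}
(a\,c_{-2n-1}+e\,g_{-2n-1})\,\beta^n+(b\,d_{-2n}+f\,h_{-2n})\,(\beta^{n-1})_{-1}=0,
\end{equation*}
and one from the even powers $\cS^{-2n-2}$,
\begin{equation*}
(a\,d_{-2n-1}+e\,h_{-2n-1})\,\beta^n+(b\,c_{-2n-2}+f\,g_{-2n-2})\,(\beta^n)_{-1}=0.
\end{equation*}

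The second step is to exploit the mismatch of supports between the two kinds of data. The coefficients $a,b,e,f$ depend only on a fixed finite window of variables near $u_0$, whereas for large $n$ the shifts $c_{-2n-1},d_{-2n},g_{-2n-1},h_{-2n}$ are supported in a window near $u_{-2n}$, far to the left, while the monomials $\beta^n$, $(\beta^n)_{-1}$, $(\beta^{n-1})_{-1}$ interpolate between these two regions in a completely explicit way. Dividing each identity by the nonzero monomial $\beta^n$ and separating the near-$0$ variables from the far-left variables forces $(a,b)$ to be proportional to $(e,f)$ and $(c,d)$ to be proportional to $(g,h)$, say $e=\mu a,\ f=\mu b$ and $c=\nu g,\ d=\nu h$ for some $\mu,\nu\in\cF$; this is exactly where the assumptions $a,b,g,h\neq 0$ enter, guaranteeing that both terms are genuinely present and that the ratios are well defined.

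The third step pins down $\mu$ and $\nu$. Feeding the proportionality back into either identity collapses it to $(\mu+\nu)$ times a nonzero monomial equal to zero, forcing $\nu=-\mu$ (so that in fact $L_1+L_2=0$, a useful consistency check). Finally, comparing the identities for $n$ and $n+1$ — where passing from $n$ to $n+1$ amounts to applying $\cS^{-2}$ to the far-left data together with the recursion $\beta^{n+1}=\tfrac{u_{-1}}{u}(\beta^n)_{-2}$ for the mediating monomials — shows that $\mu$ must be invariant under $\cS$, hence a constant $\lambda\in\cK$. Setting $\lambda=\mu=-\nu$ then gives $e+f\cS^{-1}=\lambda(a+b\cS^{-1})$ and $c+d\cS^{-1}=-\lambda(g+h\cS^{-1})$, which is the claim.

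I expect the main obstacle to be the second step. The monomials $\beta^n$ overlap the near-$0$ window at their top end (through $u_0,u_{-1}$) and the far-left window at their bottom end (through $u_{-2n}$), so the separation of the left data $(a,b,e,f)$ from the right data $(c,d,g,h)$ is not immediate; it has to be carried out carefully by tracking the extreme (highest- and lowest-index) variables appearing in each identity and letting $n\to\infty$, or by eliminating the interpolating monomials between the odd and even relations. Once this separation is made rigorous, the remaining steps — proportionality, $\nu=-\mu$, and the shift-invariance of $\mu$ — are routine.
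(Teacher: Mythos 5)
Your overall strategy coincides with the paper's: expand both summands as Laurent series in $S^{-1}$ using $\Delta^{-1}=\sum_{n\ge 0}\beta^n S^{-2n-1}$, note that the coefficient of $S^{-N}$ must vanish for all large $N$, and read off from the odd and even powers exactly the two families of identities you display. Once one knows that each of the four parenthesized quantities vanishes separately for $n\gg 0$, the conclusion follows as you indicate, though more directly than in your steps 3 and 4: from $ac_{-2n-1}+eg_{-2n-1}=0$ one gets $e/a=-(c/g)_{-2n-1}$ for infinitely many $n$, and a Laurent difference polynomial that equals a fixed element after infinitely many distinct shifts must, together with that element, be one and the same constant $\lambda\in\cK$; the other three relations give $f=\lambda b$, $c=-\lambda g$, $d=-\lambda h$ with the same $\lambda$, and the degenerate case $e=f=c=d=0$ is $\lambda=0$. (One must also note, as the paper does, that $a,b,g,h\neq 0$ forces either all of $c,d,e,f$ to vanish or none of them.)

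The genuine gap is precisely the step you flag as the main obstacle: deducing from
\begin{equation*}
(a\,c_{-2n-1}+e\,g_{-2n-1})\,\beta^n+(b\,d_{-2n}+f\,h_{-2n})\,(\beta^{n-1})_{-1}=0
\end{equation*}
that each parenthesis vanishes. Neither of the two strategies you sketch closes it: tracking extreme (highest- and lowest-index) variables cannot separate the two terms, because $\beta^n$ and $(\beta^{n-1})_{-1}$ occupy the same window of indices, from $0$ down to roughly $-2n$, and ``eliminating the interpolating monomials'' is not spelled out. The paper's argument rests on the \emph{parity} of the pole locations, not on their extremes: $\beta^n=\frac{u_{-1}u_{-3}\cdots u_{-2n+1}}{u\,u_{-2}\cdots u_{-2n}}$ has poles only at even-indexed variables $u_0,u_{-2},\dots,u_{-2n}$, whereas $(\beta^{n-1})_{-1}$ has poles only at odd-indexed variables $u_{-1},\dots,u_{-2n+1}$; since the parenthesized factors have a number of poles bounded independently of $n$, for $n$ large each term retains uncancelled poles of its own parity, so the two terms cannot cancel each other and must vanish individually. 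You need to supply this (or an equivalent) argument; without it the proof is incomplete.
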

\begin{proof}
Recall the definition of the Laurent monomials $\beta^n$ for $n \geq 0$ 
\begin{equation}
\Delta^{-1}=\sum_{n \geq 0}{\beta^n S^{-2n-1}}.
\end{equation}
We have 
\begin{equation}
\begin{split}
(a+bS^{-1})\Delta^{-1}(c+dS^{-1})&= \sum_{n \geq 1}{(ac_{-2n-1} \beta^n+bd_{-2n}(\beta^{n-1})_{-1})S^{-2n-1}}\\
&+\sum_{n \geq 0}{(ad_{-2n-1} \beta^n+bc_{-2n-2}(\beta^n)_{-1})S^{-2n-2}}\\
&+ac_{-1}\beta^0 S^{-1}
\end{split}
\end{equation}
Therefore, we must have
for large enough $n$
\begin{equation*}
\begin{split}
\beta^{n-1}(ad_{-2n+1}+eh_{-2n+1})+(\beta^{n-1})_{-1}(bc_{-2n}+fg_{-2n})&=0 \\
\beta^{n}(ac_{-2n-1}+eg_{-2n-1})+(\beta^{n-1})_{-1}(bd_{-2n}+fh_{-2n})&=0.
\end{split}
\end{equation*}
Here $\beta^n$ has poles at $u,u_{-2},...,u_{-2n}$ 
($\beta^n=\frac{u_{-1}...u_{-2n+1}}{u...u_{-2n}}$) and 
$(\beta^{n-1})_{-1}$ has poles 
at $u_{-1},...,u_{-2n+1}$. Moreover, the Laurent polynomials inside the 
parenthesis can only have a bounded number of poles, independently of $n$. Combining these two facts we 
deduce 
that for large $n$ the arguments inside 
the four parenthesis 
must vanish:
\begin{equation}
\begin{split}
0&=ad_{-2n+1}+eh_{-2n+1} \\
0&=bc_{-2n}+fg_{-2n}\\
0&=ac_{-2n-1}+eg_{-2n-1}\\
0&=bd_{-2n}+fh_{-2n}, \hspace{2 cm} n \gg 0.
\end{split}
\end{equation}
Since $a,b,g,h \neq 0$, either $e=f=c=d=0$, in which case we can take $\lambda=0$, or $e,f,c,d, \neq 0$. In the latter case we conclude using the fact that, if two Laurent difference polynomials $x$ and $y$ are such that $x_{2n}=y$ for infinitely many $n \in \bbbz$, then $x$ and $y$ are both equal to the same constant.
\end{proof}

\begin{Lem}\label{lem8}
Let $d$ be a difference polynomial. Then $d$ is in the image of $\Delta$ if and 
only if
\begin{equation}
   \d_* u (1+S^{-1})-d=\Delta P,
\end{equation}
where $P$ is a difference operator. In this case, we have
\begin{equation}
d=\Delta(-S(\sum_{n}{\frac{\alpha_{2n}}{u}(\frac{\partial d}{\partial 
u_{2n}})_{-2n}})).
\end{equation}
Here for all $n \in \bbbz$,  $\alpha_{2n}$ (resp. $\alpha_{2n+1}$) is the 
unique 
difference Laurent polynomial such that
$S^{2n}u-\alpha_{2n}$ (resp. $S^{2n+1}u-\alpha_{2n+1}S^{-1}$) is divisible on 
the left by $\Delta$. Moreover,
\begin{equation*}
\sum_{n}{\frac{\alpha_{2n}}{u}(\frac{\partial d}{\partial u_{2n}})_{-2n}}
\end{equation*}
is a difference polynomial.
\end{Lem}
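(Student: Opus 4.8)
The plan is to read the statement as an exactness-plus-homotopy result for the skew-adjoint operator $\Delta=\cS u-u\cS^{-1}$ (note $\Delta^\dagger=-\Delta$): the displayed operator identity is the closedness condition characterizing $\im\Delta$, and the final expression is an explicit homotopy producing the potential. Throughout I would work in $\cQ$ with the formal inverse $\Delta^{-1}=\sum_{n\ge 0}\beta^n\cS^{-2n-1}$ from \eqref{invofdelta} and the projection $\pi$ onto the difference-polynomial part, exactly as in Lemmas \ref{lem5} and \ref{lem6}; a difference operator $T$ is a left multiple $\Delta P$ with $P\in\cR$ precisely when $\Delta^{-1}T\in\cR$, i.e. when its nonlocal tail vanishes.

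The forward implication is a clean computation. If $d=\Delta(g)=u_1g_1-ug_{-1}$ with $g\in\cF$, then differentiating gives the operator identity
\begin{equation*}
d_*=\Delta g_*+(g_1\cS-g_{-1}).
\end{equation*}
Right-multiplying by $u(1+\cS^{-1})$ (which preserves the right ideal $\Delta\cR$) and subtracting the multiplication operator $d$, the non-$\Delta$-divisible part $(g_1\cS-g_{-1})u(1+\cS^{-1})-d$ collapses to the single left multiple $\Delta g$, since its zeroth-order terms cancel against $d$. Hence
\begin{equation*}
d_* u(1+\cS^{-1})-d=\Delta\big(g_* u(1+\cS^{-1})+g\big),
\end{equation*}
so the criterion holds with $P=g_*u(1+\cS^{-1})+g\in\cR$; this step is purely formal and uses nothing about $d$ beyond $d=\Delta g$.

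For the converse I would reconstruct the potential from $d$. By definition the $\alpha_{2n}$ are the remainders of the monomials $\cS^{2n}u=u_{2n}\cS^{2n}$ under left division by $\Delta$; expanding $\Delta^{-1}(u_{2n}\cS^{2n}-\alpha_{2n})$ as a series in $\cS^{-1}$ and demanding that its tail vanish pins them down uniquely (one finds $\alpha_{2n}=u\,\frac{u\,u_{-2}\cdots u_{-2n+2}}{u_{-1}u_{-3}\cdots u_{-2n+1}}$, and similarly for $\alpha_{2n+1}$). I would then set $g=-\cS\big(\sum_n\frac{\alpha_{2n}}{u}(\partial_{u_{2n}}d)_{-2n}\big)$, a finite sum because $d$ depends on finitely many variables, and prove that, under the closedness hypothesis, $d_*$ reduces modulo $\Delta\cR$ to exactly $g_1\cS-g_{-1}$, i.e. $d_*=\Delta g_*+g_1\cS-g_{-1}=(\Delta g)_*$. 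This forces $(d-\Delta g)_*=0$, so $d-\Delta g$ is a constant; and since every nonzero element of $\Delta\cR$ has total order $\ge 2$ whereas a nonzero constant has total order $0$, feeding $d=\Delta g+c$ back into the criterion shows $c\in\Delta\cR$ and hence $c=0$, giving $d=\Delta(g)$ with the asserted $g$.

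The main obstacle is the reduction step: showing that the closedness condition forces $d_*\equiv g_1\cS-g_{-1}\pmod{\Delta\cR}$ with $g$ in the stated closed form, and simultaneously that $\sum_n\frac{\alpha_{2n}}{u}(\partial_{u_{2n}}d)_{-2n}$ is a genuine difference polynomial rather than merely a Laurent one. The difficulty is that reduction modulo the right ideal $\Delta\cR$ is not compatible with the left multiplications by the coefficients $\partial_{u_i}d$ occurring in $d_*$, so one cannot naively replace each $\cS^{i}u$ by its remainder; instead one must expand $\Delta^{-1}\big(d_*u(1+\cS^{-1})-d\big)$ as a Laurent series, apply $\pi$ as in Lemmas \ref{lem5}--\ref{lem6}, and use the hypothesis to see that the infinitely many would-be poles telescope, exactly as the denominators $\alpha_{2n}/u$ cancel in the small cases. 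Organizing this telescoping by induction on the total order of $d$, in the spirit of the reconstruction carried out in Lemma \ref{lem3}, is where the precise form of the $\alpha_{2n}$ and the closedness hypothesis are both essential.
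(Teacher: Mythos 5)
Your forward implication is correct and coincides with the paper's: from $d=\Delta(g)$ one gets $d_*=\Delta g_*+g_1S-g_{-1}$ and hence $d_*u(1+S^{-1})-d=\Delta\bigl(g_*u(1+S^{-1})+g\bigr)$. The converse, however, contains a genuine gap. You propose to show that $d_*$ reduces modulo the right ideal $\Delta\cR$ to $g_1S-g_{-1}$ and then write ``i.e.\ $d_*=\Delta g_*+g_1S-g_{-1}=(\Delta g)_*$''. That ``i.e.'' is not valid: congruence modulo $\Delta\cR$ determines the remainder of a left Euclidean division but says nothing about the quotient, so you only obtain $d_*-(g_1S-g_{-1})=\Delta Q$ for \emph{some} $Q\in\cR$, not $Q=g_*$. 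Consequently you cannot conclude $(d-\Delta g)_*=0$; at best you get $(d-\Delta g)_*\in\Delta\cR$, and deducing from this that $d-\Delta g$ is constant is itself a nontrivial claim that you do not prove. The detour is also unnecessary: the paper reduces $d_*u(1+S^{-1})$ to its normal form $X+YS^{-1}$ modulo $\Delta\cR$ (equations \eqref{eq2} and \eqref{eq3}), equates it with the multiplication operator $d$, and the two resulting scalar identities directly yield $d=\Delta(d')$ with $d'$ in the stated closed form --- no assertion about $(d-\Delta g)_*$ is ever needed.

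Beyond this, the two steps you label ``the main obstacle'' are precisely the substance of the proof and are left unresolved. You correctly observe that reduction modulo the right ideal $\Delta\cR$ is incompatible with left multiplication by the coefficients of $d_*$; the paper's fix is to write $d_*=\sum_nS^n(\partial d/\partial u_n)_{-n}$ with the coefficients to the right of the shifts, and then to reduce using $S^{2n}u\equiv\alpha_{2n}$, $S^{2n+1}u\equiv\alpha_{2n+1}S^{-1}$ together with the recursion $\alpha_{2n+2}=\frac{u}{u_{-1}}\alpha_{2n+1}=\frac{u^2}{u_{-1}u_{-2}}(\alpha_{2n})_{-2}$, which carries the computation out in closed form; you flag the difficulty but do not perform this reduction. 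Finally, the polynomiality of the potential requires neither telescoping of infinitely many poles nor induction on the total order of $d$: once $d=u_1d'_1-ud'_{-1}$ with $d$ a difference polynomial and $d'$ Laurent, a pole of $d'$ at its highest (respectively lowest) variable would produce an uncancelled pole of $d$, forcing the highest pole index to be $\le 0$ and the lowest to be $>0$, hence no poles at all. As it stands the proposal establishes only the easy direction of the lemma.
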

\begin{proof}
Suppose that $d=u_1d'_1-ud'_{-1}$ for a difference Laurent polynomial $d'$. 
Then 
the Fr{\'e}chet derivative of $d$  
expands as:
\begin{equation*}
\d_*=\Delta \d'_*+ d'_1S-d'_{-1}.
\end{equation*}
Hence (we use to $\equiv$ to denote modulo left multiplication by $\Delta$) we 
get
\begin{equation*}
\d_*u(1+S^{-1}) \equiv (Sd'-d'_{-1})u(1+S^{-1}) \equiv u_1d'_1-ud'_{-1} \equiv d
\end{equation*}
Conversely assume that 
\begin{equation} \label{eq1}
\d_* u (1+S^{-1}) \equiv d.
\end{equation}
Recall that the $\alpha_n$'s are defined so that $S^{2n}u \equiv \alpha_{2n}$ 
and $S^{2n+1}u \equiv \alpha_{2n+1} S^{-1}$ for all $n \in \bbbz$. The 
following 
identity can be easily checked by induction
\begin{equation} \label{alphas}
\alpha_{2n+2}=\frac{u}{u_{-1}} 
\alpha_{2n+1}=\frac{u^2}{u_{-1}u_{-2}}(\alpha_{2n})_{-2}, \hspace{2 mm} \forall 
n \in \bbbz.
\end{equation} 
Let us rewrite the LHS of \eqref{eq1}:
\begin{equation} \label{eq2}
\begin{split}
 \sum_{n}{S^n(\frac{\partial d}{\partial u_n})_{-n}u(1+S^{-1})}&=\sum_{n}{S^n 
u((\frac{\partial d}{\partial u_n})_{-n}+\frac{u_1}{u}(\frac{\partial 
d}{\partial u_{n+1}})_{-n})}\\
&\equiv \sum_{n}{\alpha_{2n}((\frac{\partial d}{\partial u_{2n}})_{-2n}+ 
\frac{u_1}{u} (\frac{\partial d}{\partial u_{2n+1}})_{-2n})}\\
&+\sum_{n}{\alpha_{2n+1}S^{-1}((\frac{\partial d}{\partial u_{2n+1}})_{-2n-1}+ 
\frac{u_1}{u} (\frac{\partial d}{\partial u_{2n+2}})_{-2n-1})}
\end{split}
\end{equation}
Combining \eqref{eq1}, \eqref{alphas} and \eqref{eq2} we obtain
\begin{equation} \label{eq3}
\begin{split}
d&= \sum_{n}{\alpha_{2n}(\frac{\partial d}{\partial 
u_{2n}})_{-2n}}+\frac{u_1}{u}\sum_{n}{\alpha_{2n}(\frac{\partial 
d}{\partial u_{2n+1}})_{-2n}}, \\
0&=\sum_{n}{\frac{u}{u_{-2}}(\alpha_{2n})_{-2}(\frac{\partial d}{\partial 
u_{2n+1}})_{-2n-2}}+\sum_{n}{\alpha_{2n+2}(\frac{\partial d}{\partial 
u_{2n+2}})_{-2n-2}}.
\end{split}
\end{equation} 
from which it follows that 
\begin{equation*}
\begin{split}
d&=\Delta(-\sum_{n}{\frac{(\alpha_{2n})_{-1}}{u_{-1}}(\frac{\partial 
d}{\partial 
u_{2n+1}})_{-2n-1}})\\
& =\Delta(-S(\sum_{n}{\frac{\alpha_{2n}}{u}(\frac{\partial d}{\partial 
u_{2n}})_{-2n}})).
\end{split}
\end{equation*}
We proved that there exists a Laurent difference polynomial $d'$ such that 
$d=u_1d'_1-ud'_{-1}$. It implies that
$d'$ cannot have poles (since its highest pole should be lesser or equal than 
$0$ and its lowest pole should be greater 
than $0$), therefore that it is a difference polynomial.
\end{proof}

\begin{Lem} \label{lem9}
Let $A$ and $B$ be two nonzero left coprime difference operators with coefficients in $\cF$. Suppose that $A(x)=B(y)$ for some $x,y \in \cF$. Let $M=AC=BD$ be there right least common multiple. Then, there exists $z \in \cF$ such that
$x=C(z)$ and $y=D(z)$. In particular $Im A \cap Im B =Im M$.
\end{Lem}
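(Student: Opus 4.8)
The plan is to realise the lemma as a statement about the invertibility of a single $2\times 2$ matrix difference operator and to reduce the construction of $z$ to a one-line computation in $\cM_2(\cR)$ acting on $\cF^2$. Since $A$ and $B$ are left coprime, there exist $U,V\in\cR$ with $AU+BV=1$, and by definition of the right least common multiple the cofactors satisfy $AC=BD=M$. First I would check that $C$ and $D$ are right coprime: if $d$ is a common right divisor, writing $C=C'd$, $D=D'd$ and cancelling the regular operator $d$ gives $AC'=BD'=:M'\in A\cR\cap B\cR=M\cR$; then $M'=Mt$ and $M=M'd$ force $td=1$, so $d$ is a unit. Right coprimeness then yields $E,F\in\cR$ with $EC-FD=1$.

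With these operators in hand I would form
$$G=\begin{pmatrix}A&B\\ E&F\end{pmatrix},\qquad W=\begin{pmatrix}U&C\\ V&-D\end{pmatrix},$$
and compute $GW=\begin{pmatrix}1&0\\ EU+FV&1\end{pmatrix}=:T$, a unipotent lower-triangular matrix. Right-multiplying $W$ by $T^{-1}$ exhibits a right inverse $WT^{-1}$ of $G$. The hard part is to promote this to a genuine two-sided inverse: here I would use that $\cM_2(\cR)$ embeds in $\cM_2(\cQ)$, where $\cQ$ is a skew field, so that a square matrix with a right inverse is automatically invertible. Consequently $G\in GL_2(\cR)$ and, since right multiplication by the unipotent $T^{-1}$ fixes the second column of $W$, the matrix $G^{-1}=WT^{-1}$ has second column exactly $\binom{C}{-D}$. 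Reading off the entries of $G^{-1}G=I$ then produces operator identities $U'A+CE=1$, $U'B+CF=0$, $V'A-DE=0$, $V'B-DF=1$, where $\binom{U'}{V'}$ is the (irrelevant, unspecified) first column of $G^{-1}$.

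Finally I would feed the hypothesis into this machinery. Acting with $G$ on the vector $\binom{x}{-y}\in\cF^2$, the top entry is $A(x)-B(y)=0$, so $G\binom{x}{-y}=\binom{0}{z}$ with $z:=E(x)-F(y)\in\cF$. Applying $G^{-1}$ gives $\binom{x}{-y}=G^{-1}\binom{0}{z}=\binom{C(z)}{-D(z)}$, whence $x=C(z)$ and $y=D(z)$; concretely $C(z)=(1-U'A)(x)+U'B(y)=x-U'(A(x)-B(y))=x$, and the analogous computation with the bottom row gives $D(z)=y$. The equality $\im A\cap\im B=\im M$ then follows at once: $\im M\subseteq\im A\cap\im B$ is immediate from $M=AC=BD$, while any $w=A(x)=B(y)$ equals $AC(z)=M(z)\in\im M$. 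The only genuinely delicate point is the unimodular completion together with the passage from a one-sided to a two-sided inverse; once that is secured, everything else is the displayed bookkeeping.
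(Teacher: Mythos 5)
Your proof is correct, and it reaches the same four operator identities that drive the paper's argument --- in your notation $U'A+CE=1$, $U'B+CF=0$, $V'A-DE=0$, $V'B-DF=1$, which after the substitution $E\leftrightarrow U$, $-F\leftrightarrow V$ are exactly the relations $1-CU=QA$, $CV=QB$, $DU=PA$, $1-DV=PB$ of the paper, and the element $z=E(x)-F(y)$ is the paper's $z=U(x)+V(y)$. What differs is how those identities are obtained. The paper starts from the Bezout identity for the right coprime pair $C,D$, rewrites $A^{-1}B$ and $B^{-1}A$ as left fractions in $\cQ$, and invokes the left coprimality of $A,B$ (uniqueness of minimal fractional decompositions) to extract $P$ and $Q$. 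You instead complete $\binom{A\ B}{E\ F}$ to an element of $GL_2(\cR)$ using \emph{both} Bezout identities, pass through $\cM_2(\cQ)$ to upgrade the right inverse $WT^{-1}$ to a two-sided one, and read the identities off $G^{-1}G=I$; this is a clean unimodular-completion argument that avoids any manipulation of fractions, at the price of the (standard, and available here via the Dieudonn\'e determinant) fact that a one-sided inverse of a square matrix over a skew field is two-sided. A genuine bonus of your write-up is that you actually prove $C$ and $D$ are right coprime by cancelling a common right divisor against $M$, whereas the paper asserts this without proof.
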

\begin{proof}
By definition of $M$, $C$ and $D$ are right coprime. Hence we can consider a Bezout identity
\begin{equation} \label{bezcd}
UC+VD=1
\end{equation}
for two difference operators $U$ and $V$. After replacing $U$ by $U+\lambda D$ and $V$ by $V-\lambda C$ for $\lambda \in \cK$, \eqref{bezcd} still holds. Hence we can assume that both $U$ and $V$ are nonzero. We have in $\cQ$
\begin{equation}\label{bez1}
A^{-1}B=(DU)^{-1}(1-DV)
\end{equation}
and similarly
\begin{equation}\label{bez2}
B^{-1}A=(CV)^{-1}(1-CU)^{-1}
\end{equation}
Since by assumption $A$ and $B$ are left coprime there exist two difference operators $P$ and $Q$ such that
\begin{equation}\label{bez3}
\begin{split}
&1-DV=PB, \hspace{2 mm} DU=PA\\
&1-CU=QA, \hspace{2 mm} CV=QB.
\end{split}
\end{equation}
Using the assumption $A(x)=B(y)$ and the first line of $\eqref{bez3}$ we get 
\begin{equation} \label{bez4}
y=(PB+DV)(y)=PA(x)+DV(y)=D(U(x)+V(y)),
\end{equation}
and similarly using the second line of \eqref{bez3} we get
\begin{equation} \label{bez5}
x=(CU+QA)(x)=CU(x)+QB(x)=C(U(x)+V(y)).
\end{equation}
Hence, the statement holds with $z=U(x)+V(y)$.
\end{proof}

\section*{Acknowledgements}
The paper is supported by AVM's EPSRC grant EP/P012655/1 and JPW's EPSRC grant 
EP/P012698/1. Both authors gratefully acknowledge the financial support.
JPW and SC were partially supported by Research in 
Pairs grant no. 41670 from the London Mathematical Society; SC also thanks the 
University 
of Kent for the hospitality received during his visit in July 2017. SC was supported by a Junior 
Fellow award from the Simons Foundation.

\end{document}